\documentclass[a4paper, conference, 10pt]{IEEEtran}
\usepackage[left=1.5cm,right=1.5cm,top=1.8cm,bottom=2.5cm]{geometry}
\IEEEoverridecommandlockouts
% The preceding line is only needed to identify funding in the first footnote. If that is unneeded, please comment it out.
%Template version as of 6/27/2024
\usepackage{amsmath}
\usepackage{array}
\usepackage{cite}
\usepackage{amsmath,amssymb,amsfonts}
\usepackage{algorithmic}
\usepackage{graphicx}
\usepackage{booktabs}  % for nicer table rules (optional)

\usepackage[colorlinks=true,urlcolor=blue,linkcolor=blue,citecolor=blue]{hyperref}
\usepackage{amsthm} 
\usepackage{graphicx}
\usepackage{algorithm}
\usepackage{algorithmic}
\newtheorem{theorem}{Theorem}
\newtheorem{corollary}[theorem]{Corollary}
\newtheorem{proposition}[theorem]{Proposition}
\newtheorem{lemma}[theorem]{Lemma}
\newtheorem{remark}[theorem]{Remark}

\usepackage{graphicx} 
\usepackage{subcaption}
\usepackage{soul}
\usepackage[usenames,dvipsnames]{xcolor}
\usepackage{svg}
\usepackage{cite}
\usepackage{enumitem}
\setlist[enumerate]{topsep=0pt,itemsep=-1ex,partopsep=1ex,parsep=1ex,leftmargin=4ex}
\setlist[itemize]{topsep=0pt,itemsep=-1ex,partopsep=1ex,parsep=1ex,leftmargin=4ex}
\usepackage{caption}

\captionsetup{font=small} % Adjust subfigure captions to be small

\begin{document}

\title{Optimizing Age of Information in Networks with Large and Small Updates
}

\author{Zhuoyi Zhao, Vishrant Tripathi, and Igor Kadota% <-this % stops a space
\IEEEcompsocitemizethanks{\IEEEcompsocthanksitem Zhuoyi Zhao and Igor Kadota are with the Department of Electrical and Computer Engineering, Northwestern University, USA. E-mail: zhuoyizhao2025@u.northwestern.edu and kadota@northwestern.edu. 

Vishrant Tripathi is with the Department of Electrical and Computer Engineering, Purdue University, USA. E-mail: tripathv@purdue.edu.
}% <-this % stops an unwanted space} 
}

\maketitle

\begin{abstract}
Modern sensing and monitoring applications typically consist of sources transmitting updates of different sizes, ranging from a few bytes (position, temperature, etc.) to multiple megabytes (images, video frames, LIDAR point scans, etc.). Existing approaches to wireless scheduling for information freshness typically ignore this mix of large and small updates, leading to suboptimal performance. In this paper, we consider a single-hop wireless broadcast network with sources transmitting updates of different sizes to a base station over unreliable links. Some sources send large updates spanning many time slots while others send small updates spanning only a few time slots. Due to medium access constraints, only one source can transmit to the base station at any given time, thus requiring careful design of scheduling policies that takes the sizes of updates into account. First, we derive a lower bound on the achievable Age of Information (AoI) by any transmission scheduling policy. Second, we develop optimal randomized policies that consider both switching and no-switching during the transmission of large updates. Third, we introduce a novel Lyapunov function and associated analysis to propose an AoI-based Max-Weight policy that has provable constant factor optimality guarantees. Finally, we evaluate and compare the performance of our proposed scheduling policies through simulations, which show that our Max-Weight policy achieves near-optimal AoI performance.
\end{abstract}

% \begin{IEEEkeywords}
% Age of Information, Scheduling, Wireless Networks, Optimization 
% \end{IEEEkeywords}

\section{Introduction}
% Three important service categories for 5G and next-generation networks~\cite{} are: 
% Ultra-Reliable and Low Latency Communications (URLLC), which 
% Enhanced Mobile Broadband (eMBB), and 
% Massive Machine-Type Communications (mMTC).
% Each service category focuses on different performance metrics. URLLC focuses on latency and reliability. eMBB focuses on data-rates. mMTC focuses on scalability. 
% Interestingly, many use-cases

The Age of Information (AoI) metric has received significant attention in the literature~\cite{kaul2011minimizing,AoI_V,AoIUAV1,9637803,WiSwarm,yu2022age,beytur2020towards,abd2019role} due to its relevance for emerging time-sensitive applications such as connected autonomous vehicles~\cite{kaul2011minimizing,AoI_V}, cooperative UAV swarms~\cite{AoIUAV1,9637803,WiSwarm}, and the Internet-of-Things~\cite{yu2022age,beytur2020towards,abd2019role}. AoI captures the freshness of information from the destination's perspective by measuring the time elapsed since the generation of the most recent update. In many such applications, the content being transmitted is multimodal, including a mix of small information updates (such as position, temperature, pressure, etc.) and large updates (such as video frames, LIDAR point scans, images, etc.). Given the slotted nature of modern communication networks (e.g., OFDM in WiFi~6 and in 5G), the transmission of a single update may require multiple time slots, where each time slot carries an individual data packet. %data streams often consist of updates that span multiple time slots and require multiple packets for transmission. 

% Consider a system in which updates are timestamped upon generation, with a higher timestamp indicating fresher information. Let \(\tau^D(t)\) denote the timestamp of the freshest update received by the destination at time \(t\). The AoI is defined as $h(t) = t - \tau^D(t)$.
% An update is deemed successfully delivered only after all of its packets have reached the BS. Until the destination receives the final packet of an update, \(h(t)\) increases linearly, indicating that the information is becoming stale. When a fresher update is received, the timestamp \(\tau^D(t)\) is updated, thereby reducing the AoI \(h(t)\).

\begin{figure}
    %\vspace{-1em} 
    \centering
    \includegraphics[width=0.5\linewidth]{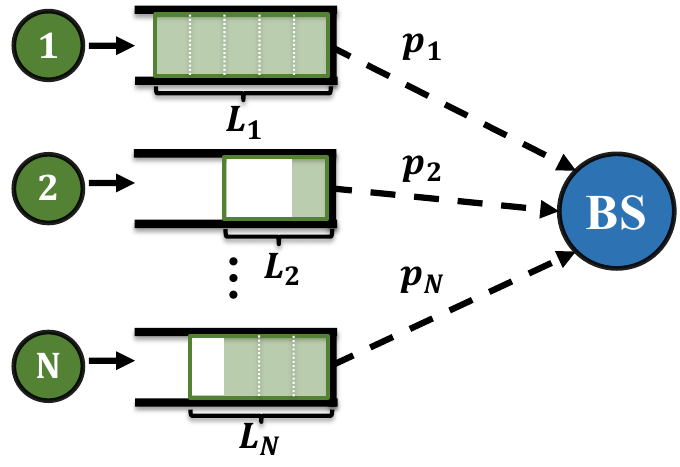}
    \vspace{-0.5em} 
    \caption{Network with $N$ sources transmitting information updates to a base station (BS). Sources generate updates over time and keep only the freshest update. Updates from source $i\in\{1,\ldots,N\}$ are composed of $L_i$ data packets. The BS selects one source at every time slot~$t$ to transmit a single packet via its unreliable wireless link.}
    \label{fig: system model}
    \vspace{-1em} 
\end{figure}

In this paper, we consider a network with multiple sources transmitting time-sensitive updates to a base station (BS), as illustrated in Fig.~\ref{fig: system model}. 
We assume that information updates generated by source $i\in\{1,2,\ldots,N\}$ are composed of $L_i$ data packets.  %Updates generated by different sources consists of different sizes packets. 
Further, we assume that, in each time slot, the BS can schedule one source to transmit a single data packet, and that these transmissions are unreliable. 
%Due to medium access constraints and limited bandwidth, 
%only one source can be scheduled to transmit over an unreliable wireless channel at each decision time~\(t\). 
%It follows that the BS may need to schedule the same source several times before receiving a fresher information update. 
Our goal is to develop transmission scheduling policies that attempt to optimize information freshness in the network. 
%A main challenge is the combination of large updates and unreliable channels. Notice that the BS may need to schedule the same source several times before receiving a ``reward'', namely a fresher information update. 

% \vspace{1em} 
\begin{figure}[t]
\centering
\vspace{-0.5em} 
\begin{minipage}{\linewidth}
    \centering
    \includegraphics[width=0.9\linewidth]{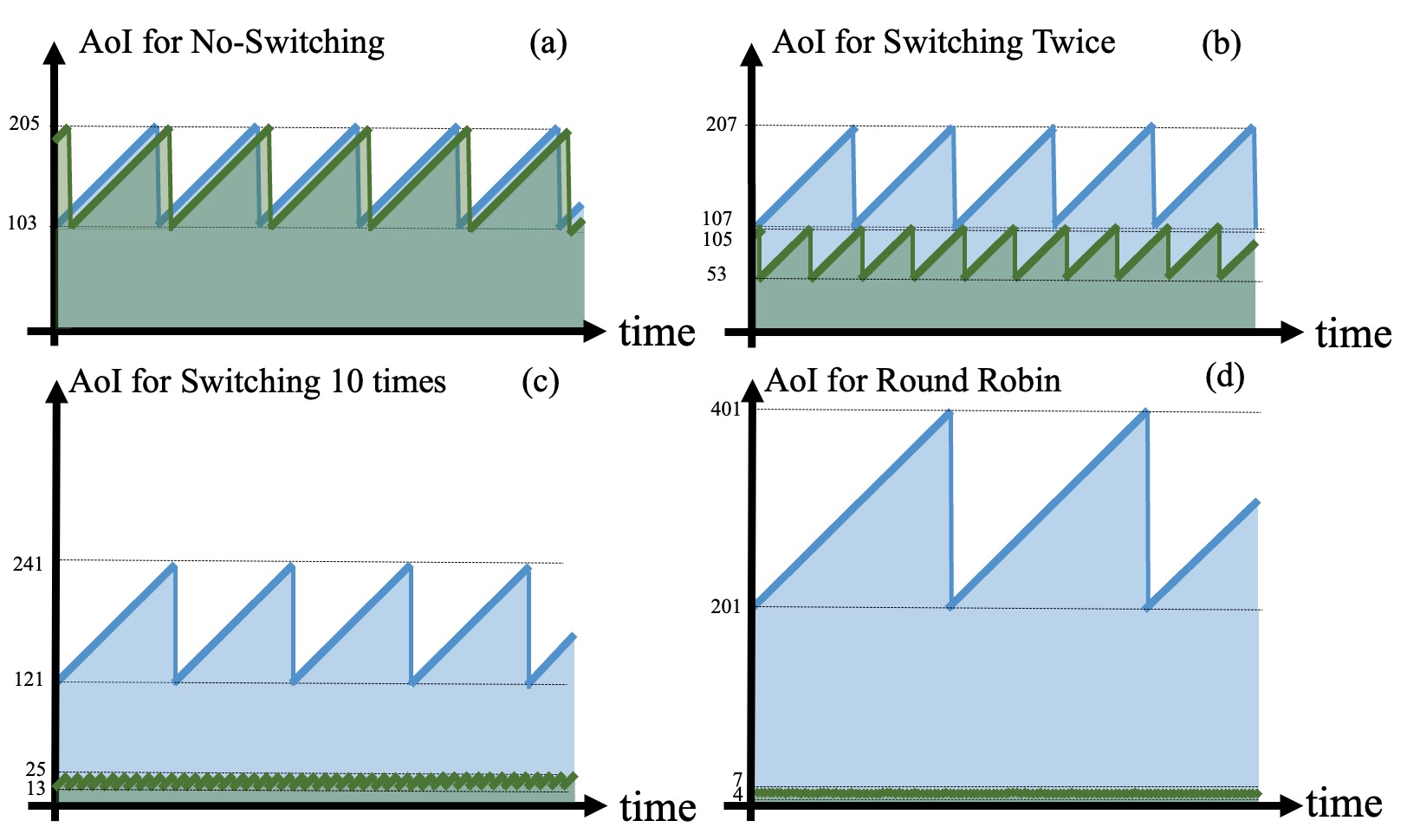}
    \caption{AoI evolution in a two-source network with reliable channels and sources 1 and 2 with update lengths $L_1=100$ and $L_2=2$, respectively. Each of the four plots are associated with a different scheduling policy: (a) no-switching; (b) switching twice during the update of source~1; (c) switching 10 times during the update of source~1; and (d) round robin. The average AoI is shown in Table~\ref{table:2sources}}
    \label{fig:simple_non_sysmmetric}
\end{minipage}
\vspace{-1.5em} 
\end{figure}
%\vspace{-2em} 

\begin{figure}[t]
\centering
\begin{minipage}{1\linewidth}
    \centering
    \small 
    \captionof{table}{Average AoI associated with the plots in Fig.~\ref{fig:simple_non_sysmmetric}.}
    \label{table:2sources}
    \begin{tabular}{cccc}
    \toprule
    \textbf{Scheduling Policy} & \textbf{Average AoI} \\
    \midrule
    (a) No-Switching        & 154 \\
    (b) Switching Twice     & 118 \\
    (c) Switching 10 times  & \textbf{99} \\
    (d) Round Robin         & 153.25 \\
    \bottomrule
    \end{tabular}
\end{minipage}
\vspace{-1.5em}
\end{figure}

\emph{Developing effective scheduling policies in networks with different update sizes is challenging}. To illustrate this challenge, we consider a simple two-source network with reliable channels in which source 1 transmits large updates, each with \(L_1=100\) packets, and source 2 transmits small updates, each with \(L_2=2\) packets. In each time slot, only one source can be scheduled by the BS. %packet from one of the two sources can reach the BS. 
In Fig.~\ref{fig:simple_non_sysmmetric}, we compare the AoI evolution associated with four scheduling policies: %to show that if we ignore the difference in update sizes, performance worsens significantly, as shown in Table~\ref{table:2sources}.
%We compare four scheduling policies: 
(a) no-switching policy, which delivers \emph{complete updates} from sources 1 and 2 in turns; %. schedules a different source only after successfully transmitting an entire update; %transmits $L_1=100$ packets from source 1 before scheduling source 2; 
(b) switching twice policy, which periodically delivers two small updates from source 2 for every large update from source 1; %during the transmission of the long update switches to  delivers an update from source 2 uniformly switches twice during the delivery of an update from source 1; 
(c) switching 10 times policy, which periodically delivers ten small updates for every large update; 
and (d) the round robin policy, which schedules \emph{packet transmissions} from sources 1 and 2 in turns. %switches to another source after each packet transmission. 
Table~\ref{table:2sources} reports the average AoI achieved by the different policies, indicating that, even in small networks with reliable channels, judiciously accounting for the different update lengths can significantly improve AoI. %appropriate switching can effectively reduce the AoI. 

However, most scheduling policies proposed in prior works~\cite{kadota2019minimizing,zakeri2023minimizing,zhou2019joint,tang2020minimizing ,kadota2019scheduling,fountoulakis2023scheduling ,atay2021aging,liu2024optimizing,zhao2025optimizing} are designed under the assumption that every successful transmission of a fresher packet to the destination leads to an AoI reduction, which is only true if $L_i=1,\forall i$, and can lead to poor AoI performance. Table~\ref{table:2sources} shows that the two policies that disregard the difference in update lengths, namely No-Switching and Round Robin, have an average AoI that is at least $54\%$ worse than Switching 10 times. 
%depend only on AoI, which reduces only when the entire update is fully delivered. Thus, these scheduling policies operate like the No-Switching policy in Fig.~\ref{fig:simple_non_sysmmetric} and can have poor performance in networks with mixtures of large and small update sizes.

\noindent\textbf{Main Contributions.} In this paper, we address the problem of AoI optimization in wireless networks in which sources may have different update lengths. Our main contributions can be summarized as follows:
\begin{itemize}[leftmargin=0.15in]
%\item We derive a lower bound on the AoI achievable by any scheduling policy. %To the best of our knowledge, Theorem~\ref{theo:lower_bound} provides the first lower bound for networks with delayed and unreliable connection between the BS and the destinations. %imperfect knowledge of source/destination timestamps and AoI.
\item We find the \emph{optimal schedulers} for two classes of policies: (i) Switching Randomized Policies (SRP), in which the BS randomly selects a source for transmission in every slot~$t$; and (ii) No-Switching Randomized Policies (NSRP), in which once the first packet of an update is successfully transmitted to the BS, the BS must continuously select the same source until the entire information update is delivered. 
%randomized policies for two scenarios: (i) Switching Randomized Policies, for our problem, both when packets of different sources can be switched within updates and when switching is not allowed in section~\ref{sec: Randomized Policies}. 
We obtain \emph{closed-form analytical expressions} for the AoI performance of SRP and NSRP. Using our universal lower bound, which generalized prior works~\cite{kadota2018scheduling,li2019minimizing}, we derive a constant factor optimality guarantee for the optimal SRP. %and also provide a universal lower bound for our setting, generalizing previous results from the $L=1$ case~\cite{kadota2018scheduling}.
% We derive a closed-form expression for the long term time-average AoI of Switching Randomized Policies (SRPs) in Proposition~\ref{prop:EWSAoIRandomzedproposition} and No-Switching Randomized Policies (NSRPs) in Proposition~\ref{prop:EWSAoIRandomzedpropositionNS}. We derive the close-form expression of optimal SRP, and provide performance guarantees for the optimal SRP and compare optimal SRP and NSRP both analytically and through simulation. 
%Theorem~\ref{} provides the first AoI guarantee for networks with sources that generate packets according to general renewal processes. 
%For the special case of Bernoulli packet generation, Theorem~\ref{} reduces to~\cite[Theorem~?]{kadota2019minimizing}. For the special case of Periodic packet generation, Theorem~\ref{} shows for the first time that the performance
%the AoI expression for the network employing single packet queues with general packet generation processes and develop an optimal randomized policy for any general packet generation process, which can guarantee consistent optimality compared with the lower bound.
\item We develop a novel low-complexity Max-Weight policy that makes scheduling decisions based on: AoI, system time, and \emph{remaining number of packets in the current update}. We derive a \emph{constant factor optimality guarantee} for the Max-Weight policy. %We derive the performance guarantee in Theorem~\ref{theo: Performance Bound for MW}. 
\emph{To the best of our knowledge, this is the first policy with a constant factor optimality guarantee in terms of AoI for networks with different update lengths and unreliable channels.} 
\item To derive performance guarantees, we propose a \emph{novel Lyapunov function and analysis}. Traditional AoI-based Lyapunov functions and analysis are insufficient since there can be long periods of time when the AoI does not change despite successful packet deliveries (due to large update sizes and unreliable channels). We add notions of system time, waiting time, and throughput debt to our Lyapunov function and utilize these in Lyapunov drift analysis to obtain our performance bounds. 
%for this policy. % Max-Weight policy with estimation. 
%The proof of Theorem ~\ref{} supplements the missing steps in the proof of \cite{ji2024age} and improves the performance bound given in \cite{liu2024optimizing}.
\item We evaluate the impact of the network configuration and scheduling policy on AoI. Our numerical results show that, the performance of the Max-Weight policy is near optimal for a wide variety of network settings.
%, regardless of whether feedback is available. 
%To the best of our knowledge, this is the first AoI-aware scheduling policy that minimizing AoI without requiring any AoI knowledge.
\end{itemize}

\noindent\textbf{Related Work.}
The design of transmission scheduling policies that optimize the AoI in wireless networks has been extensively investigated (e.g.,~\cite{kadota2019minimizing,zakeri2023minimizing,zhou2019joint,tang2020minimizing ,kadota2019scheduling,fountoulakis2023scheduling ,atay2021aging,liu2024optimizing,zhao2025optimizing,li2019minimizing,tripathi2021computation,zhou2019minimum,AoI5G}). Various network configurations have been considered, including those with stochastic arrivals~\cite{kadota2019minimizing,zakeri2023minimizing}, energy constraints~\cite{zhou2019joint,tang2020minimizing}, throughput constraints~\cite{kadota2019scheduling,fountoulakis2023scheduling }, and imperfect knowledge~\cite{atay2021aging,liu2024optimizing,zhao2025optimizing}. 
Most prior works~\cite{kadota2019minimizing,zakeri2023minimizing,zhou2019joint,tang2020minimizing ,kadota2019scheduling,fountoulakis2023scheduling ,atay2021aging,liu2024optimizing,zhao2025optimizing} assume that each update consists of a single packet, while a few recent studies~\cite{abd2019role,AoI5G,li2019minimizing,tripathi2021computation,zhou2019minimum} considered networks with different update sizes. Most related to this paper are~\cite{li2019minimizing,tripathi2021computation,zhou2019minimum}. 

In~\cite{li2019minimizing}, Li \emph{et al.} consider networks with reliable channels and sources that generate updates with different sizes. %where each update consists of multiple packets.
% Update generated by arbitrary sampling policies. 
They develop the \textit{Juventas} scheduling policy based on the ``AoI outage'' defined as the difference between AoI and system time and provide performance guarantees under the assumption that each update can be fully delivered within one time slot. The interesting insights in~\cite{li2019minimizing} are limited to networks with reliable channels. %and provides limited insights for more general scenarios.
% and the resulting policy is essentially a generalization of the Max-Weight policy~\cite{kadota2018scheduling}. 
% Follow-up research~\cite{AoI5G} examines MCS selection in 5G systems but offers limited insights on source selection. 
Similarly, in~\cite{tripathi2021computation}, Tripathi \emph{et al.} propose a low-complexity Whittle index resource allocation algorithm for networks with reliable channels and non-uniform update lengths. %with reliable channels. However, their method does not allow sources to switch in between update transmission and also lacks performance guarantees.
This algorithm assumes that complete updates must be transmitted before switching sources, and it lacks performance guarantees in terms of AoI. 
In~\cite{zhou2019minimum}, Zhou \emph{et al.} study AoI minimization by jointly designing sampling and scheduling policies. They derive the Bellman equation, unveil interesting structural properties of the solution, apply linear decomposition method to decouple sources, and develop a structure-aware algorithm that solve the Bellman equation for each source in parallel to compute a sub-optimal policy. The proposed structure-aware algorithm has no performance guarantees and it %approximate scheduling policy without performance guarantees. %; however, no performance guarantees are provided. Moreover, they 
has a computational complexity of \(\mathcal{O}(\bar{h}^2L)\), where \(\bar{h}\) is the imposed upper bound on the AoI and \(L\) is the update length, which may limit its practical applicability. %imposes an upper bound on the AoI, denoted by \(\bar{h}\) with a resulting computational complexity of \(\mathcal{O}(\bar{h}^2L)\), where \(L\) is the update length, which may limit its practical applicability.

In contrast, \emph{in this paper we consider wireless networks where sources generate updates of different lengths and the wireless channels are unreliable}. We develop dynamic scheduling policies with constant factor optimality guarantees in terms of AoI. Further, our proposed scheduling schemes are low complexity - their complexity scales linearly with the number of sources \textit{and does not scale at all with the size of the updates $L$.}

The remainder of this paper is organized as follows. In Sec.~\ref{sec: network model}, we describe the network model. In Sec.~\ref{sec:lower_bound}, we derive a lower bound on the achievable AoI. In Sec.~\ref{sec: Randomized Policies}, we develop and analyze the optimal SRP and optimal NSRP. We also use the lower bound derived from earlier to prove performance guarantees for the optimal SRP and NSRP. In Sec.~\ref{sec: Max Weight}, we develop and analyze the Max-Weight policy, and provide performance bounds. In Sec.~\ref{sec:simulation}, we provide detailed numerical results that illustrate the performance gains of our approach in a wide variety of network settings.
% In Sec. VII, we provide the emulation results. 
% The paper is concluded in Sec.~\ref{sec: Final Remarks}.

\section{Network Model}\label{sec: network model}

Consider a single-hop wireless network with a base station (BS) that receives time-sensitive updates from $N$ sources, as illustrated in Fig.~\ref{fig: system model}. Let time be slotted, with slot index $t \in \{1,2,\ldots,T\}$, where $T$ denotes the time horizon. 
% We use two data units to represent the information: \emph{updates} and \emph{packets}. 
Each \emph{information update} generated by source $i \in \{1, 2, \ldots, N\}$ consists of $L_i$ \emph{data packets}, where each \emph{packet} can be entirely transmitted in one time slot. An \emph{information update} is deemed to have been delivered successfully only after all $L_i$ \emph{data packets} reach the BS.
However, due to interference and capacity constraints, only one source can transmit in any given time slot, and the BS can receive at most one packet per slot, which may not constitute an entire update. These limitations necessitate the careful design of scheduling policies that account for AoI, the size of updates, and the number of packets remaining in queues. 
Next, we discuss the \emph{update} generation process and \emph{packet} transmission process in our system model.

\noindent\textbf{Update Generation Process.} 
Updates generated by each source $i$ are queued in a corresponding single update buffer queue. At any time slot, the queue contains all the packets that are remaining for transmission from the latest generated information update. 
Each source decides whether to generate a new update and place it in its buffer by looking at the number of packets remaining in the queue. Specifically, if the buffer is empty, then source $i$ generates a new update and places it in the buffer. Similarly, if the buffer is full, and the source is not currently transmitting, then it generates a new update and places it in the buffer. This generation policy ensures that the source always transmits the freshest available update, when it starts transmission. We assume that the update generation and transmission is non-preemptive, i.e. if a part of the update remains undelivered, then the source keeps the remainder of the current update in the buffer and does not replace it with a new update. 
Intuitively, this queuing discipline helps reduce the age of information by avoiding partial transmissions that do not reduce AoI, while ensuring that newly transmitted updates remain as fresh as possible.

\noindent\textbf{Packet Transmission Process.} 
In each slot \(t\), the BS either idles or selects one source for transmission. 
Let \(u_i(t) = 1\) indicate that source \(i\) is selected during slot \(t\), and \(u_i(t) = 0\) otherwise. It follows that
$\sum_{i=1}^N u_i(t) \leq 1, \forall t.$
The selected source attempts to transmit \textit{one} packet from its queue to the BS over an unreliable wireless channel. Let \(c_i(t) = 1\) indicate that the channel from source \(i\) to the BS is ON during slot \(t\), and \(c_i(t) = 0\) indicates otherwise. The channel states \(c_i(t)\) are i.i.d.\ over time and independent across different sources, with \(\mathbb{P}(c_i(t)=1) = p_i \in (0, 1]\) for all \(i, t\). 
% Moreover, the BS does not know \(c_i(t)\) prior to making scheduling decisions.

Let \(d_i(t)\) be an indicator such that \(d_i(t)=1\) if source \(i\) successfully transmits a packet in slot \(t\), and 0 otherwise. A transmission is successful if the source is scheduled and the channel is ON, implying 
$d_i(t) = c_i(t)\,u_i(t), \forall i,t$.
Since the BS does not know the channel states before making scheduling decisions, \(u_i(t)\) and \(c_i(t)\) are independent, which yields
$\mathbb{E}[d_i(t)] = p_i\,\mathbb{E}[u_i(t)], 
 \forall i,t$.

Without loss of generality, we assume that \emph{at the beginning of each slot $t$, the update generation occurs before packet transmission can start.} Next, we introduce network performance metrics of interest and then formulate the AoI minimization problem. 

%\vspace{-0.5em}
\noindent\textbf{Remaining Update Length.} 
Let $L_i(t) \in \{1, 2, \ldots,$ $ L_i\}$ denote the number of packets remaining to be transmitted in source $i$'s queue at the beginning of slot $t$, after the update generation process. %The value of $L_i(t)$ will depends on update generation and packet transmission processes. Specifically, 
The evolution of $L_i(t)$ is given by
\begin{equation}\label{Levolves}
L_{i}(t+1) = 
\begin{cases}
L_i, & \text{if } d_i(t) = 1 \text{ and } L_i(t) = 1,\\
L_{i}(t) - 1, & \text{if } d_i(t) = 1 \text{ and } L_i(t) \neq 1,\\
L_i(t), & \text{if } d_i(t) = 0.
\end{cases}
\end{equation}
The remaining update length $L_i(t)$ is critical for AoI tracking, as it determines the number of packet deliveries required before the AoI can be reduced.

\noindent\textbf{System Time.} The \emph{system time} of the update in the queue of source $i$ in slot \(t\) is defined as \(z_i(t) := t - \tau^S_i(t)\), where \(\tau^S_i(t)\) represents the time at which the update was generated (i.e., the ``source timestamp''). %the time elapsed since the generation of the update currently in the queue, quantifies the information freshness of the updates in the queue. System time is crucial for tracking AoI, as it measures how fresh the update is before delivery.
%Specifically, we denote by \(z_i(t) := t - \tau^S_i(t)\) as the system time of the update enqueued at queue \(i\) in slot \(t\), where \(\tau^S_i(t)\) represents the ``source timestamp'' (i.e., the time at which the update at source \(i\) was generated). 
The system time evolves as
%\vspace{-0.2em}
\begin{equation}\label{zevolves}
z_{i}(t+1) = 
\begin{cases}
1, & \text{if } d_i(t) = 0 \text{ and } L_i(t) = L, \\
  & \text{or } d_i(t) = 1 \text{ and } L_i(t) = 1,\\
z_{i}(t) + 1, & \text{otherwise}.
\end{cases}
\end{equation}
System time $z_{i}(t)$ is crucial for tracking AoI, as it measures how fresh the update is before delivery.

\noindent\textbf{Age of Information.} %The Age of Information quantifies the information freshness from the perspective of the destinations. 
Let \( h_i(t) := t - \tau^D_i(t) \) be the AoI associated with source \( i \) at the beginning of slot \( t \), where $\tau^D_i(t) $ is the generation time of the last delivered update. %. By definition, \( h_i(t) := t - \tau^D_i(t) \), 
%where \( \tau^D_i(t) \) represents the generation time of the freshest packet delivered to destination \( i \) before and excluding slot \( t \). 
%If destination \( i \) does not receive a packet from source \( i \) during slot \( t \), then \( h_i(t+1) = h_i(t) + 1 \); if destination \( i \) receives a packet from source \( i \) during slot \( t \), then \( h_i(t+1) = z_i(t-\theta_i) + \theta_i \), since the received packet was transmitted by the source in slot \( t-\theta_i \). 
The evolution of \( h_i(t) \) is given by:
\begin{equation}\label{hevolves}
h_i(t+1) = \begin{cases} 
z_i(t) + 1 & \text{if } d_i(t) = 1 \text{ and } L_i(t)=1, \\ 
h_i(t) + 1 & \text{otherwise.} 
\end{cases}
\end{equation}
We assume that \( h_i(1) = 1 \), \( z_i(1) = 0 ,\forall i \), and \( L_i(1) = L_i ,\forall i \).

\noindent\textbf{Long-term Packet Throughput.}
% The transmission scheduling policies considered in this paper are non-anticipative, which means that they do not use future information when making scheduling decisions. Let $\Pi$ represent the class of non-anticipative policies and let $\pi \in \Pi$ denote an arbitrary admissible policy. Let $D_i^\pi (T ) = \sum^T_{t =1} d_i^\pi (t )$ be the total number of information updates delivered from source $i$ by the end of the time-horizon $T$ when the admissible policy $\pi \in \Pi$ is employed. 
The long-term packet throughput of source \(i\) is given by
\begin{equation}\label{throughput define}
    q_i = \lim_{T\rightarrow \infty }\frac{\mathbb{E}[D_i(T)]}{T}, 
\end{equation}
where $D_i(T ) = \sum^T_{t =1} d_i(t)$ is the total number of information updates delivered from source $i$ by the end of the time-horizon $T$. 
The shared and unreliable wireless channel restricts the set of feasible values of long-term throughput. By employing $\mathbb{E}[d_i(t)] = p_i\mathbb{E}[u_i(t)]$ and $\sum^N_{i=1} u_i(t)\leq 1$ into the definition of long-term throughput in~\eqref{throughput define}, we obtain
\begin{equation}\label{throughput constraint}
   \frac{\mathbb{E}[D_i(T)]}{T} =\frac{p_i\sum_{t=1}^T\mathbb{E}[u_i(t)]}{T}= \sum_{i=1}^N \frac{q_i}{p_i} \leq 1.
\end{equation}

\noindent\textbf{AoI minimization problem.} The transmission scheduling policies considered in this paper are non-anticipative, which means that they do not use future information when making scheduling decisions. Let $\Pi$ represent the class of non-anticipative policies and let $\pi \in \Pi$ denote an arbitrary admissible policy. 
To capture the information freshness in a network employing  policy \( \pi \in \Pi \), we define the Expected Weighted Sum AoI (EWSAoI) in the limit as the time horizon $T$ grows to infinity as
\begin{equation}\label{EWSAoIexpression}
    \mathbb{E}[J^\pi] = \lim_{T \to \infty} \frac{1}{TN} \sum_{t=1}^T \sum_{i=1}^N \alpha_i \mathbb{E}[h_i^{\pi}(t)], 
\end{equation}
where \( \alpha_i >0 \) represents the priority of source \( i \). We denote by $\pi^*\in \Pi$ the AoI-optimal policy that achieves minimum EWSAoI, namely
\begin{center}
\fbox{
  \parbox{0.8\linewidth}{
  \centering
  \vspace{-0.5em} 
    \begin{align}
    \mathbb{E}\left[J^*\right] = \min_{\pi \in \Pi}\lim_{T \to \infty}& \frac{1}{TN} \sum_{t=1}^T \sum_{i=1}^N \alpha_i \mathbb{E}[h_i^{\pi}(t)], \\
    \text{s.t.}&\sum_{i=1}^N u_i(t) \leq 1.
    \end{align}
  \vspace{-0.5em} }
}
\end{center}
where $\mathbb{E}\left[J^*\right]$ is the EWSAoI associate with policy $\pi^*$, and the expectation is with respect to the randomness in the channel state $c_i (t )$ and in scheduling decisions $u_i (t )$. Next, we derive a universal lower bound for the AoI minimization problem.

%\vspace{-0.5em}
\section{Lower Bound}\label{sec:lower_bound}
In this section, we derive a lower bound on the achievable EWSAoI under any admissible scheduling policy \(\pi \in \Pi\). %The result in Theorem~\ref{theo:lower_bound} extends the analysis in~\cite{kadota2019minimizing} to the more challenging setting where each update consists of multiple packets. To proceed, 
We first define \emph{waiting time} and \emph{service time}, then we characterize the EWSAoI in terms of these two quantities, and, finally, we derive the lower bound.

% Consider a network operating under policy \(\pi\) over a time horizon \(T\). Let \(\Omega\) be the associated sample space, and let \(\omega \in \Omega\) denote a sample path. For each sample path \(\omega\), 
\noindent\textbf{Waiting Time and Service Time.} Let \(K_i(T)\) denote the total number of delivered updates from source $i$ by the end of slot \(T\) and let $m\in\{1,...,K_i(T)\}$ be the index of the delivered updates from source $i$. 
Let \(t'_i[m]\) denote the time slot in which the first packet of the \(m\)th delivered update is received, and let \(t_i[m]\) denote the time slot in which the last packet of the \(m\)th delivered update is received. 
We define the \emph{waiting time} of the \(m\)th update from source \(i\) as $W_i[m] \;:=\; t'_i[m] - t_i[m-1]$, which is the interval between the delivery of the last packet of the \((m-1)\)th update and the delivery of the first packet of the \(m\)th update. Similarly, the \emph{service time} of the \(m\)th update for source \(i\) is defined as $S_i[m] \;:=\; t_i[m] - t'_i[m]$, which is the interval between the delivery of the first and last packets of the \(m\)th update. We assume that \(t_i[0] = 0\), \(W_i[0] = 0\), and \(S_i[0] = 0\) for all \(i\).
% These are given by
% \begin{equation}
%     W_i[m] \;:=\; t'_i[m] - t_i[m-1],
% \end{equation}
% \begin{equation}
%     S_i[m] \;:=\; t_i[m] - t'_i[m].
% \end{equation}
% with \(t_i[0] = 0\), \(W_i[0] = 0\), and \(S_i[0] = 0\) for all \(i\).

For a set of values \(\mathbf{x}\), let \(\Bar{\mathbb{M}}[\mathbf{x}]\) denote the sample mean. The time horizon \(T\) is omitted in the notation \(\Bar{\mathbb{M}}[\cdot]\) for simplicity. Using this operator, the sample mean of $W_i[m]$ and $S_i[m]$ for a fixed source $i$ is given by
\begin{equation}
    \Bar{\mathbb{M}} \bigl[W_i[m]\bigl] \;:=\; \frac{1}{K_i(T)} \sum_{m=1}^{K_i(T)} W_i[m], 
\end{equation}
\begin{equation}
\Bar{\mathbb{M}} \bigl[S_i[m]\bigl] \;:=\; \frac{1}{K_i(T)} \sum_{m=1}^{K_i(T)} S_i[m].
\end{equation}

\begin{proposition}\label{prop: AoI expression with waiting/service time}
The infinite-horizon Weighted Sum AoI achieved by scheduling policy $\pi$, i.e. \(J^\pi\), can be written as
\begin{equation}\label{performance for waiting and service time}
\begin{aligned}
J^\pi 
&= \lim_{T\rightarrow \infty }\sum_{i=1}^{N}\frac{\alpha_i}{N}
\Biggl(
  \frac{\Bar{\mathbb{M}}\bigl[(W_i[m]+S_i[m])^2\bigr]}
       {2\,\Bar{\mathbb{M}}\bigl[W_i[m]+S_i[m]\bigr]} \\
&\quad +\;
  \frac{\Bar{\mathbb{M}}\bigl[S_{i}[m-1]\,(W_i[m]+S_i[m])\bigr]}
       {\Bar{\mathbb{M}}\bigl[W_i[m]+S_i[m]\bigr]}
  \;+\; 1
\Biggr),
\end{aligned}
\end{equation}
where \(W_i[m]\) and \(S_i[m]\) are the waiting time and service time of the \(m\)th update from source \(i\).
\end{proposition}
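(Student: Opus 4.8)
The plan is to reduce the identity to a per-source, per-epoch ``area under the sawtooth'' computation, as is standard for AoI, but carried out carefully around the two-timescale structure (waiting time versus service time) created by multi-packet updates. Since the objective is linear in the sources, $J^\pi=\lim_{T\to\infty}\frac1N\sum_{i=1}^N\alpha_i\big(\frac1T\sum_{t=1}^T h_i(t)\big)$, it suffices to show for each fixed $i$ that $\frac1T\sum_{t=1}^T h_i(t)$ converges to the bracketed expression (when $q_i=0$ both sides are $+\infty$, so assume $q_i>0$, hence $K_i(T)\to\infty$). Write $X_i[m]:=W_i[m]+S_i[m]=t_i[m]-t_i[m-1]$. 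I would partition $\{1,\dots,T\}$ into \emph{AoI epochs}, where epoch $m$ is the block of slots $\{t_i[m-1]+1,\dots,t_i[m]\}$ for $m=1,\dots,K_i(T)$ (using $t_i[0]=0$), followed by a final incomplete epoch $\{t_i[K_i(T)]+1,\dots,T\}$. By \eqref{hevolves} the age $h_i$ decreases only at slots of the form $t_i[m]+1$, so within epoch $m$ the trajectory of $h_i$ is a unit-slope ramp of length exactly $X_i[m]$, and $\sum_{m=1}^{K_i(T)}X_i[m]=t_i[K_i(T)]$ differs from $T$ only by the length of the final incomplete epoch.

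Next I would evaluate the area $Q_i[m]:=\sum_{t=t_i[m-1]+1}^{t_i[m]}h_i(t)$ of epoch $m$. Its starting height is $h_i(t_i[m-1]+1)=z_i(t_i[m-1])+1$ by \eqref{hevolves}. The key sub-step is to read off from \eqref{zevolves} and the queueing discipline that $z_i(t_i[m-1])$ equals $S_i[m-1]$ plus a fixed constant: throughout the waiting interval of update $m-1$ the buffer is full and no packet is delivered, so \eqref{zevolves} resets $z_i$ each slot; once the first packet of update $m-1$ is delivered, the ``otherwise'' branch of \eqref{zevolves} increments $z_i$ for the $S_i[m-1]$ remaining service slots. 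Consequently, $h_i$ on epoch $m$ runs through consecutive integers starting near $S_i[m-1]+1$, giving $Q_i[m]=\tfrac12 X_i[m]^2+X_i[m]\,S_i[m-1]+(\text{terms linear in }X_i[m]\text{ and below})$; the lower-order terms, tracked through the discrete increments, pin down the additive constant in \eqref{performance for waiting and service time}.

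Then I would assemble the pieces: $\frac1T\sum_{t=1}^T h_i(t)=\frac{\sum_{m=1}^{K_i(T)}Q_i[m]+(\text{final-epoch area})}{T}=\frac{\sum_m Q_i[m]}{\sum_m X_i[m]}\cdot\frac{\sum_m X_i[m]}{T}+o(1)$. Dividing the numerator and denominator of the first ratio by $K_i(T)$ turns the sums into the sample means $\Bar{\mathbb{M}}[\cdot]$, the second ratio tends to $1$, and, substituting $X_i[m]=W_i[m]+S_i[m]$, the limit rearranges into \eqref{performance for waiting and service time}; multiplying by $\alpha_i/N$ and summing over $i$ completes the argument.

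I expect the main obstacle to be the second step: pinning down the exact relation between the system time $z_i(t_i[m])$ observed at delivery and the service time $S_i[m]$, and carrying the discrete off-by-one terms so that the constants match \eqref{performance for waiting and service time} exactly rather than only up to $O(X_i[m])$. A secondary technical point is justifying that the first epoch (whose starting height is fixed by $h_i(1)=1$ rather than by $S_i[0]$) and the final incomplete epoch contribute $o(T)$; under $q_i>0$ the inter-delivery gaps are a.s.\ $o(T)$, so both are negligible after the $1/T$ normalization, but this should be argued rather than merely assumed.
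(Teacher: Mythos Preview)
Your proposal is correct and follows essentially the same route as the paper's proof: both partition the time horizon into epochs $[t_i[m-1]+1,t_i[m]]$, observe that $h_i$ is a unit-slope ramp on each epoch starting at $h_i(t_i[m-1]+1)=z_i(t_i[m-1])+1=S_i[m-1]+2$, sum the resulting arithmetic progression, and convert the ratio of sums to a ratio of sample means by dividing numerator and denominator by $K_i(T)$. The paper simply asserts the starting value $S_i[m-1]+2$ and dismisses the residual epoch by assuming $R_i<\infty$, whereas you trace the constant through \eqref{zevolves} and flag the boundary terms explicitly; the extra care you anticipate is exactly where the paper is terse, so your plan is, if anything, slightly more complete.
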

\begin{proof}
Using a sample path argument, we compute the sum of AoI during each update, and take the average over time by rewriting the time horizon \(T\) as the sum of waiting and service times. By omitting zero-order terms, we obtain~\eqref{performance for waiting and service time}. Detailed derivations are provided in Appendix A.% in the technical report~\cite{Proof}
\end{proof}

\begin{remark}\label{rem:service_time}
Equation~\eqref{performance for waiting and service time} holds for any scheduling policy \(\pi \in \Pi\) and generalizes known results for the single-packet case~\cite{kadota2019scheduling} to the scenario where each update may contain multiple packets. The first term on the RHS of~\eqref{performance for waiting and service time}
% , namely $\frac{\Bar{\mathbb{M}}\bigl[(W_i[m]+S_i[m])^2\bigr]}{2\,\Bar{\mathbb{M}}\bigl[W_i[m]+S_i[m]\bigr]}$, 
depends on both the waiting time and service time. The second term depends on the previous update's service time and the sum of the current update's waiting time and service time. Intuitively, to minimize AoI, the scheduling policy should attempt to deliver packets from a source that currently has high waiting time and high service time, especially the latter.
\end{remark}

Based on Proposition~\ref{prop: AoI expression with waiting/service time}, we now establish a universal lower bound on the achievable AoI.

\begin{theorem}\label{theo:lower_bound}
For a network with parameters \(\{N,\alpha_i,p_i,L_i\}\), the following bound holds for all admissible policies \(\pi \in \Pi\):
\begin{equation}\label{performancelowerbound}
L_{B} =\frac{1}{2N}\Biggl(\sum_{i=1}^{N}\sqrt{\frac{\alpha_i\,L_{i}}{p_{i}}}\Biggr)^{2}+\sum_{i=1}^{N}\alpha_i
\leq\mathbb{E}[J^\pi],
\end{equation}
\end{theorem}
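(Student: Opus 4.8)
The plan is to start from the exact AoI expression in Proposition~\ref{prop: AoI expression with waiting/service time} and lower-bound each of the three terms inside the sum over $i$. The key structural facts I would exploit are: (i) the service time $S_i[m]$ of any delivered update is at least $L_i$, since an update consists of $L_i$ packets and at most one packet is delivered per slot; (ii) the waiting time $W_i[m]$ is at least $1$, since the first packet of update $m$ cannot be delivered in the same slot as the last packet of update $m-1$ (one delivery per slot, network-wide), hence $W_i[m]+S_i[m] \ge L_i + 1$; and (iii) a throughput/counting constraint that ties the sources together through the shared channel, namely the bound $\sum_i q_i/p_i \le 1$ from~\eqref{throughput constraint}, which in sample-path form says roughly $\sum_i \frac{L_i}{p_i}\cdot\frac{K_i(T)}{T} \lesssim 1$ (each delivered update from $i$ consumes on average $L_i/p_i$ scheduled slots).

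\textbf{Main steps, in order.} First I would discard the nonnegative cross term $\frac{\Bar{\mathbb{M}}[S_{i}[m-1](W_i[m]+S_i[m])]}{\Bar{\mathbb{M}}[W_i[m]+S_i[m]]} \ge 0$ and keep the $+1$ term, which contributes exactly $\sum_i \alpha_i/N$ — this already accounts for the additive $\sum_i \alpha_i$ in the bound (after multiplying through by $N$, or rather the bound as stated has $L_B = \tfrac{1}{2N}(\cdots)^2 + \sum_i\alpha_i$, so I should be careful: the $+1$ gives $\sum_i \alpha_i/N \cdot N$... let me just say the $+1$ terms sum to the additive part). Second, for the first term I use the power-mean / Cauchy–Schwarz inequality $\Bar{\mathbb{M}}[X^2] \ge (\Bar{\mathbb{M}}[X])^2$ with $X = W_i[m]+S_i[m]$, so $\frac{\Bar{\mathbb{M}}[(W_i+S_i)^2]}{2\Bar{\mathbb{M}}[W_i+S_i]} \ge \frac{1}{2}\Bar{\mathbb{M}}[W_i+S_i]$. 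Writing $\beta_i := \Bar{\mathbb{M}}[W_i[m]+S_i[m]]$ (the mean inter-delivery-completion spacing for source $i$), we have reduced the problem to lower-bounding $\frac{1}{2N}\sum_i \alpha_i \beta_i$ subject to the coupling constraint among the $\beta_i$.

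\textbf{The coupling and the optimization.} The third ingredient is the crucial one: I need a clean inequality relating the $\beta_i$. Since $K_i(T)$ updates from source $i$ are completed in the horizon and consecutive completions for $i$ are at least $\beta_i$ slots apart on average, $K_i(T) \approx T/\beta_i$. Each such update requires $L_i$ successful deliveries, and on an unreliable link with success probability $p_i$ this requires on average $L_i/p_i$ slots in which $i$ is scheduled; summing the scheduled-slot budgets over all sources and using $\sum_i u_i(t)\le 1$ gives $\sum_i \frac{L_i}{p_i}\cdot\frac{K_i(T)}{T} \le 1$, i.e. $\sum_i \frac{L_i}{p_i \beta_i} \le 1$ in the limit. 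Then minimizing $\frac{1}{2N}\sum_i \alpha_i\beta_i$ over $\{\beta_i\}$ subject to $\sum_i \frac{L_i}{p_i\beta_i}\le 1$ is a standard Lagrangian/AM–GM exercise: the optimum is $\beta_i \propto \sqrt{\frac{L_i}{\alpha_i p_i}}$, yielding $\frac{1}{2N}\big(\sum_i\sqrt{\alpha_i L_i/p_i}\big)^2$. Adding back the $+1$ terms gives exactly~\eqref{performancelowerbound}.

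\textbf{The hard part} will be making the sample-path throughput argument rigorous in the $T\to\infty$ limit, since $\beta_i$ is defined as a ratio of sample means and $K_i(T)$ may grow irregularly, need not tend to infinity for every source (some source could be starved), and the "$L_i/p_i$ scheduled slots per update" statement is only an expectation, not a pathwise identity. I would handle this by working with expectations from the start: use Jensen on $\mathbb{E}[(W_i+S_i)^2]\ge (\mathbb{E}[W_i+S_i])^2$, bound the number of scheduled slots via $\mathbb{E}[D_i(T)] = p_i\sum_t \mathbb{E}[u_i(t)]$ together with $D_i(T) \ge L_i K_i(T)$ and $\sum_i \sum_t \mathbb{E}[u_i(t)] \le T$, and treat the degenerate case $K_i(T)\not\to\infty$ separately (a starved source contributes unboundedly large AoI, so the bound holds trivially). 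Care is also needed with the zero-order terms dropped in Proposition~\ref{prop: AoI expression with waiting/service time} and with sources whose AoI expression denominator $\Bar{\mathbb{M}}[W_i+S_i]$ could be large; I would note these only help the lower bound. The remaining algebra — the Cauchy–Schwarz step for $\big(\sum_i\sqrt{\alpha_i L_i/p_i}\big)^2$ — is routine and I would not grind through it here.
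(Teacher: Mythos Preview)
Your proposal is correct and follows essentially the same route as the paper: drop the nonnegative cross term, apply Jensen to get $\tfrac{\bar{\mathbb{M}}[(W_i+S_i)^2]}{2\bar{\mathbb{M}}[W_i+S_i]}\ge\tfrac12\bar{\mathbb{M}}[W_i+S_i]$, convert the sample-mean inter-delivery time into a throughput variable (your $\beta_i$ is the paper's $L_i/q_i^\pi$), invoke the resource constraint $\sum_i q_i/p_i\le 1$, and finish with Cauchy--Schwarz. The structural facts (i) and (ii) you list are not actually needed (and $S_i[m]\ge L_i-1$, not $L_i$), but since your main steps do not rely on them this does not affect the argument.
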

\begin{proof} Applying Jensen Inequality and $S_{i}[m-1](W_i[m]+S_i[m])>0, \forall i,m,$, we obtain a lower bound on~\eqref{performance for waiting and service time}. Then, by using Cauchy-Schwarz Inequality to solve the optimization problem with respect to throughput of each sources yields~\eqref{performancelowerbound}. 
Detailed derivations are provided in Appendix B.% in the technical report~\cite{Proof}.
\end{proof}

\vspace{-1em}
\section{Randomized Policies}\label{sec: Randomized Policies}

%In this section, we study two classes of randomized scheduling policies that generalize the single-packet update framework of~\cite{kadota2019minimizing} to the multi-packet update scenario. Our first approach is a \emph{switching randomized policy}, where in each slot independently selects a source for transmission, with probabilities fixed over time and network state. Inspired by Proposition~\ref{prop: AoI expression with waiting/service time}, we then propose a \emph{no-switching randomized policy}, which disallows source switching within the delivery of each update, thereby potentially reducing the overall service time of that update. We also highlight how network symmetry influences the performance of these randomized approaches.

In this section, we discuss two classes of randomized scheduling policies:
Switching Randomized Policies, in which the BS randomly selects a source for transmission in every slot~$t$; and
No-Switching Randomized Policies, in which once the first packet of an update is successfully transmitted to the BS, the BS must continuously select the same source until the entire information update is delivered. In Sec.~\ref{sec:compare}, we compare the performance of these two classes of randomized policies in symmetric and non-symmetric networks. 
%the BS selects sources $i$ in each slot $t$, the BS selects source $i$ with a probability $\mu_i$; and no-switching policies, in which %that generalize the single-packet update framework of~\cite{kadota2019minimizing} to the multi-packet update scenario. Our first approach is a \emph{switching randomized policy}, where in each slot independently selects a source for transmission, with probabilities fixed over time and network state. Inspired by Proposition~\ref{prop: AoI expression with waiting/service time}, we then propose a \emph{no-switching randomized policy}, which disallows source switching within the delivery of each update, thereby potentially reducing the overall service time of that update. We also highlight how network symmetry influences the performance of these randomized approaches.

\subsection{Switching Randomized Policies (SRP)}\label{sec: Switching Randomized Policy}
Let \(\Pi^s_r\) denote the class of SRPs. A BS running a policy \(s \in \Pi^s_r\) operates as follows: in each slot~$t$, the BS selects source~\(i\) with scheduling probability \(\mu^s_i \in (0,1]\), where the probabilities satisfy $\sum_{i=1}^N \mu^s_i \leq1.$
If source $i$ is selected during slot $t$, then it transmits a packet to the BS. SRPs select sources at random, without taking into account the current AoI $h_i(t)$, system time $z_i(t)$, nor the number of remaining packet $L_i(t)$ at each source.  Each policy $s$ is fully characterized by the set of scheduling probabilities $\{\mu^s_i \}^N_{i=1}$. Note that under an SRP, packets from different sources are interleaved between one another, it is not necessary for all packets belonging to an update to be delivered continuously.

Next, we obtain the optimal SRP and provide performance guarantees for it in terms of AoI. 
Specifically, Proposition~\ref{prop:EWSAoIRandomzedproposition} provides the EWSAoI associated with an arbitrary SRP $s \in \Pi_s$ and Theorem~\ref{theo:stationary_performance} characterizes the optimal SRP $S$ and its performance guarantee.

\begin{proposition}\label{prop:EWSAoIRandomzedproposition}
For a network with parameters \(\{N,\alpha_i,p_i,L_i\}\) and any SRP \(s \in \Pi^s_r\) characterized by \(\{\mu^s_i\}_{i=1}^N\), the corresponding EWSAoI is given by
\begin{equation}\label{eq:EJ^s}
\mathbb{E}[J^s] = \frac{1}{N} \sum_{i=1}^N \alpha_i \left( \frac{3 L_i - 1}{2 p_i \mu^s_i } + 1 \right).
\end{equation}
\end{proposition}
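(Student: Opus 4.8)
The plan is to specialize the general identity of Proposition~\ref{prop: AoI expression with waiting/service time} to an SRP, exploit the fact that under such a policy each source experiences a clean i.i.d.\ geometric renewal structure, and then pass from sample means to expectations by the strong law of large numbers.

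First I would fix a source $i$ and an SRP $s$ with scheduling probability $\mu^s_i\in(0,1]$. Because the BS draws the scheduled source at random in each slot without observing the channels, source $i$ is selected in slot $t$ independently with probability $\mu^s_i$ and, independently, its channel is ON with probability $p_i$; hence a packet of source $i$ reaches the BS in any given slot with probability $\lambda_i:=p_i\mu^s_i$, i.i.d.\ across slots and independent of the past. Combined with the non-preemptive generation rule — a fresh update of $L_i$ packets is available the slot after a completion — this makes the delivery process of source $i$ regenerate after every update completion. Therefore the waiting time $W_i[m]$ (the number of slots from $t_i[m-1]$ until the first packet of update $m$ is delivered) is Geometric($\lambda_i$); the service time $S_i[m]$ (the number of slots to deliver the remaining $L_i-1$ packets) is a sum of $L_i-1$ i.i.d.\ Geometric($\lambda_i$) variables; $W_i[m]$ and $S_i[m]$ are independent; the cycle lengths $W_i[m]+S_i[m]$ are i.i.d.\ across $m$, each equal in law to a sum of $L_i$ i.i.d.\ Geometric($\lambda_i$) variables; and $S_i[m-1]$ is independent of $W_i[m]+S_i[m]$, since the two are determined by disjoint blocks of slots.

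Next I would evaluate the quantities entering~\eqref{performance for waiting and service time}. A sum of $n$ i.i.d.\ Geometric($\lambda_i$) variables has mean $n/\lambda_i$ and variance $n(1-\lambda_i)/\lambda_i^2$, so $\mathbb{E}[W_i[m]+S_i[m]]=L_i/\lambda_i$, $\mathbb{E}[(W_i[m]+S_i[m])^2]=L_i(1-\lambda_i)/\lambda_i^2+L_i^2/\lambda_i^2$, $\mathbb{E}[S_i[m-1]]=(L_i-1)/\lambda_i$, and, by the independence above, $\mathbb{E}[S_i[m-1](W_i[m]+S_i[m])]=\mathbb{E}[S_i[m-1]]\,\mathbb{E}[W_i[m]+S_i[m]]$. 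Since $\{(W_i[m],S_i[m])\}_m$ is i.i.d.\ and $\{(S_i[m-1],\,W_i[m]+S_i[m])\}_m$ is stationary and ergodic (a fixed function of two consecutive terms of an i.i.d.\ sequence), while $K_i(T)\to\infty$ almost surely, the strong law of large numbers lets me replace every $\Bar{\mathbb{M}}[\cdot]$ in~\eqref{performance for waiting and service time} by the corresponding expectation. Substituting these values into the bracketed per-source expression and simplifying reduces it to $\frac{3L_i-1}{2p_i\mu^s_i}+1$; summing over $i$ with weights $\alpha_i/N$ then gives~\eqref{eq:EJ^s}. Finally, as this almost-sure limit is deterministic, it also equals $\mathbb{E}[J^s]$.

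I expect the main obstacle to lie not in the algebra but in the averaging step: one must justify that the SRP truly induces a per-source renewal process, that the three sample means converge almost surely (in particular for the cross term $\Bar{\mathbb{M}}[S_i[m-1](W_i[m]+S_i[m])]$, which is only finitely dependent across cycles), and that the boundary conventions $t_i[0]=W_i[0]=S_i[0]=0$ together with the unfinished last cycle contribute only $O(1/T)$ corrections that vanish as $T\to\infty$ — the step that also pins down the exact additive constant in~\eqref{eq:EJ^s}. Passing from the almost-sure limit in Proposition~\ref{prop: AoI expression with waiting/service time} to $\mathbb{E}[J^s]$ is then immediate since the limit is constant. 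Once these are settled, the key structural fact that $W_i[m]+S_i[m]$ is nothing more than a negative-binomial sum of $L_i$ i.i.d.\ geometrics does all the remaining work.
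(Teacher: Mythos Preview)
Your proposal is correct and follows essentially the same route as the paper: specialize Proposition~\ref{prop: AoI expression with waiting/service time} to an SRP, identify $W_i[m]$ and $S_i[m]$ as negative binomial (sums of i.i.d.\ geometrics with success probability $p_i\mu^s_i$), and substitute the resulting first and second moments. The paper handles the passage from sample means to expectations more informally than you do, and it computes the moments of $W_i$ and $S_i$ separately rather than working with $W_i+S_i$ directly, but these are cosmetic differences.
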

\begin{proof}
First, we take the expectation of~\eqref{performance for waiting and service time} to obtain an expression for the EWSAoI. Then, we substitute the first and second moments of the waiting time and service times -- which follow a negative binomial distribution~\cite{ross2014introduction} -- to obtain~\eqref{eq:EJ^s}. 
%We begin by taking the expectation of~\eqref{performance for waiting and service time} to obtain an expression for the EWSAoI. The distributions of waiting time and service time for a network employing the SRP follows negative binomial distribution~\cite{ross2014introduction}.  Substituting the first-order moments and second-order moments into the EWSAoI expression yields~\eqref{eq:EJ^s}. 
Detailed derivations are provided in Appendix C.% in the technical report~\cite{Proof}.
\end{proof}

From~\eqref{eq:EJ^s}, we can obtain the optimal SRP $S$ by solving
\begin{align}\label{OPT:EJ^s}
\min_{\{\mu^s_i\}_{i=1}^N}\sum_{i=1}^N  \alpha_i\biggl(\frac{3\,L_i -1}{2\,p_i\,\mu^s_i}\biggr)
\quad \text{s.t.}\quad 
\sum_{i=1}^N \mu^s_i \;\le\; 1.
\end{align}

\begin{theorem}\label{theo:stationary_performance}
For a network with parameters \(\{N,\alpha_i,p_i,L_i\}\), let \(S\in \Pi^s_r\) be the optimal SRP. Its scheduling probabilities \(\{\mu^S_i\}_{i=1}^N\) are given by
\begin{equation}\label{eq:solutionRandomized}
    \mu^S_i 
    \;=\; 
    \frac{\sqrt{\frac{\alpha_i\,(3\,L_i -1)}{2\,p_i}}}
         {\sum_{j=1}^N \sqrt{\frac{\alpha_j\,(3\,L_j -1)}{2\,p_j}}},
    \quad 
    \forall i.
\end{equation}
The associated EWSAoI is given by %Substituting~\eqref{eq:solutionRandomized} into~\eqref{eq:EJ^s} yields
\begin{equation}\label{Eq:J^S}
    \mathbb{E}[J^S] 
    \;=\; 
    \frac{1}{N}  \sum_{i=1}^N \alpha_i
    \;+\;
    \frac{1}{N} 
    \left(\sum_{i=1}^N \sqrt{\tfrac{\alpha_i\,(3\,L_i-1)}{2\,p_i}}\right)^2.
\end{equation}
which satisfies
\begin{equation}\label{eq:Randomized_guarantee}
    L_B \;\le\; \mathbb{E}[J^S] \;\le\; \rho^S\,L_B,
\end{equation}
where \(L_B\) is the lower bound from Theorem~\ref{theo:lower_bound} and the optimality ratio \(\rho^S\) is
\begin{equation}\label{eq:SRP_opt_ratio}
    \rho^S=\frac{\frac{1}{N}\Bigl(\sum_{i=1}^N \sqrt{\tfrac{\alpha_i(3L_i-1)}{2p_i}}\Bigr)^2+\frac{1}{N}\sum_{i=1}^N \alpha_i}{\frac{1}{2N}(\sum_{i=1}^{N}\sqrt{\frac{\alpha_iL_{i}}{p_{i}}})^{2}+\frac{1}{N}\sum_{i=1}^{N}\alpha_i}
\end{equation}
\end{theorem}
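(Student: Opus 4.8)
The plan is to solve the convex program~\eqref{OPT:EJ^s} in closed form, substitute the resulting optimizer into Proposition~\ref{prop:EWSAoIRandomzedproposition} to obtain~\eqref{Eq:J^S}, and then combine this with Theorem~\ref{theo:lower_bound} to get the two-sided bound~\eqref{eq:Randomized_guarantee} together with a universal constant for $\rho^S$. For Step~1, I would first note that the objective in~\eqref{OPT:EJ^s} is strictly decreasing in each $\mu^s_i$, since every coefficient $\alpha_i(3L_i-1)/(2p_i)$ is strictly positive (because $L_i\ge 1$ forces $3L_i-1\ge 2$ and $\alpha_i,p_i>0$); hence any optimizer makes the constraint tight, $\sum_{i=1}^N\mu^s_i=1$. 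The objective is convex on the positive orthant (a nonnegative combination of the convex maps $\mu\mapsto 1/\mu$), so a stationary point of the Lagrangian is globally optimal; equivalently, one can bypass the Lagrangian and argue directly by Cauchy--Schwarz: for any feasible $\{\mu^s_i\}$,
\begin{equation}
\begin{aligned}
\Big(\sum_{i=1}^N \sqrt{\tfrac{\alpha_i(3L_i-1)}{2p_i}}\Big)^2
&=\Big(\sum_{i=1}^N \sqrt{\tfrac{\alpha_i(3L_i-1)}{2p_i\mu^s_i}}\,\sqrt{\mu^s_i}\Big)^2 \\
&\le\Big(\sum_{i=1}^N \tfrac{\alpha_i(3L_i-1)}{2p_i\mu^s_i}\Big)\Big(\sum_{i=1}^N \mu^s_i\Big)
\le\sum_{i=1}^N \tfrac{\alpha_i(3L_i-1)}{2p_i\mu^s_i},
\end{aligned}
\end{equation}
with equality iff $\mu^s_i\propto\sqrt{\alpha_i(3L_i-1)/(2p_i)}$ and $\sum_i\mu^s_i=1$. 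Normalizing yields exactly~\eqref{eq:solutionRandomized}, and $\mu^S_i\in(0,1]$ holds automatically since each summand is positive and no larger than the whole sum.

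For Step~2, I would substitute $\mu^s_i=\mu^S_i$ into~\eqref{eq:EJ^s}. For each $i$, the term $\alpha_i(3L_i-1)/(2p_i\mu^S_i)$ becomes $\sqrt{\alpha_i(3L_i-1)/(2p_i)}\cdot\big(\sum_j\sqrt{\alpha_j(3L_j-1)/(2p_j)}\big)$, so summing over $i$ collapses $\sum_i\alpha_i(3L_i-1)/(2p_i\mu^S_i)$ into $\big(\sum_i\sqrt{\alpha_i(3L_i-1)/(2p_i)}\big)^2$; adding back the $\tfrac1N\sum_i\alpha_i$ offset gives~\eqref{Eq:J^S}. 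The only care needed here is checking that the $\mu^S_i$ in the denominator cancels cleanly against the numerator of its own term.

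For Step~3, since $S\in\Pi^s_r\subseteq\Pi$, Theorem~\ref{theo:lower_bound} gives $L_B\le\mathbb{E}[J^S]$, the left inequality of~\eqref{eq:Randomized_guarantee}. Dividing~\eqref{Eq:J^S} by $L_B$ reproduces the ratio~\eqref{eq:SRP_opt_ratio}, so in fact $\mathbb{E}[J^S]=\rho^S L_B$ and the right inequality is an equality. The point worth making explicit --- and what turns this into a \emph{constant-factor} statement --- is that $\rho^S$ is bounded by a universal constant independent of $\{N,\alpha_i,p_i,L_i\}$: using $3L_i-1<3L_i$ termwise gives $\big(\sum_i\sqrt{\alpha_i(3L_i-1)/(2p_i)}\big)^2<\tfrac32\big(\sum_i\sqrt{\alpha_i L_i/p_i}\big)^2$, so the numerator of $\rho^S$ is at most $3\cdot\tfrac1{2N}(\sum_i\sqrt{\alpha_i L_i/p_i})^2$ plus $\tfrac1N\sum_i\alpha_i\le 3\cdot\tfrac1N\sum_i\alpha_i$, i.e.\ at most $3$ times the denominator, whence $\rho^S<3$. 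I do not anticipate a genuine obstacle: the argument is a textbook constrained minimization followed by bookkeeping, and the only delicate points are the algebraic cancellation of $\mu^S_i$ in Step~2 and stating the explicit bound $\rho^S<3$ in Step~3 rather than leaving $\rho^S$ as an opaque ratio, since the latter is what the constant-factor claim in the introduction refers to.
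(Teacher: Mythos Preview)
Your proposal is correct and follows essentially the same route as the paper's proof: apply Cauchy--Schwarz to the objective of~\eqref{OPT:EJ^s} to identify the optimizer~\eqref{eq:solutionRandomized}, substitute back into~\eqref{eq:EJ^s} to obtain~\eqref{Eq:J^S}, and then compare with~\eqref{performancelowerbound} to read off~\eqref{eq:SRP_opt_ratio}. Your write-up is in fact more careful than the paper's --- you explicitly justify why the constraint is tight and why equality in Cauchy--Schwarz characterizes the optimizer --- and your additional observation that $\rho^S<3$ uniformly is stated in the paper only in the discussion after the theorem, not in the proof itself.
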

\begin{proof}
To solve~\eqref{OPT:EJ^s} we apply Cauchy-Schwarz Inequality and immediately obtain the optimal SRP in~\eqref{eq:solutionRandomized}. Substituting~\eqref{eq:solutionRandomized} into~\eqref{eq:EJ^s} yields the EWSAoI expression in~\eqref{Eq:J^S}. Comparing~\eqref{Eq:J^S} and~\eqref{performancelowerbound} gives~\eqref{eq:Randomized_guarantee} and~\eqref{eq:SRP_opt_ratio}. 
Detailed derivations are provided in Appendix D.% in the technical report~\cite{Proof}.
\end{proof}

Notice that when \(L_i=1,\forall i\), the optimal SRP coincides with that in~\cite{kadota2018scheduling} and achieves an optimality ratio of \(\rho^S=2\). In the more realistic and general case when update lengths \(\{L_i\}_{i=1}^N\) are arbitrary, the optimal SRP attains an optimality ratio in the range \(\rho^S\in[2,3)\).

\subsection{No-Switching Randomized Policies (NSRP)}\label{sec: No-Switching Randomized Policy}

% Denote by \(\Pi^{ns}_r\) the class of \emph{no-switching} randomized policies. 
% Differ from the selection for packet in each slot, NSRP select update from sources.
% When the BS select sources, each source \(i\) will be selected with probability \(\mu^{ns}_i\). As soon as the first packet of the chosen source is successfully transmitted (i.e., \(u_i(t)c_i(t)=1 \text{ and } L_i(t)=1\)), it continues transmitting the remaining packets of the same update until completion. Afterward, the network returns to selecting sources to repeat this process. 

We denote by \(\Pi^{ns}_r\) the class of NSRPs. In contrast to SRPs, NSRPs  \(ns \in \Pi^{ns}_r\) do not switch sources between updates. %selects an \emph{update} from a source. When the BS selects a source, it chooses source \(i\) with probability \(\mu^{ns}_i\). 
Specifically, once the first packet of an update is successfully transmitted (i.e., \(d_i(t)=1\) and \(L_i(t)=L_i\)), the BS continues selecting the same source for transmission until the update %the remaining packets of that update until 
delivery is complete. In the slot following the successful transmission of the last packet of an update, the BS selects any source~\(i\) with scheduling probability \(\mu^{ns}_i \in (0,1]\), with $\sum_{i=1}^N \mu^{ns}_i \leq1$. Random selection of sources, according to $\{\mu^{ns}_i\}_{i=1}^N$, continues until the first packet is successfully transmitted. 
%After the update is fully delivered, the BS resumes source selection in a randomized manner to begin transmitting a new update.
%
% Here, the network employing no-switching policy 
% $ns$ alternates between a \emph{selecting phase} and a \emph{transmission phase}. In the selecting phase, each source \(i\) is chosen with probability \(\mu^{ns}_i\). As soon as the first packet of the chosen source is successfully transmitted (i.e., \(u_i(t)c_i(t)=1 \text{ and } L_i(t)=1\)), the network transitions to the transmission phase, where it continues transmitting the remaining packets of that same update until completion. Afterward, the network returns to the selecting phase to repeat this process. 
% 
Intuitively, NSRPs reduce the service time $S_i[m]$ by continuously transmitting an entire update from a single source, which is in line with the discussion in Remark~\ref{rem:service_time}. %as leveraging the insight presented in Proposition~\ref{prop: AoI expression with waiting/service time}.

% The policy does not adapt to the current AoI $h_i(t)$ or system time $z_i(t)$, but the number of remaining packets $L_i(t)$ are considered. However, since the scheduling decisions are sequential and deterministic when an update is selected, while and the BS select sources at random without taking account to $L_i(t)$ when selecting. Thus, each NSRP $ns$ can be characterized by the set of scheduling probabilities $\{\mu^{ns}_i \}^N_{i=1}$.

NSRPs do not consider the current AoI \(h_i(t)\) nor the system time \(z_i(t)\). However, NSRPs behave differently in case $L_i(t)=L_i$, when scheduling decisions are randomized $\{\mu^{ns}_i\}_{i=1}^N$, and in case $L_i(t)<L_i$, when scheduling decisions are deterministic.   %they do consider the number of remaining packets \(L_i(t)\). 
%However, the 
%scheduling decisions are fixed during the delivery of an update, and the BS randomly selects sources without considering \(L_i(t)\) %during the selection step. Thus, each NSRP \(ns\) is fully characterized by the set of scheduling probabilities \(\{\mu^{ns}_i\}_{i=1}^N\).
Each policy $ns$ is fully characterized by the set of scheduling probabilities \(\{\mu^{ns}_i\}_{i=1}^N\). 
Proposition~\ref{prop:EWSAoIRandomzedpropositionNS} provides an expression for the EWSAoI associated with an arbitrary NSRP $ns \in \Pi^{ns}_r$.

\begin{proposition}\label{prop:EWSAoIRandomzedpropositionNS}
For a network with parameters \(\{N,\alpha_i,p_i,L_i\}\) 
and an arbitrary NSRP \(ns \in \Pi^{ns}_r\) with scheduling 
probabilities \(\{\mu^{ns}_i\}_{i=1}^N\),
the EWSAoI is given by:
\begin{equation}\label{eq:EJ^{ns}}
\begin{aligned}
  \mathbb{E}[J^{ns}] 
  &=  \sum_{i=1}^N \frac{\alpha_i}{N} \Biggl[\frac{\mu^{ns}_i p_i}{\sum^N_{j=1}\mu^{ns}_j L_j}\Biggl(\frac{\mathbb{E}[W^2_i] + \mathbb{E}[S^2_i]}{2} 
   \\
    &+ \left(\frac{L_i-1}{p_i}\right)^2
+ 2\mathbb{E}[W^2_i]
      \left(\frac{L_i-1}{p_i}\right)\Biggl) +1
  \Biggr]. 
\end{aligned}
\end{equation}
Here, \(\mathbb{E}[S_i^2]\), \(\mathbb{E}[W_i]\), and \(\mathbb{E}[W_i^2]\) denote the second moment of the service time, the first moment of the waiting time, and the second moment of the waiting time, respectively. The term \(\mathbb{E}[Y_i^2]\) represents the second moment of the number of time slots between two consecutive transmissions from source $i$. %the time source \(i\) is selected and next selection. 
These quantities are given by
\begin{align}
    \mathbb{E}[S^2_i] &= \frac{(L_i - 1)(L_i - p_i)}{p_i^2}, \\
    \mathbb{E}[W_i]&=\frac{\sum^N_{j=1}\mu^{ns}_j L_j - \mu^{ns}_i(L_i-1)}{\mu^{ns}_i p_i},\\
    \mathbb{E}[W_i^2] &= \frac{1}{1- \sum_{j}\mu^{ns}_j +p_i\mu^{ns}_i }\Biggl[\mu^{ns}_i\left(1 + 2(1 - p_i)\mathbb{E}[W_i]\right) \nonumber\\&+ \sum_{j \neq i}\mu^{ns}_j\left(\mathbb{E}[Y_j^2] + 2L_j\mathbb{E}[W_i]\right)\Biggl], \\
   \mathbb{E}[Y_i^2]&= 2L_i -1 + \frac{(L_i -1)(L_i - p_i)}{p_i}.
\end{align}

\end{proposition}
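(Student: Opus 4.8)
The plan is to start from the sample-path identity in Proposition~\ref{prop: AoI expression with waiting/service time}, convert it to expectations by exploiting the regenerative structure of an NSRP, and then evaluate the first two moments of the waiting and service times that appear. First I would observe that, under any NSRP, the slot immediately after an update completes and the BS resumes randomized selection is a regeneration epoch: since the channels are memoryless and the generation policy keeps every non-transmitting full buffer refreshed, the network state at such a slot is always the same (source $i$ awaits a new first packet while every other source holds a full update). Hence, for each fixed $i$, the blocks $\{(W_i[m],S_i[m])\}_m$ are i.i.d.\ in $m$ and the sequence $\{S_i[m-1](W_i[m]+S_i[m])\}_m$ is stationary and ergodic. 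Letting $T\to\infty$ in~\eqref{performance for waiting and service time}, each $\Bar{\mathbb{M}}[\cdot]$ converges almost surely to the corresponding expectation; combined with the fact that $h_i(t)$ grows at most linearly (hence the relevant families are uniformly integrable), this shows that $\mathbb{E}[J^{ns}]$ equals the right-hand side of~\eqref{performance for waiting and service time} with every $\Bar{\mathbb{M}}[\cdot]$ replaced by $\mathbb{E}[\cdot]$.

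Next I would simplify that expression. Because $S_i[m-1]$ depends only on channel states during the $(m-1)$th service --- a time interval disjoint from the $m$th waiting and service periods --- it is independent of both $W_i[m]$ and $S_i[m]$, and all service times share one distribution; therefore $\mathbb{E}[S_i[m-1](W_i[m]+S_i[m])]=\mathbb{E}[S_i]\,\mathbb{E}[W_i+S_i]$ and the middle term collapses to $\mathbb{E}[S_i]$. Expanding $\mathbb{E}[(W_i+S_i)^2]=\mathbb{E}[W_i^2]+2\mathbb{E}[W_i]\mathbb{E}[S_i]+\mathbb{E}[S_i^2]$ and using the identity $\mathbb{E}[W_i+S_i]=\big(\sum_{j}\mu^{ns}_j L_j\big)/(\mu^{ns}_i p_i)$ (which also drops out of the recursion below) to factor out the prefactor $\mu^{ns}_i p_i/\sum_j\mu^{ns}_j L_j$, the expression takes the form of~\eqref{eq:EJ^{ns}}; it then remains to compute $\mathbb{E}[S_i]$, $\mathbb{E}[S_i^2]$, $\mathbb{E}[W_i]$, $\mathbb{E}[W_i^2]$.

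For the service time, once source $i$ is locked in, each of its remaining $L_i-1$ packets is delivered after an independent $\mathrm{Geometric}(p_i)$ number of slots, so $S_i$ is negative binomial~\cite{ross2014introduction} and $\mathbb{E}[S_i]=(L_i-1)/p_i$, $\mathbb{E}[S_i^2]=(L_i-1)(L_i-p_i)/p_i^2$. For the waiting time I would write a one-step renewal recursion, conditioning on the source the BS selects in the first slot of the waiting phase: with probability $\mu^{ns}_i$ it picks source $i$, giving $W_i=1$ if that channel is on (probability $p_i$) and $W_i=1+W_i'$ with $W_i'$ an independent fresh copy of $W_i$ otherwise; with probability $\mu^{ns}_j$, $j\neq i$, it picks source $j$ and spends $Y_j$ slots on it, where $Y_j=1$ if source $j$'s channel is off and $Y_j=1+S_j$ if it is on (source $j$ then being locked in through completion), after which an independent fresh copy $W_i'$ follows. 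Taking first and second moments turns this into two linear fixed-point equations for $\mathbb{E}[W_i]$ and $\mathbb{E}[W_i^2]$; evaluating $\mathbb{E}[Y_j]=(1-p_j)+p_j(1+\mathbb{E}[S_j])=L_j$ and $\mathbb{E}[Y_j^2]=(1-p_j)+p_j\,\mathbb{E}[(1+S_j)^2]=2L_j-1+(L_j-1)(L_j-p_j)/p_j$ and solving yields the stated closed forms, which substituted back into the expression from the previous paragraph give~\eqref{eq:EJ^{ns}}.

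The step I expect to be the main obstacle is the second moment of the waiting time. Unlike the single-packet case, the waiting period is not a clean geometric sum: each visit to a competing source $j$ can trigger a variable-length, heavy-tailed lock-in $Y_j$, so the recursion must carefully carry the term $\mathbb{E}[(Y_j+W_i')^2]$ --- in particular the cross term $2\mathbb{E}[Y_j]\mathbb{E}[W_i]$ --- and one must check that the in-lock-in service $S_j$ is independent of the residual waiting time $W_i'$. A smaller technical point is justifying the interchange of limit and expectation in the first paragraph, which rests on the regenerative structure together with the linear-in-$t$ bound on $h_i(t)$.
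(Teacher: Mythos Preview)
Your plan is correct and follows essentially the same route as the paper's proof: start from Proposition~\ref{prop: AoI expression with waiting/service time}, pass to expectations using the i.i.d./regenerative structure of the $(W_i[m],S_i[m])$ blocks under an NSRP, compute the service moments via the negative binomial, and compute the waiting moments via a one-step recursion that conditions on which source is selected and whether its channel is on (the paper casts this recursion as a first-passage-time analysis on the Markov chain in Fig.~\ref{fig: markov chain}, but the resulting equations for $\mathbb{E}[W_i]$, $\mathbb{E}[W_i^2]$, $\mathbb{E}[Y_j]$, $\mathbb{E}[Y_j^2]$ are identical to yours). Your added care about the limit--expectation interchange is a refinement the paper does not spell out.
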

\begin{proof}
First, we take the expectation of~\eqref{performance for waiting and service time} to obtain an expression for the EWSAoI. For a network employing a NSRP, the service time follows a negative binomial distribution~\cite{ross2014introduction} and the waiting time can be modeled as a Markov chain, from which its first and second order moments are derived via recurrence time analysis. Substituting these results into the EWSAoI expression yields~\eqref{eq:EJ^{ns}}. Detailed derivations are provided in Appendix E.% in the technical report~\cite{Proof}.
\end{proof}

From the expression for the EWSAoI in~\eqref{eq:EJ^{ns}}, we can find the optimal scheduling probabilities $\{\mu^{NS}_i\}_{i=1}^N$ by solving the optimization problem below: 
\begin{align}\label{OPT:EJ^ns}
\min_{\{\mu^{ns}_i\}_{i=1}^N}\mathbb{E}[J^{ns}], \quad   \text{s.t.}  \sum_{i=1}^N \mu^{ns}_i\leq 1
\end{align}
%The simple expression for the EWSAoI of SRP~\eqref{Eq:J^S} was amenable for a simple closed-form expression for the optimal scheduling probabilities. 
The complex expression for the EWSAoI in~\eqref{eq:EJ^{ns}} does not lend itself for a closed-form solution for the optimal scheduling probabilities $\{\mu^{NS}_i\}_{i=1}^N$. However,~\eqref{OPT:EJ^ns} is a convex optimization problem that can be solved numerically. In Sec.~\ref{sec:simulation}, we use a numerical solver to obtain the values of $\{\mu^{NS}_i\}_{i=1}^N$.

\subsection{Comparison of Randomized Policies}\label{sec:compare}

\begin{figure}
    \centering
   \includegraphics[trim=0 30 0 30,clip,width=0.9\linewidth]{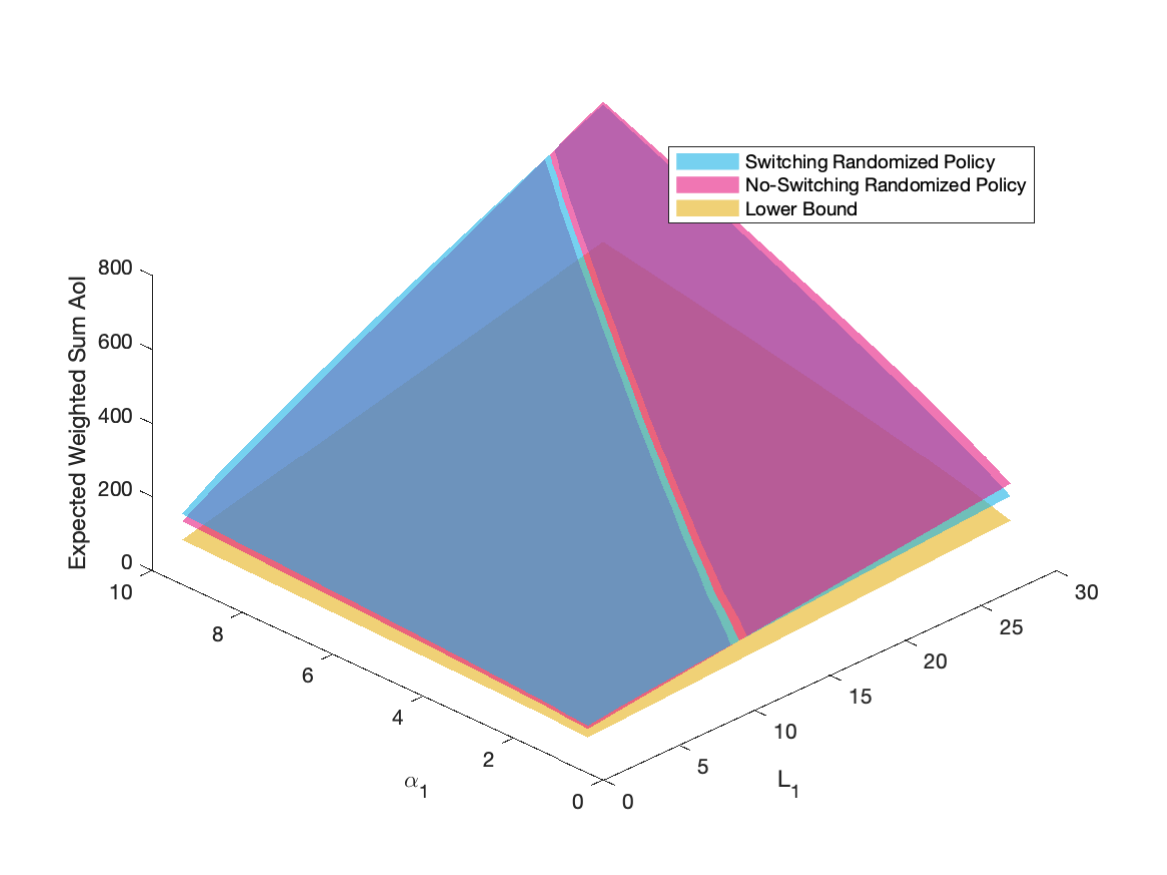}
    \caption{Simulation results of two-source networks with varying weight \(\alpha_1 \in \{1,2,\dots,10\}\) and update length \(L_1 \in \{2,4,\dots,30\}\), while \(\alpha_2 = 10\), \(L_2 = 2\), and $p_1=p_2=0.5$ remain fixed.}\label{fig: randomized}
    \vspace{-2em} 
\end{figure}

We now compare the performance of the optimal SRP and the optimal NSRP in symmetry and non-symmetric networks. %Figure~\ref{fig: randomized} considers a two-source network with fixed channel reliabilities \(p_1=p_2=0.5\). The parameters for source~1 vary over \(\alpha_1 \in \{1,2,\dots,10\}\) and \(L_1 \in \{2,4,\dots,30\}\), while source~2 has \(\alpha_2=10\) and \(L_2=2\). 

\begin{corollary}\label{cor: symmetrical randomized}
Consider a symmetric network with channel reliabilities \(p_i = p \in (0,1]\), update lengths \(L_i = L > 0\), and weights \(\alpha_i = \alpha > 0\) for all \(i\). Let \(\mathbb{E}[J^{NS}]\) and \(\mathbb{E}[J^{S}]\) be the EWSAoI achieved by the optimal no-switching and the optimal switching randomized policies, respectively. Then,
\begin{equation}
\mathbb{E}[J^{NS}]\leq \mathbb{E}[J^{S}]
\end{equation}
\end{corollary}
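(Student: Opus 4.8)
The plan is to specialize both the optimal SRP expression in~\eqref{Eq:J^S} and the optimal NSRP expression obtained from~\eqref{eq:EJ^{ns}} to the symmetric case $p_i=p$, $L_i=L$, $\alpha_i=\alpha$, and then compare the resulting closed forms. For the SRP side this is immediate: substituting into~\eqref{Eq:J^S} gives $\mathbb{E}[J^S]=\alpha+\alpha\,\frac{N(3L-1)}{2p}$, since $\sum_i\sqrt{\alpha(3L-1)/(2p)}=N\sqrt{\alpha(3L-1)/(2p)}$ and squaring cancels the $1/N$.

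For the NSRP side, I would first argue by symmetry that the optimal scheduling probabilities are uniform, $\mu^{ns}_i=1/N$ for all $i$: the objective~\eqref{eq:EJ^{ns}} and the constraint $\sum_i\mu^{ns}_i\le 1$ are both symmetric under permutation of sources, and since~\eqref{OPT:EJ^ns} is convex (as noted after Proposition~\ref{prop:EWSAoIRandomzedpropositionNS}), the symmetric point is optimal. With $\mu^{ns}_i=1/N$, I would evaluate each auxiliary quantity: $\mathbb{E}[S_i^2]=(L-1)(L-p)/p^2$, $\mathbb{E}[W_i]=\big(L-\tfrac1N(L-1)\big)/(p)\cdot\tfrac{1}{\,1/N\,}$ — wait, more carefully $\mathbb{E}[W_i]=\frac{\sum_j (1/N)L-(1/N)(L-1)}{(1/N)p}=\frac{L-(L-1)/N}{p/N}=\frac{NL-(L-1)}{p}$; then $\mathbb{E}[Y_i^2]=2L-1+(L-1)(L-p)/p$, and $\mathbb{E}[W_i^2]$ from the recursion with denominator $1-1+p/N=p/N$ and $\sum_{j\ne i}\mu^{ns}_j=(N-1)/N$. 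Plugging these into~\eqref{eq:EJ^{ns}} with the prefactor $\frac{\mu^{ns}_ip_i}{\sum_j\mu^{ns}_jL_j}=\frac{(1/N)p}{L}=\frac{p}{NL}$ yields a closed-form $\mathbb{E}[J^{NS}]$ as a rational function of $N,L,p$.

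The final step is to subtract: show $\mathbb{E}[J^{S}]-\mathbb{E}[J^{NS}]\ge 0$ for all $N\ge 1$, $L\ge 1$, $p\in(0,1]$. Both sides equal $\alpha$ plus $\alpha$ times something, so after dividing by $\alpha$ it reduces to comparing $\frac{N(3L-1)}{2p}$ against the NSRP bracket term; I expect the difference to factor as a manifestly nonnegative expression (a product/sum of terms like $(L-1)$, $(1-p)$, $(N-1)$, and positive constants over $p^2$), which establishes the inequality. The main obstacle is the bookkeeping in evaluating and simplifying $\mathbb{E}[W_i^2]$ and then the NSRP objective — the algebra is lengthy and error-prone — but it is entirely routine once the uniform-$\mu$ reduction is in place; I would relegate the detailed computation to an appendix and state only the two specialized closed forms and their difference in the main text. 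A useful sanity check is the single-packet case $L=1$: then $\mathbb{E}[S_i^2]=0$, service is trivial, and both policies should collapse to the same randomized policy, so the difference should vanish at $L=1$, confirming that $(L-1)$ is a factor of $\mathbb{E}[J^{S}]-\mathbb{E}[J^{NS}]$.
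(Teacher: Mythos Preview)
The paper does not actually supply a proof of this corollary: it is stated bare, followed only by an informal sentence about reduced service times. So there is no ``paper's proof'' to compare against, and your proposal is effectively filling a gap the authors left open.

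Your route is the natural one and is sound. Specializing~\eqref{Eq:J^S} to the symmetric case is immediate and your expression for $\mathbb{E}[J^S]$ is correct. For the NSRP side, the convexity-plus-permutation-symmetry argument that $\mu^{NS}_i=1/N$ is optimal is valid (average any minimizer over permutations; by convexity of~\eqref{OPT:EJ^ns} and permutation invariance of both objective and constraint, the symmetric point is also a minimizer). After that, everything reduces to substituting $\mu^{ns}_i=1/N$ into the moment formulas and comparing, which is tedious but mechanical. Your $L=1$ sanity check is exactly right: with single-packet updates the two policy classes coincide, so $(L-1)$ must divide $\mathbb{E}[J^S]-\mathbb{E}[J^{NS}]$.

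One caution on the bookkeeping: the displayed formula~\eqref{eq:EJ^{ns}} appears to carry a typo in its last bracket term (the factor multiplying $(L_i-1)/p_i$ should be $\mathbb{E}[W_i]$, not $\mathbb{E}[W_i^2]$, as is clear from the derivation in Appendix~\ref{Proof of Theorom prop:EWSAoIRandomzedpropositionNS}). When you carry out the substitution, work directly from the cleaner intermediate expression~\eqref{E[j^ns]=E[W.S.]} rather than the boxed formula, or you will not get the expected $(L-1)$ factor.
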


% \begin{remark}
% The trade-off between switching and no-switching randomized policies arises from the degree of asymmetry in the network parameters. The optimal NSRP outperforms the optimalSRP in \emph{semi-symmetric} networks. In contrast, in \emph{highly asymmetric} networks, theSRP achieves superior performance.
% \end{remark}

% This occurs because, under a no-switching policy, once the network enters the transmission phase for a source with relatively low weight but a large update length, the weighted AoI for other sources continues to increase until the selected update is completely delivered. In such situations, switching to another source earlier could reduce the overall weighted AoI more effectively; however, the no-switching structure prevents this flexibility. Corollary~\ref{cor: symmetrical randomized} shows that in fully symmetric networks, the no-switching approach consistently outperforms the switching approach by fully exploiting continuous transmissions for each source.

This corollary demonstrates that in symmetric networks the no-switching approach consistently outperforms the switching approach by leveraging continuous transmissions for each source, resulting in lower service times for the selected source, and thus lower EWSAoI.

% \begin{remark}
% In non-symmetric networks, however, a trade-off emerges between the two randomized policies. In semi-symmetric settings, the optimal NSRP still tends to outperform the switching variant. In contrast, in highly asymmetric networks theSRP can achieve superior performance. This is because, under a no-switching policy, once the network enters the transmission phase for a source with relatively low weight but a large update length, the weighted AoI for other sources continues to rise until the current update is completely delivered. In such cases, switching to another source earlier could more effectively reduce the overall weighted AoI; however, the no-switching structure precludes this flexibility.
% \end{remark}

Figure~\ref{fig: randomized} compares the EWSAoI of the optimal SRP and NSRP in a non-symmetric two-source network with fixed channel reliabilities \(p_1=p_2=0.5\). The parameters for source~1 vary over \(\alpha_1 \in \{1,2,\dots,10\}\) and \(L_1 \in \{2,4,\dots,30\}\), while source~2 has fixed \(\alpha_2=10\) and \(L_2=2\).

\begin{remark}
As can be seen in Fig.~\ref{fig: randomized}, in highly asymmetric networks, the optimal SRP can significantly outperform the optimal NSRP in terms of the EWSAoI. This result is in line with Fig.~\ref{fig:simple_non_sysmmetric} and Table~\ref{table:2sources} which showed switching policies that outperformed the no switching policy by more than $54\%$. 
%, in some extremely asymmetric networks, the optimal NSRP can be 50$\%$ worse than SRPs, as shown in Fig.~\ref{fig:simple_non_sysmmetric}. 
Intuitively, under a NSRP, once transmission begins for a source with a large update length, the AoI for other sources continues to rise until the update is fully delivered. In these cases, switching to another source with low update length earlier could reduce  the average AoI, as seen in Fig.~\ref{fig:simple_non_sysmmetric}, but the no-switching approach prevents such adaptive flexibility.
\end{remark}

\section{Max-Weight Policy}\label{sec: Max Weight}

In this section, we develop a Max-Weight policy~\cite{neely2022stochastic} designed to reduce the expected drift of a suitably constructed Lyapunov function at every time slot \(t\). The Lyapunov function outputs a nonnegative scalar that is high when the network is in \emph{undesirable} states. %However, identifying undesirable states becomes challenging when updates span multiple packets.
Prior works including~\cite{kadota2018scheduling,kadota2019minimizing,kadota2019scheduling} utilized Lyapunov functions and one-slot Lyapunov drift analysis that focused on AoI $h_i(t)$ and system times $z_i(t)$. 
While this approach is suitable for networks with $L_i=1,\forall i$ in which every packet transmission may lead to a reduction of AoI in the next time slot. This approach is not suitable for networks with large $L_i$ when the AoI reduction (i.e., the reward) may come in the distant future and may depend on the (stochastic) outcome of future scheduling decisions. This time-dependency and complexity also makes multi-slot Lyapunov drift analysis~\cite{neely2022stochastic} unsuitable. 
%When the Lyapunov function depends \emph{only} on AoI~\cite{kadota2018scheduling,kadota2019minimizing,kadota2019scheduling}, analyzing one-slot Lyapunov drift offers limited insights, as AoI decreases \emph{only} upon delivery of the last packet of each update. Furthermore, 
%analyzing multi-slot Lyapunov drift~\cite{neely2022stochastic} of such Lyapunov function is challenging, due to the combinatorial decision sequences and complex dependence on channel reliabilities.

To address this challenge, we draw inspiration from Remark~\ref{rem:service_time} to define a novel Lyapunov function that incorporates \emph{waiting time}, ``optimistic'' \emph{service time}, and \emph{throughput debt}~\cite{hou2009theory}, and is amenable to one-slot Lyapunov drift analysis. %We then design a Max-Weight policy that maintains desirable network states by controlling the growth of this Lyapunov function. 
Before developing the Max-Weight policy, we introduce the throughput debt, the proposed Lyapunov function, and the corresponding one-slot Lyapunov drift.

\noindent\textbf{Throughput Debt}. Let $x_i(t)$ denote the throughput debt associated with source $i$ at the beginning of slot $t$. The throughput debt is defined as $x_i(t+1) = t\bar{q}_i-\sum_{\tau=1}^td_i(\tau)$, 
% \begin{equation}
%     x_i(t+1) = tq_i-\sum_{\tau=1}^td_i(t)
% \end{equation}
where $\bar{q}_i$ is the long-term throughput target. The value of $t\bar{q}_i$ can be interpreted as the minimum number of packets that source $i$ should have delivered by slot $t + 1$ and $\sum_{\tau=1}^td_i(\tau)$ is the total number of packets actually delivered. %Let us define the operator $(.)^+ = \max\{(.), 0\}$ that computes the positive part of a scalar. 
Let the positive part of the throughput debt be $x^+_i (t) = \max\{x_i(t);0\}$. 
A large debt $x^+_i (t)$ indicates that source $i$ is lagging behind in terms of throughput. Notice that strong stability of the process $x^+_i (t)$, namely 
\begin{equation}
    \lim_{T\rightarrow\infty}\frac{1}{T}\sum_{t=1}^T\mathbb{E}[x^+_i (t)] < \infty
\end{equation}
is sufficient to establish that the long-term throughput is larger than the target, i.e., $q_i \geq \bar{q}_i$.~\cite[Theorem 2.8]{neely2010stability}

\noindent\textbf{Lyapunov Function}. We propose the following Lyapunov function
\begin{equation}\label{lyapunov function}
\begin{aligned}
        \mathcal{L}(t)&=\sum_{i=1}^{N}\beta_{i}\left[h_{i}(t)-z_{i}(t)\right]^2+\sum_{i=1}^{N}\gamma_{i}\left[z_{i}(t)+L_{i}(t)\right]^2\\&+\sum_{i=1}^{N}\frac{V}{2}\left[x_{i}^{+}(t)\right]^{2}
\end{aligned}
\end{equation}
Notice that \(h_i(t) - z_i(t)\) and \(z_i(t) + L_i(t)\) capture the \emph{waiting time} and an ``optimistic'' \emph{service time} of the information update currently in source $i$, respectively. The service time is optimistic as it assumes that all remaining packets will take one slot to be delivered. %The $x_{i}^{+}(t)$ term represents the throughput debt, as defined before. 
%Notice that \(h_i(t) - z_i(t)\) captures the current waiting time for source~\(i\), and \(z_i(t) + L_i(t)\) reflects an ``optimistic'' service time of the freshest update generated by source $i$. This service time is optimistic as it assumes that all remaining packets will take one slot to be delivered. %The $x_{i}^{+}(t)$ term represents the throughput debt, as defined before. 
The positive hyper-parameters \(\beta_i\), \(\gamma_i\), and \(V\) are used to tune the Max-Weight policy to different network configurations. From Remark~\ref{rem:service_time}, we know that service time contributes more to the EWSAoI than waiting time, thus, \(\gamma_i\) should be set to a higher value than \(\beta_i\).%Intuitively, by minimizing the Lyapunov Drift, we can jointly minimizing the waiting time and service time, resulting in the minimization of the EWSAoI. Notably, service time contributes more to the EWSAoI, thus, \(\gamma_i\) should take a higher value than \(\beta_i\).

\noindent\textbf{One-slot Lyapunov Drift}. Let the network state observed by the BS at the beginning of slot~\(t\) be \(\mathbb{S}(t) := \{h_i(t), z_i(t), L_i(t), x_i(t)\}_{i=1}^N\). The one-slot Lyapunov drift is defined as
\begin{equation}\label{lyapunov Drift}
    \Delta\left(\mathbb{S}(t)\right):=\mathbb{E}\left[\mathcal{L}(t+1)-\mathcal{L}(t)|\mathbb{S}(t)\right]
\end{equation}
%
%To develop the Max-Weight policy, we analyze the drift expression in~\eqref{lyapunov Drift}. 
By substituting the evolution of \(L_i(t)\), \(z_i(t)\), and \(h_i(t)\) from~\eqref{Levolves},~\eqref{zevolves} and~\eqref{hevolves}, respectively, into the drift expression in~\eqref{lyapunov Drift} and performing algebraic manipulations, we obtain an upper bound on \(\Delta\bigl(\mathbb{S}(t)\bigr)\). The resulting bound is expressed in~\eqref{lyapunov drift upper bound1}--\eqref{lyapunov drift upper bound3}, with detailed steps provided in Appendix F.% in the technical report~\cite{Proof}.
\begin{equation}\label{lyapunov drift upper bound1}
    \Delta\left(\mathbb{S}(t)\right)\leq B(t)-\sum_{i=1}^{N}p_{i}\mathbb{E}\left[u_{i}(t)|\mathbb{S}(t)\right]C_{i}(t)
\end{equation}
where
\begin{equation}\label{lyapunov drift upper bound2}
\begin{aligned}
B(t)=&\sum_{i=1}^{N}\beta_{i}\mathbb{I}_{L_i(t)=L}\left[2h_{i}(t)-1\right]+V\left[x^+_i(t)\bar{q}_i+\frac{1}{2}\right] \\+&\sum_{i=1}^{N}\gamma_{i}\mathbb{I}_{L_i(t)>1}\left[2\left(z_{i}(t)+L_{i}(t)\right)-1\right],
\end{aligned}
\end{equation}
\begin{equation}\label{lyapunov drift upper bound3}
\begin{aligned}
C_{i}(t) & =\beta_{i}\mathbb{I}_{L_i(t)=L}\left[2h_{i}(t)-1\right]\\&+\beta_{i}\mathbb{I}_{L_i(t)=1}\left[h_{i}^{2}(t)-2h_{i}(t)z_{i}(t)\right]\\
 & +\gamma_{i}\mathbb{I}_{L_i(t)>1}\left[2z_{i}(t)+2L_{i}(t)-1\right]\\&+\gamma_{i}\mathbb{I}_{L_i(t)=1}\left[\left(z_{i}(t)+2\right)^{2}-\left(L_{i}+1\right)^{2}\right]+ Vx^+(t).
\end{aligned}
\end{equation}
The values of $B(t)$ and $C_{i}(t)$ can be easily calculated by any admissible policy and thus can be used for making scheduling decisions in real-time. The remaining update length \(L_i(t)\) appears in~\eqref{lyapunov drift upper bound1}--\eqref{lyapunov drift upper bound3} due to the dependence on evolution of \(z_i(t)\) and \(h_i(t)\) specified in \eqref{zevolves} and \eqref{hevolves}. 

\noindent\textbf{Max-Weight policy.} To minimize the upper bound~\eqref{lyapunov drift upper bound1}, the Max-Weight (MW) policy selects, in each slot $t$, the source with highest value of $C_{i}(t)$, with ties being broken arbitrarily. Intuitively, by minimizing the one-slot Lyapunov drift, the Max-Weight policy will jointly minimize the waiting time and service time, resulting in low EWSAoI. 

Theorem~\ref{theo: Performance Bound for MW} provides a constant factor optimality guarantee for the MW policy. Before introducing Theorem~\ref{theo: Performance Bound for MW}, we define the long-term throughput associated with the lower bound in Theorem~\ref{theo:lower_bound} (for details, please refer to the proof of Theorem~\ref{theo:lower_bound})
%\emph{To the best of our knowledge, these are the first performance guarantees for scheduling policies that run in  with performance guarantee in the challenging network with large updates.}
%
%Let the 
\begin{equation}
q_i^{L_B} = \frac{\sqrt{\frac{\alpha_i\,L_i\,p_i}{2}}}{\sum_{j=1}^N \sqrt{\frac{\alpha_j\,L_j}{2\,p_j}}}    
\end{equation}
%denote the long-term throughput that achieves the universal lower bound in . 
%, we proposed following Theorem~\ref{theo: Performance Bound for MW} to provide a constant factor optimal ratio of Max-Weight policy.
\begin{theorem}\label{theo: Performance Bound for MW}
    For a network with parameters $\{N,\alpha_i,p_i,L_i\}$, by choosing the constants $\beta_{i}=\frac{\alpha_i}{q_i^{L_B}}, \gamma_i=\frac{\alpha_i}{q_i^{L_B}\sqrt{p_i}}$, $V>0$ small enough (see Lemma~\ref{lem:Ci-ordering}), and $\bar{q}_i=q^{L_B}_i-\epsilon$, where $\epsilon\rightarrow0$, the optimality ratio of Max-Weight policy is such that
\begin{equation}\label{eq: optimal ratio MW}
\rho^{MW}=6+\frac{\sqrt{\varPsi}}{NL_B}
\end{equation}
\vspace{-0.4em}
where
\begin{equation}
\begin{aligned}
&\varPsi=\left(\sum_{i=1}^{N}\alpha_i\right)\Biggl(\sum_{i=1}^{N}\frac{\alpha_i}{\sqrt{p_i}}\Bigl[5\frac{L^2_{i}\sqrt{p_i}}{\bar{q}^2_i}-\frac{L_{i}}{\bar{q}_{i}}\\&-\frac{L_{i}\sqrt{p_i}}{\bar{q}_{i}}+2\frac{L^2_i}{\bar{q}^2_i}+2\frac{L_i^2}{\bar{q}_i}+\frac{-L_i^2+8L_i+24}{2}\Bigl]\Biggl)
\end{aligned}
\end{equation}
\end{theorem}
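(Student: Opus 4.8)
## Proof Proposal for Theorem~\ref{theo: Performance Bound for MW}

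The plan is to combine the one-slot Lyapunov drift bound~\eqref{lyapunov drift upper bound1}--\eqref{lyapunov drift upper bound3} with a comparison against the stationary randomized policy whose scheduling probabilities induce the throughput vector $\{q_i^{L_B}\}_{i=1}^N$ — call it policy $\sigma$. The first step is to note that the Max-Weight policy minimizes the RHS of~\eqref{lyapunov drift upper bound1} over all admissible scheduling decisions in each slot, hence $\Delta(\mathbb{S}(t))$ under MW is no larger than the same expression evaluated under $\sigma$. Because $\sigma$ schedules independently of $\mathbb{S}(t)$ with $\mathbb{E}[u_i^\sigma(t)] = q_i^{L_B}/p_i$, the drift inequality becomes $\Delta(\mathbb{S}(t)) \le B(t) - \sum_i q_i^{L_B} C_i(t)$, where $B(t)$ and $C_i(t)$ are the same state-dependent quantities. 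Substituting the prescribed $\beta_i = \alpha_i/q_i^{L_B}$ and $\gamma_i = \alpha_i/(q_i^{L_B}\sqrt{p_i})$, the terms $q_i^{L_B}\beta_i$ and $q_i^{L_B}\gamma_i$ collapse to $\alpha_i$ and $\alpha_i/\sqrt{p_i}$, which is precisely the normalization that makes the negative drift terms cancel the AoI-related growth terms in $B(t)$ up to a constant factor.

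Next I would take a telescoping sum of the drift inequality over $t = 1,\dots,T$, divide by $T$, and let $T\to\infty$. Using that $\mathcal{L}(t)\ge 0$ and $\mathcal{L}(1)$ is finite, the left side contributes nothing in the limit, yielding a bound of the form $\lim_{T\to\infty}\frac{1}{T}\sum_t \mathbb{E}\big[\sum_i q_i^{L_B} C_i(t) - B(t)\big] \le 0$. The crucial bookkeeping step is to extract from $B(t) - \sum_i q_i^{L_B} C_i(t)$ a piece that is a positive multiple of $\sum_i \alpha_i \mathbb{E}[h_i(t)]$ (coming from the $\beta_i \mathbb{I}_{L_i(t)=L}[2h_i(t)-1]$ terms and the $\gamma_i$ service-time terms) minus an absolute constant $\Psi$-type remainder plus a term linear in $x_i^+(t)$. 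Setting $\bar q_i = q_i^{L_B} - \epsilon$ and using $V$ small makes the throughput-debt contribution vanish in the $\epsilon\to 0$, and one invokes~\cite[Theorem 2.8]{neely2010stability} to confirm $x_i^+(t)$ is strongly stable so it does not pollute the time-average. After rearranging, one obtains $\frac{1}{N}\sum_i \alpha_i \mathbb{E}[h_i(t)]$ time-averaged is bounded by $6 L_B + \frac{\sqrt{\Psi}}{N}$, where the factor $6$ emerges from the ratio between the quadratic coefficients in the Lyapunov function and the optimal throughput allocation (analogous to how the factor $2$ or $3$ arose in Theorem~\ref{theo:stationary_performance} but inflated because the Lyapunov function squares both waiting and service time separately rather than their sum). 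Dividing by $L_B$ gives $\rho^{MW} = 6 + \sqrt{\Psi}/(N L_B)$.

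The main obstacle I anticipate is controlling the term $\beta_i \mathbb{I}_{L_i(t)=1}[h_i^2(t) - 2 h_i(t) z_i(t)]$ and the $\gamma_i \mathbb{I}_{L_i(t)=1}$ term in $C_i(t)$: these are quadratic in the age and system time and carry the wrong sign structure, so one must argue that on the event $L_i(t)=1$ the quantities $h_i(t)$ and $z_i(t)$ are comparable in a way that keeps $h_i^2(t) - 2h_i(t) z_i(t) \le (z_i(t)+L_i)^2$ or similar, so the $\gamma_i$ term absorbs it. This is exactly where the assumption $V$ small (Lemma~\ref{lem:Ci-ordering}) and the specific choice of $\gamma_i$ relative to $\beta_i$ matter — they guarantee the ordering of the $C_i(t)$ weights is preserved so that the Max-Weight comparison against $\sigma$ remains valid and the leftover constants can be collected into $\Psi$. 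A secondary technical point is justifying the interchange of limit and expectation and the finiteness of the time-averaged state moments, which follows from the strong stability of all the relevant processes once $\bar q_i < q_i^{L_B}$ strictly; the $\epsilon \to 0$ limit is then taken at the very end, after the optimality ratio has been expressed as a continuous function of $\bar q_i$.
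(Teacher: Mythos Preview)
Your proposal has a genuine gap at the step you flagged as the ``main obstacle,'' and the fix you suggest is not the right one. After telescoping the $\sigma$-comparison inequality you arrive at
\[
0\;\le\;\lim_{T\to\infty}\frac{1}{T}\sum_{t}\mathbb{E}\Bigl[B(t)-\sum_i q_i^{L_B}C_i(t)\Bigr],
\]
and to get an AoI bound you need to \emph{lower}-bound the quadratic pieces of $\sum_i q_i^{L_B}C_i(t)$, in particular $q_i^{L_B}\beta_i\,\mathbb{E}\bigl[\mathbb{I}_{L_i(t)=1}(h_i^2(t)-2h_i(t)z_i(t))\bigr]$, by something like $(\mathbb{E}[h_i(t)]-\text{const})^2$. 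The indicator $\mathbb{I}_{L_i(t)=1}$ blocks this: under MW the event $\{L_i(t)=1\}$ is correlated with $h_i(t)$ in an unknown way, and there is no reason for $\mathbb{E}[\mathbb{I}_{L_i(t)=1}h_i^2(t)]$ to control $\mathbb{E}[h_i(t)]^2$. Your proposed inequality ``$h_i^2-2h_iz_i\le(z_i+L_i)^2$'' goes in the wrong direction and would not help even if it held.

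The paper handles this differently and the missing idea is Lemma~\ref{lem:Ci-ordering} together with Lemma~\ref{lem:C-last-avg}. The paper does \emph{not} replace $u_i^{MW}$ by $u_i^\sigma$ in the main drift step; it keeps $q_i^{MW}$ and writes the LHS as $q_i^{MW}\,\mathbb{E}[\mathbb{I}_{L_i(t)=1}\mid u_i(t)=1]\,\mathbb{E}[\cdot\mid u_i(t)=1,L_i(t)=1]$. It then uses three facts in sequence: (i) $\mathbb{E}[\mathbb{I}_{L_i(t)=1}\mid u_i(t)=1]=1/L_i$; (ii) Lemma~\ref{lem:Ci-ordering}, which shows that within each update $C_i(t)$ at the last successful packet dominates $C_i(t)$ at every earlier successful packet (this is what ``$V$ small'' and $\gamma_i\ge\beta_i$ actually buy), yielding $\mathbb{E}[C_i(t)\mid L_i(t)=1,u_i(t)=1]\ge\mathbb{E}[C_i(t)\mid u_i(t)=1]$; and (iii) the MW argmax property, giving $\mathbb{E}[C_i(t)\mid u_i(t)=1]\ge\mathbb{E}[C_i(t)]$. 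Chaining these strips all conditioning and all indicators, after which Jensen applied to $(h_i-2z_i)^2$ and the bound $\mathbb{E}[z_i(t)]\le L_i/\bar q_i$ give the quadratic lower bound; a final Cauchy--Schwarz turns it into~\eqref{eq: optimal ratio MW}. Your reading of Lemma~\ref{lem:Ci-ordering} as ``ensuring the comparison with $\sigma$ remains valid'' is a misinterpretation; the comparison with an SRP is used only for the throughput lemma (Lemma~\ref{mw lemma}), not for the AoI bound.
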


\begin{proof}
First, we prove that if there exists a SRP \(s\) satisfying \(p_i\mu^s_i\geq \bar{q}_i\) for all \(i\), then the MW policy also satisfies the throughput targets \(\{\bar{q}_i\}_{i=1}^N\). %This is established using proof techniques similar to those in~\cite[Appendix B]{kadota2019scheduling}. 
Next, we perform algebraic manipulations to further bound the inequality in~\eqref{lyapunov drift upper bound1}. We emphasize that, due to the dependency on $L_i(t)$, these algebraic manipulations departed from the traditional Lyapunov drift analysis commonly found in prior works including~\cite{kadota2018scheduling,kadota2019minimizing,kadota2019scheduling}. 
%simplify the resulting inequality by analyzing the necessary and sufficient conditions for update delivery and eliminating or replacing extraneous terms via appropriate bounds. 
Finally, by comparing MW with the Lower Bound, we obtain~\eqref{eq: optimal ratio MW}.
Detailed derivations are provided in Appendix~\ref{Proof of Theorem theo: Performance Bound for MW}.
\end{proof}

\begin{remark}
The first term in \(\varPsi\) scales as \(\mathcal{O}(N)\), while the second term scales as \(\mathcal{O}(N^3)\) due to \(1/\bar{q}_i\) being \(\mathcal{O}(N)\). Consequently, \(\sqrt{\varPsi}\) scales as \(\mathcal{O}(N^2)\), matching the scaling of \(N L_B\). Hence, the optimality guarantee of the MW policy is bounded by a constant, irrespective of the network size \(N\).
\end{remark}

Numerical results in Sec~\ref{sec:simulation} show that MW outperforms both optimal SRP and optimal NSRP in every network configuration simulated. However, by comparing Theorems~\ref{theo: Performance Bound for MW} and~\ref{theo:stationary_performance}, it might seem that the optimal SRP yields a better performance than MW. This is because the analysis associated with MW is significantly more challenging, leading to an optimality ratio $\rho^{MW}$ that is looser than $\rho^S$. \emph{To the best of our knowledge, these are the first policies with a constant factor optimality guarantee in terms of AoI for networks with different update lengths and unreliable channels.}

\section{Simulation Results}\label{sec:simulation}

In this section, we evaluate the performance of several scheduling policies in terms of their EWSAoI. Specifically, we compare the following policies:
 the optimal SRP proposed in Section~\ref{sec: Switching Randomized Policy}.
 the optimal NSRP proposed in Section~\ref{sec: No-Switching Randomized Policy};
 the Greedy policy in which the BS selects the source with the highest \(h_i(t)\) in each slot $t$;
 the Max-Weight policy for \(L=1\) (MWL1) proposed in~\cite{kadota2018scheduling} in which the BS selects the source with highest value of $\sqrt{\alpha_ip_i}h_i(t)$ in each slot $t$;
 and the Max-Weight policy proposed in Section~\ref{sec: Max Weight}.
Their performance is benchmarked against the lower bound derived in Sec.~\ref{sec:lower_bound}.

\begin{figure}
    \centering
    \includegraphics[trim=0 0 0 8,clip,width=1\linewidth]{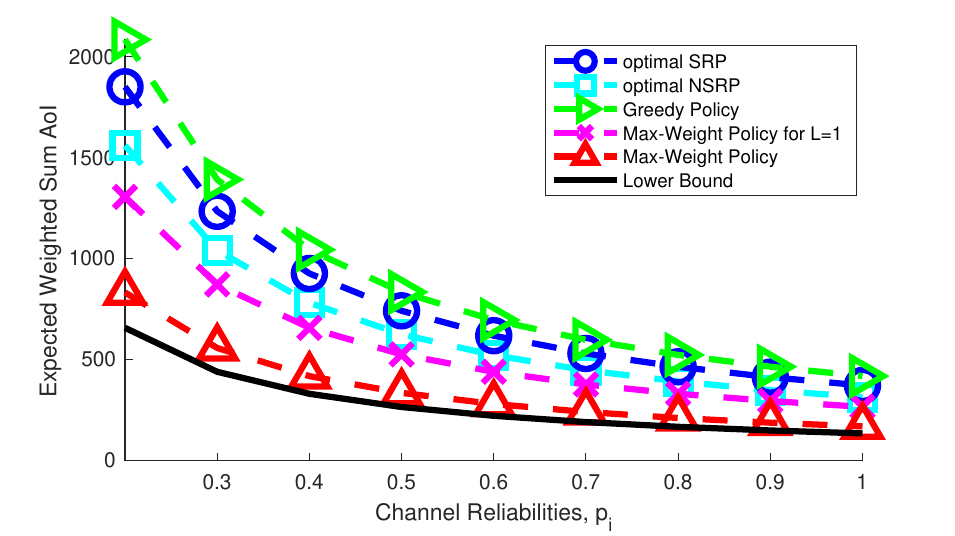}
    \caption{Simulation results for networks with varying channel reliabilities. The network comprises \(N=10\) sources equally divided into Class~1  with \(\alpha_i=5\) and \(L_i=2\) and Class~2  with \(\alpha_i=1\) and \(L_i=50\). Channel reliabilities vary over \(p_i \in \{0.2,\,0.15,\,\ldots,\,1\}\) for all sources.}
    \label{fig:performance varying channel reliabilities}
    \vspace{-1.5em} 
\end{figure}

\begin{figure}
    \centering
    \includegraphics[trim=0 0 0 8,clip,width=1\linewidth]{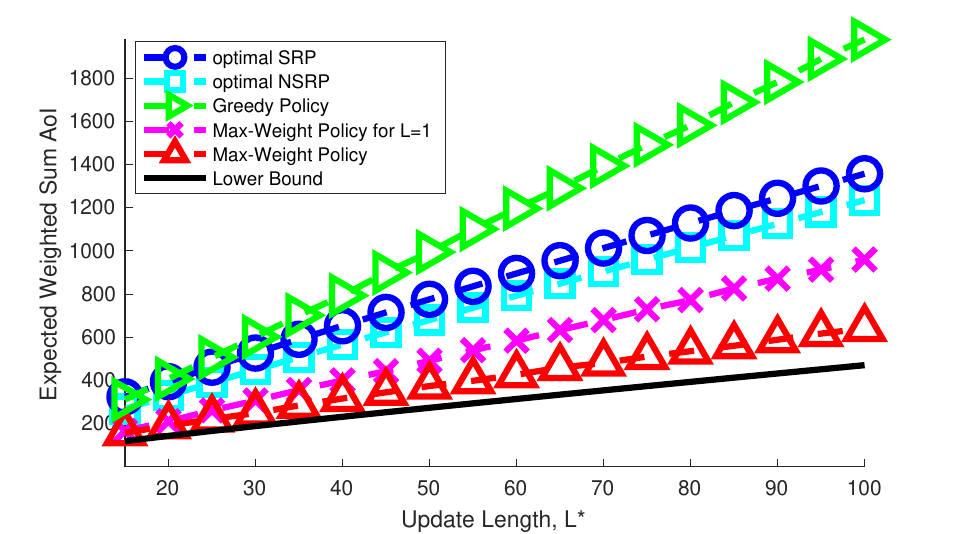}
    \caption{Simulation results for networks with varying update lengths. The network comprises \(N=10\) sources equally divided into Class~1  with \(\alpha_i=5\), \(p_i=0.8\), and \(L_i=2\) and Class~2  with \(\alpha_i=1\) and \(p_i=0.4\). Update lengths for Class~2 are parameterized as \([L^*-2,\,L^*-1,\,L^*,\,L^*+1,\,L^*+2]\) with \(L^* \in \{15,\,20,\,\ldots,\,100\}\).}
    \label{fig:performance varying update length}
    \vspace{-2em}  
\end{figure}

\begin{figure}[h]
    \centering
\includegraphics[trim=0 0 0 10,clip,width=1\linewidth]{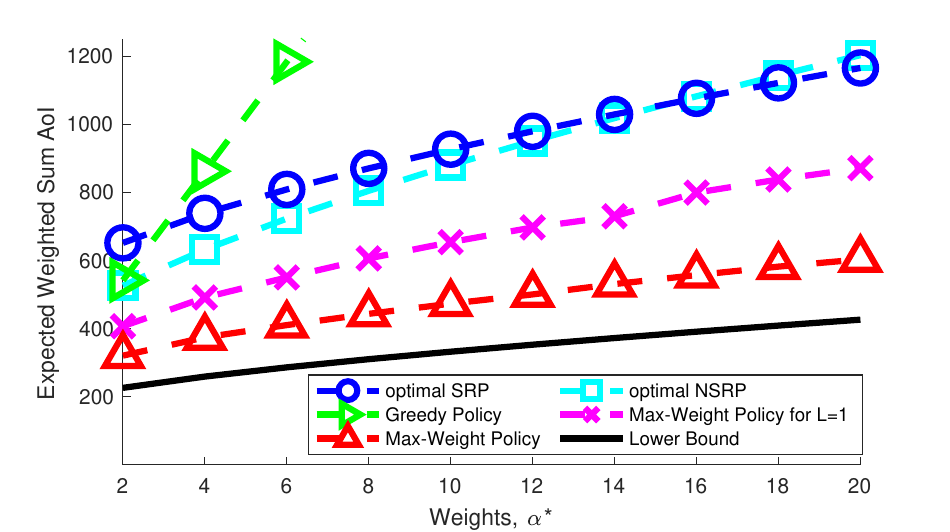}
    \caption{Simulation results for networks with varying weights. The network comprises \(N=10\) sources equally divided into Class~1 with variable priorities \(\alpha_i \in \{2,\,4,\,\ldots,\,20\}\) with fixed \(L=2\) and \(p_i=0.8\), and Class~2 sources have fixed parameters \(\alpha_i=1\), \(L_i=50\), and \(p_i=0.4\).}
\label{fig:performance varying weights}
    \vspace{-1em} 
\end{figure}

\begin{figure}
    \centering
\includegraphics[trim=0 2 0 3,clip,width=1\linewidth]{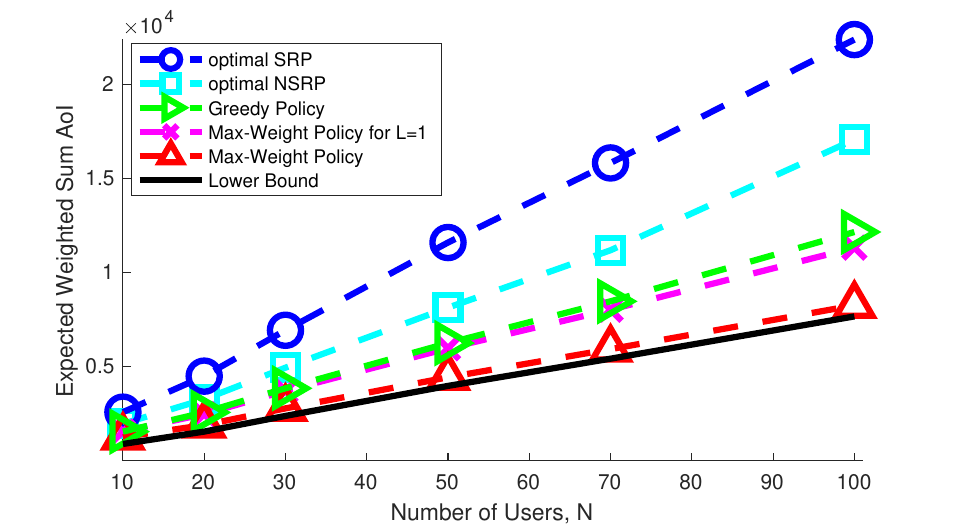}
    \caption{Simulation results for networks with a varying number of sources. The network comprises \(N\) sources with \(N\in\{10,20,30,50,70,100\}\). For each source, the weight is drawn from \(\alpha_i \sim U[1,10]\), the channel reliability from \(p_i \sim U[0.5,1]\), and each source is equally likely to belong to Class~1 (with \(L_i\sim U[2,5]\)) or Class~2 (with \(L_i\sim U[20,100]\)).}
    \label{fig:performance varying number of users}
    \vspace{-2em} 
\end{figure}

% In Fig.~\ref{fig:performance varying channel reliabilities}, we simulate a network with \(N=10\) sources, with 5 sources in Class~1 and 5 in Class~2. In this simulation, Class~1 sources are assigned a priority \(\alpha_i = 5\) and an update length \(L_i = 2\), while Class~2 sources are configured with a priority \(\alpha_i = 1\) and an update length \(L_i = 50\) (for all \(i\) in Class~2). The channel reliabilities for all sources vary over the set \(p_i \in \{0.2, 0.15, \ldots, 1\}\).

% In Fig.~\ref{fig:performance varying update length}, we simulate a network with \(N=10\) sources divided equally into two classes. Class~1 sources are configured with a priority \(\alpha_i = 5\), channel reliability \(p_i = 0.8\), and a fixed update length \(L_i = 2\). Class~2 sources have a priority \(\alpha_i = 1\), channel reliability \(p_i = 0.4\), and update lengths given by \([L^*-2,\,L^*-1,\,L^*,\,L^*+1,\,L^*+2]\), with \(L^* \in \{15,\,20,\,\ldots,\,100\}\).

% In Fig.~\ref{fig:performance varying weights}, we simulate a network with \(N=10\) sources, with 5 sources in Class~1 and 5 in Class~2. In this simulation, Class~1 sources are configured with a priority \(\alpha_i^* \in \{2,\,4,\,\ldots,\,20\}\). varying along the x-axis, channel reliability \(p_i = 0.8\), and a fixed update length \(L_i = 2\)., while Class~2 sources are configured with a priority \(\alpha_i = 1\), channel reliability \(p_i = 0.4\), and a fixed update length \(L_i = 20\). 

We consider two types of sources in our simulations: \emph{Class~1} sources are characterized by high weight \(\alpha_i\) and small update length \(L_i\); and 
% , representing more time-sensitive URLLC traffic. 
\emph{Class~2} sources with low weight \(\alpha_i\) and large update length \(L_i\).
% , corresponding to eMBB traffic.

In Figures~\ref{fig:performance varying channel reliabilities}, \ref{fig:performance varying update length}, and \ref{fig:performance varying weights}, we consider networks with \(N=10\) sources, equally divided into five Class~1 and five Class~2 sources. 
In Fig.~\ref{fig:performance varying channel reliabilities}, Class~1 sources are configured with a priority \(\alpha_i = 5\) and a fixed update length \(L_i = 2\), while Class~2 sources have \(\alpha_i = 1\) and \(L_i = 50\); in this setup, the channel reliabilities for all sources vary over the set \(p_i \in \{0.2, 0.15, \ldots, 1\}\). 
In Fig.~\ref{fig:performance varying update length}, the parameters for Class~1 sources remain the same (\(\alpha_i = 5\), \(p_i = 0.8\), \(L_i = 2\)), and for Class~2 sources, while the priority is fixed at \(\alpha_i = 1\) and the channel reliability at \(p_i = 0.4\), the update lengths vary as \([L^*-2,\,L^*-1,\,L^*,\,L^*+1,\,L^*+2]\) with \(L^*\) taking values from \(\{15,\,20,\,\ldots,\,100\}\). 
In Fig.~\ref{fig:performance varying weights}, Class~1 sources have variable priorities \(\alpha_i^* \in \{2,\,4,\,\ldots,\,20\}\) with \(p_i = 0.8\) and \(L_i = 2\), while Class~2 sources have \(\alpha_i = 1\), \(p_i = 0.4\), and \(L_i = 50\).

In Fig.~\ref{fig:performance varying number of users}, we simulate networks with the number of sources \(N\) varying over \(\{10, 20, 30, 50, 70, 100\}\). Moreover, each source is equally likely to belong to either Class~1 or Class~2. Class~1 sources have an update length \(L_i\sim U[2,5]\), while Class~2 sources have \(L_i\sim U[20,100]\). For each source, the weight \(\alpha_i \sim U[1,10]\) and the channel reliability \(p_i \sim U[0.5,1]\). 

Our results clearly demonstrate the superior performance of the Max-Weight policy. Figures~\ref{fig:performance varying channel reliabilities}, \ref{fig:performance varying update length}, and \ref{fig:performance varying weights} show that the Max-Weight policy achieves near-optimal performance across various scenarios. In particular, Max-Weight policy consistently outperforms MWL1 especially in networks with non-symmetric update lengths, we observe that EWSAoI improves by 57$\%$ in Fig~\ref{fig:performance varying channel reliabilities}, 30$\%$ in Fig~\ref{fig:performance varying update length}, and 33$\%$ in Fig.~\ref{fig:performance varying weights} in average. Moreover, Fig.~\ref{fig:performance varying number of users} indicates that the performance of the Max-Weight policy remains robust as the network size increases.

%Moreover, they show that  MW-E and MW-EnF  approach the performance of MW-F when $\theta_i=0, \forall i$, regardless of channel reliabilities and packet generation processes. 
% Interestingly, Max-Weight policy with Full Knowledge of $h_i(t)$ falls behind MW-E, MW-EnF, and the Optimal Randomized Policy in many scenarios. 
% This is because the delayed and unreliable channel between the BS and the destinations creates uncertainty regarding the reception of packets at the destinations. 
% %Notice that the BS is at a vantage point with respect to the destinations. 
% % Specifically, the BS knows in advance that packets were forwarded to the destinations. 
% %Specifically, the BS probabilistically determines in advance whether the packets were forwarded to the destinations. 
% Both MW-E and MW-EnF leverage their knowledge of channel reliabilities $p_i^D$ and the available information in the BS observation $\mathbb{O}(t)$ to estimate the future AoI $\hat{h}_i(t+\theta_i)$ at the destinations. In contrast, MW-F leverages its (impractical) real-time knowledge of $h_i(t)$. Critically, MW-F does not attempt to estimate the future AoI. 
%prevents the destination from immediately observing the transmission outcomes of the packets transmitted during slots $[t-\theta_i,t]$ at beginning of slot~$t$. 

\section{Conclusion}\label{sec: Final Remarks}
In this paper, we considered a single-hop wireless network with a number of nodes transmitting time-sensitive updates to a Base Station over unreliable channels, where each updates consists of multiple packets. We addressed the problem of minimizing the Expected Weighted Sum AoI of the network with large updates. Three low-complexity scheduling policies were developed: optimal SRP, optimal NSRP, and Max-Weight policy. The performance of each policy was evaluated both analytically and through simulation. The Max-Weight policy demonstrated the best performance in terms of AoI. Interesting extensions include consideration of jointly designing the sampling and scheduling algorithm, general non-linear functions of AoI, fairness of AoI among the different sources with different update sizes, and distributed scheduling schemes.

\bibliographystyle{IEEEtran}
\bibliography{references}

\onecolumn  % switch to one column
\appendices

\section{Proof of Propsition~\ref{prop: AoI expression with waiting/service time}}\label{Proof of Propsition prop: AoI expression with waiting/service time}

\textbf{Propsition~\ref{prop: AoI expression with waiting/service time}. }
The infinite-horizon Weighted Sum AoI achieved by scheduling policy $\pi$, namely \(J^\pi\), can be written as
\begin{equation}
J^\pi = \lim_{T\rightarrow \infty }\sum_{i=1}^{N}\frac{\alpha_i}{N}
\Biggl[
\frac{\Bar{\mathbb{M}}\bigl[(W_i[m]+S_i[m])^2\bigr]}{2\,\Bar{\mathbb{M}}\bigl[W_i[m]+S_i[m]\bigr]} 
+\frac{\Bar{\mathbb{M}}\bigl[S_{i}[m-1]\,(W_i[m]+S_i[m])\bigr]}
{\Bar{\mathbb{M}}\bigl[W_i[m]+S_i[m]\bigr]}+1
\Biggr],
\end{equation}
where \(W_i[m]\) and \(S_i[m]\) are the waiting time and service time of the \(m\)th update for destination \(i\).

\begin{proof}

\begin{figure}[h]
    \centering
    \includegraphics[width=0.8\linewidth]{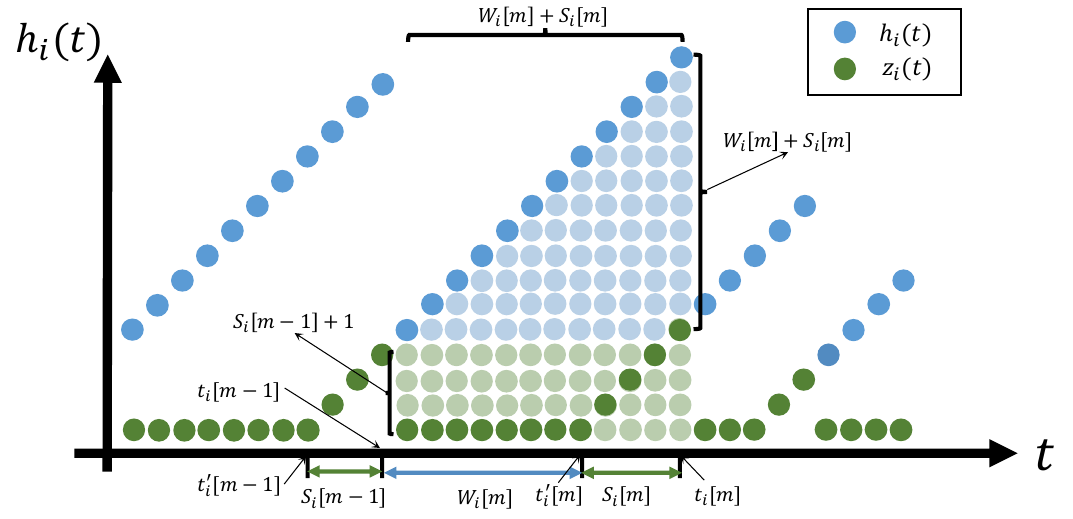}
    \caption{A sample path of AoI Evolution.}
    \label{fig: proof of expression}
\end{figure}

Consider a network operating under policy \(\pi\) over a time horizon \(T\). Let \(\Omega\) be the associated sample space, and let \(\omega \in \Omega\) denote a sample path, as shown in Fig.~\ref{fig: proof of expression}. Recall that \(K_i(T)\) is the total number of updates delivered to destination \(i\) by the end of slot \(T\), and the \emph{waiting time} and \emph{service time} of the \(m\)th update for source \(i\) are given by
\begin{equation}
    W_i[m] \;:=\; t'_i[m] - t_i[m-1],
\end{equation}
\begin{equation}
    S_i[m] \;:=\; t_i[m] - t'_i[m].
\end{equation}
Denote by $R_i$ be the number of slots remaining after the last update delivery. Then, the time-horizon can be written as follows
\begin{equation}\label{eq: T=W+S}
    T=\sum_{m=1}^{K_i(T)} \left(W_i[m]+S_i[m]\right)+R_i, 
\end{equation}

The evolution of $h_i(t)$ is well-defined in each of the time intervals $W_i[m]$, $S_i[m]$, and $R_i$ . According to~\eqref{hevolves}, during the interval$\big[t_i[m-1]+1,t_i[m]\big]$, the parameter $h_i (t)$ evolves as $\{S_i [m-1] + 2,S_i[m-1] + 3,... ,S_i[m-1] + W_i[m]+S_i[m]+1\}$. This pattern is repeated throughout the entire time-horizon, for $m \in \{1,2,...,K_i(T)\}$, and also during the last $R_i$ slots. As a result, the time-average AoI associated with destination $i$ can be expressed as
\begin{equation}\label{AoI average finite T}
    \frac{1}{T}\sum_{t=1}^{T}h_{i}(t)=\frac{1}{T}\left[\sum_{m=1}^{D_{i}(T)}\frac{\left(W_{i}[m]+S_{i}[m]\right)^{2}}{2}+\left(S_{i}[m-1]+1\right)\left(W_{i}[m]+S_{i}[m]\right)\right]+\frac{1}{T}\left[\left(S_{i}[K_i(T)]+1\right)R_i+R^2_i\right]
\end{equation}

Combining~\eqref{eq: T=W+S} with the sample mean $\Bar{\mathbb{M}} \bigl[W_i[m]\bigl] $ and $\Bar{\mathbb{M}} \bigl[S_i[m]\bigl]$, yields
\begin{equation}\label{T/D=MW+MS}
    \frac{T}{K_i(T)} = \Bar{\mathbb{M}} \bigl[W_i[m]\bigl] +  \Bar{\mathbb{M}} \bigl[S_i[m]\bigl]+\frac{R_i}{K_i(T)}
\end{equation}
Substituting~\eqref{T/D=MW+MS} into~\eqref{AoI average finite T} yields
\begin{equation}
    \begin{aligned}\label{AoI average finite T with sample mean}
    \frac{1}{T}\sum_{t=1}^{T}h_{i}(t)=&\Biggl[
\frac{\frac{\Bar{\mathbb{M}}\bigl[(W_i[m]+S_i[m])^2\bigr]}{2}+\Bar{\mathbb{M}}\bigl[S_{i}[m-1](W_i[m]+S_i[m])\bigr]+\Bar{\mathbb{M}}\bigl[W_i[m]\bigr]+\Bar{\mathbb{M}}\bigl[S_i[m]\bigr]}
{\Bar{\mathbb{M}}\bigl[W_i[m]\bigr]+\Bar{\mathbb{M}}\bigl[S_i[m]\bigr]+\frac{R_i}{K_i(T)}}\\&+\frac{\left(S_{i}[K_i(T)]+1\right)R_i+R^2_i}{\Bar{\mathbb{M}}\bigl[W_i[m]\bigr]+\Bar{\mathbb{M}}\bigl[S_i[m]\bigr]+\frac{R_i}{K_i(T)}}
\Biggr],
\end{aligned}
\end{equation}

The next step is to take the limit of~\eqref{AoI average finite T with sample mean} as $T\rightarrow\infty$. Without loss of generality, we assume that $R_i<\infty$, gives

\begin{equation}
\lim_{T\rightarrow \infty }\frac{1}{T}\sum_{t=1}^{T}h_{i}(t) = 
\lim_{T\rightarrow \infty }\Biggl[
\frac{\Bar{\mathbb{M}}\bigl[(W_i[m]+S_i[m])^2\bigr]}{2\,\Bar{\mathbb{M}}\bigl[W_i[m]+S_i[m]\bigr]} 
+\frac{\Bar{\mathbb{M}}\bigl[S_{i}[m-1]\,(W_i[m]+S_i[m])\bigr]}
{\Bar{\mathbb{M}}\bigl[W_i[m]+S_i[m]\bigr]}+1
\Biggr],
\end{equation}

Taking the weighted sum average across the sources with respected to weights $\alpha_i$, yields
\begin{equation}
J^\pi = \lim_{T\rightarrow \infty }\sum_{i=1}^{N}\frac{\alpha_i}{N}
\Biggl[
\frac{\Bar{\mathbb{M}}\bigl[(W_i[m]+S_i[m])^2\bigr]}{2\,\Bar{\mathbb{M}}\bigl[W_i[m]+S_i[m]\bigr]} 
+\frac{\Bar{\mathbb{M}}\bigl[S_{i}[m-1]\,(W_i[m]+S_i[m])\bigr]}
{\Bar{\mathbb{M}}\bigl[W_i[m]+S_i[m]\bigr]}+1
\Biggr],
\end{equation}

\end{proof}

\section{Proof of Theorom~\ref{theo:lower_bound}}
\label{Proof of Theorom theo:lower_bound}
\textbf{Theorem~\ref{theo:lower_bound}. }
For a network with parameters \(\{N,\alpha_i,p_i,L_i\}\), the following bound holds for all admissible policies \(\pi \in \Pi\):
\begin{equation}
L_{B} =\frac{1}{2}\Biggl(\sum_{i=1}^{N}\sqrt{\frac{\alpha_i\,L_{i}}{p_{i}}}\Biggr)^{2}+\sum_{i=1}^{N}\alpha_i
\leq\mathbb{E}[J^\pi],
\end{equation}
\begin{proof}

Applying Jensen Inequality $\Bar{\mathbb{M}}\bigl[(W_i[m]+S_i[m])^2\bigr]\geq \Bar{\mathbb{M}}\bigl[W_i[m]+S_i[m]\bigr]^2$ and $S_{i}[m-1](W_i[m]+S_i[m])>0, \forall i,m,$ into~\eqref{performance for waiting and service time} yields

\begin{equation}\label{j geq ws}
J^\pi > \lim_{T\rightarrow \infty }\sum_{i=1}^{N}\frac{\alpha_i}{N}
\Biggl[
\frac{\Bar{\mathbb{M}}\bigl[(W_i[m]+S_i[m])\bigr]}{2} 
+1
\Biggr],
\end{equation}

Notice that an information update delivery includes $L_i$ data packet delivery, hence, we can rewrite the total number of delivered packets $D_i(T)$ as
\begin{equation}
    D_i(T)= L_iK_i(T)+O_i
\end{equation}
Where $O_i$ is the packet delivered after delivery of update $K_i(T)$, upper bounded by $O_i<L_i$.
Consider the infinite time-horizon $T\rightarrow\infty$ and leverage~\eqref{T/D=MW+MS} in Appendix~\ref{Proof of Propsition prop: AoI expression with waiting/service time}, we rewrite the long-term data packet throughput under policy $\pi $ defined in~\eqref{throughput define} as 
\begin{equation}\label{q=1/W+S}
    q_{i}^{\pi}= \lim_{T\rightarrow \infty }\frac{L_iK_i(T)+O_i}{T}= \frac{L_i}{\Bar{\mathbb{M}} \bigl[W_i[m]\bigl] +  \Bar{\mathbb{M}} \bigl[S_i[m]\bigl]}
\end{equation}

Substituting~\eqref{q=1/W+S} into~\eqref{j geq ws} yields

\begin{equation}
    J^\pi > \frac{1}{N}\sum_{i=1}^{N}\alpha_i\left(\frac{L_{i}}{2q_{i}^{\pi}}+1\right)
\end{equation}

Therefore, the lower bound can be obtained by sloving the optimization problem:
\begin{equation}\label{L_B opt 1}
    L_{B}=\min_{\pi\in\Pi}\frac{1}{N}\sum_{i=1}^{N}\alpha_i\left(\frac{L_{i}}{2q_{i}^{\pi}}+1\right)
\end{equation}
\begin{equation}\label{L_B opt 2}
    s.t.\sum_{i=1}^{N}\frac{q_{i}^{\pi}}{p_{i}}\leq1
\end{equation}

Substituting~\eqref{L_B opt 2} into following Cauchy-Schwarz inequality 
\begin{equation}
    \frac{1}{2}\Biggl(\sum_{i=1}^{N}\sqrt{\frac{\alpha_i\,L_{i}}{p_{i}}}\Biggr)^{2}\leq\left(\sum_{i=1}^{N}\frac{\alpha_iL_{i}}{2q_{i}^{\pi}}\right)\left(\sum_{i=1}^{N}\frac{q_{i}^{\pi}}{p_{i}}\right)
\end{equation}
where equation holds when
\begin{equation}
q_i^{L_B} = \frac{\sqrt{\frac{\alpha_i\,L_i\,p_i}{2}}}{\sum_{j=1}^N \sqrt{\frac{\alpha_j\,L_j}{2\,p_j}}}    
\end{equation}

and applying into~\eqref{L_B opt 1} yields
\begin{equation}
L_{B} =\frac{1}{2}\Biggl(\sum_{i=1}^{N}\sqrt{\frac{\alpha_i\,L_{i}}{p_{i}}}\Biggr)^{2}+\sum_{i=1}^{N}\alpha_i
\end{equation}

\end{proof}

% \begin{align}
%     L_{B}=\min_{\pi\in\Pi}\frac{1}{N}&\sum_{i=1}^{N}\alpha_i\left(\frac{L_{i}}{2q_{i}^{\pi}}+1\right)\label{lowerbound1} \\
% s.t.&\sum_{i=1}^{N}\frac{q_{i}^{\pi}}{p_{i}}\leq1\label{lowerbound2}
% \end{align}
% where \eqref{lowerbound2} is the necessary conditions for the long-term
% throughput in \eqref{throughput constraint}. To find the unique solution to \eqref{lowerbound1} and \eqref{lowerbound2}, we analyze the associated KKT Conditions.

\section{Proof of Proposition~\ref{prop:EWSAoIRandomzedproposition}}
\label{Proof of Theorom prop:EWSAoIRandomzedproposition}
\textbf{Proposition~\ref{prop:EWSAoIRandomzedproposition}. }
For any network with parameters \(\{N,\alpha_i,p_i,L_i\}\) and any SRP \(s \in \Pi^s_r\) characterized by \(\{\mu^s_i\}_{i=1}^N\), the EWSAoI is given by
\begin{equation}
\mathbb{E}[J^s] = \frac{1}{N} \sum_{i=1}^N \alpha_i \left( \frac{3 L_i - 1}{2 p_i \mu_i } + 1\right).
\end{equation}

\begin{proof}
Taking the expectation of~\eqref{performance for waiting and service time} over sample path space $\Omega$, yields
\begin{equation}\label{performance for waiting and service time appc}
\mathbb{E}[J^\pi] = \sum_{i=1}^{N}\frac{\alpha_i}{N}
\Biggl[
\frac{{\mathbb{E}}\bigl[(W_i[m]+S_i[m])^2\bigr]}{2{\mathbb{E}}\bigl[W_i[m]+S_i[m]\bigr]} 
+\frac{{\mathbb{E}}\bigl[S_{i}[m-1]\,(W_i[m]+S_i[m])\bigr]}
{{\mathbb{E}}\bigl[W_i[m]+S_i[m]\bigr]}+1
\Biggr],
\end{equation}

For Network employing SRP, since the channel state and scheduling decision during the current packet transmission are independent with history information, the waiting time $W_i[m]$ and service time $S_i[m]$ are independent.
Thus, we omit the update index in~\eqref{performance for waiting and service time appc}, and rewrite as
\begin{equation}\label{E[j]=E[W.S.]}
\mathbb{E}[J^s] = \sum_{i=1}^{N}\frac{\alpha_i}{N}
\Biggl[
\frac{\mathbb{E}\bigl[W^2_i\bigr]+2\mathbb{E}\bigl[W_i\bigr]\mathbb{E}\bigl[S_i\bigr]+\mathbb{E}\bigl[S^2_i\bigr]}{2\mathbb{E}\bigl[W_{i}\bigr]+2\mathbb{E}\bigl[S_{i}\bigr]} 
+\frac{{\mathbb{E}}\bigl[S_{i}\bigr]\Bigl[\mathbb{E}\bigl[W_{i}\bigr]+\mathbb{E}\bigl[S_{i}\bigr]\Bigl]}
{\mathbb{E}\bigl[W_{i}\bigr]+\mathbb{E}\bigl[S_{i}\bigr]}+1
\Biggr],
\end{equation}

The distributions of waiting time and service time for a network employing the SRP $s$ follows negative binomial distribution~\cite{ross2014introduction}. Specifically,  the waiting time follows $W_i\sim NB(1,p_i\mu^s_i)$ and the service time follows $S_i\sim NB(L_i-1,p_i\mu^s_i)$, 
first-order moments and second-order moments of $W_i$ and $S_i$ are given by
\begin{equation}\label{E[S^s]}
    \mathbb{E}\bigl[S_{i}\bigr]=\frac{L_i-1}{p_i\mu^s_i},
\end{equation}
\begin{equation}
    \mathbb{E}\bigl[S^2_{i}\bigr]=\frac{(L_i-1)(L_i-p_i\mu^s_i)}{p^2_i(\mu^s_i)^2},
\end{equation}
\begin{equation}
    \mathbb{E}\bigl[W_{i}\bigr]=\frac{1}{p_i\mu^s_i},
\end{equation}
\begin{equation}\label{E[W^2^s]}
    \mathbb{E}\bigl[W^2_{i}\bigr]=\frac{2-p_i\mu^s_i}{p^2_i(\mu^s_i)^2}.
\end{equation}

Substituting~\eqref{E[S^s]}--\eqref{E[W^2^s]} into~\eqref{E[j]=E[W.S.]} yields
\begin{equation}
\mathbb{E}[J^s] = \frac{1}{N} \sum_{i=1}^N \alpha_i \left( \frac{3 L_i - 1}{2 p_i \mu^s_i } +1 \right).
\end{equation}

\end{proof}

\section{Proof of Theorem~\ref{theo:stationary_performance}}
\label{Proof of Theorom theo:stationary_performance}

\textbf{Theorem~\ref{theo:stationary_performance}. }
For a network with parameters \(\{N,\alpha_i,p_i,L_i\}\), let \(S\in \Pi^s_r\) be the optimal SRP. Its scheduling probabilities \(\{\mu^S_i\}_{i=1}^N\) are given by
\begin{equation}
    \mu^S_i 
    \;=\; 
    \frac{\sqrt{\frac{\alpha_i\,(3\,L_i -1)}{2\,p_i}}}
         {\sum_{j=1}^N \sqrt{\frac{\alpha_j\,(3\,L_j -1)}{2\,p_j}}},
    \quad 
    \forall i.
\end{equation}
The associated EWSAoI is given by %Substituting~\eqref{eq:solutionRandomized} into~\eqref{eq:EJ^s} yields
\begin{equation}
    \mathbb{E}[J^S] 
    \;=\; 
    \frac{1}{N}  \sum_{i=1}^N \alpha_i
    \;+\;
    \frac{1}{N} 
    \left(\sum_{i=1}^N \sqrt{\tfrac{\alpha_i\,(3\,L_i-1)}{2\,p_i}}\right)^2.
\end{equation}
which satisfies
\begin{equation}
    L_B \;\le\; \mathbb{E}[J^S] \;\le\; \rho^S\,L_B,
\end{equation}
where \(L_B\) is the lower bound from Theorem~\ref{theo:lower_bound} and the optimality ratio \(\rho^S\) is
\begin{equation}
    \rho^S=\frac{\frac{1}{N}\Bigl(\sum_{i=1}^N \sqrt{\tfrac{\alpha_i(3L_i-1)}{2p_i}}\Bigr)^2+\frac{1}{N}\sum_{i=1}^N \alpha_i}{\frac{1}{2N}(\sum_{i=1}^{N}\sqrt{\frac{\alpha_iL_{i}}{p_{i}}})^{2}+\frac{1}{N}\sum_{i=1}^{N}\alpha_i}
\end{equation}

\begin{proof}

From~\eqref{eq:EJ^s}, the optimal SRP $S$ follows by solving
\begin{align}
\min_{\{\mu^s_i\}_{i=1}^N}\sum_{i=1}^N  \alpha_i\biggl(\frac{3\,L_i -1}{2\,p_i\,\mu^s_i}\biggr)
\quad \text{s.t.}\quad 
\sum_{i=1}^N \mu^s_i \;\le\; 1.
\end{align}

Consider the following Cauchy-Schwarz  inequality
\begin{equation}
    \frac{1}{2}\Biggl(\sum_{i=1}^{N}\sqrt{\frac{\alpha_i(3L_{i}-1)}{p_{i}}}\Biggr)^{2}\leq\left(\sum_{i=1}^{N}\frac{\alpha_i(3L_{i}-1)}{2p_i\mu^s_i}\right)\left(\sum_{i=1}^{N}\mu^s_i\right)
\end{equation}
where equation holds when 
\begin{equation}
    \mu^S_i =
    \frac{\sqrt{\frac{\alpha_i\,(3\,L_i -1)}{2\,p_i}}}
         {\sum_{j=1}^N \sqrt{\frac{\alpha_j\,(3\,L_j -1)}{2\,p_j}}},
    \quad 
    \forall i.
\end{equation}
Therefore, the optimal SRP $S$ is given by $\{\mu^S_i\}_{i=1}^N$. Substituting $\{\mu^S_i\}_{i=1}^N$ into~\eqref{eq:EJ^s} yields 
\begin{equation}
    \mathbb{E}[J^S] 
    \;=\; 
    \frac{\sum_{i=1}^N \alpha_i}{2\,N}  
    \;+\;
    \frac{1}{N} 
    \Bigl(\sum_{i=1}^N \sqrt{\tfrac{\alpha_i\,(3\,L_i-1)}{2\,p_i}}\Bigr)^2.
\end{equation}
Thus,  the optimal radio of optimal SRP $S$ is given by
\begin{equation}
    \rho^S=\frac{\frac{1}{N}\Bigl(\sum_{i=1}^N \sqrt{\tfrac{\alpha_i(3L_i-1)}{2p_i}}\Bigr)^2+\frac{1}{N}\sum_{i=1}^N \alpha_i}{\frac{1}{2N}(\sum_{i=1}^{N}\sqrt{\frac{\alpha_iL_{i}}{p_{i}}})^{2}+\frac{1}{N}\sum_{i=1}^{N}\alpha_i}
\end{equation}
% Consider a SRP $\hat{s}$, which is given by $\hat{\mu}^{s}_i =\frac{\sqrt{\frac{\alpha_iL_i}{2p_i}}}{\sum_{j=1}^N \sqrt{\frac{\alpha_jL_j}{2p_j}}},\forall i.$ Since the policy $S$ achieve lower EWSAoI that all SRP $s\in \Pi^s_r$, we then establish
% \begin{equation}
% \mathbb{E}[J^S]\leq \mathbb{E}[J^{\hat{s}}]=
% \sum_{i=1}^{N}\frac{\alpha_i(3L_{i}-1)}{2p_i\hat{\mu}^{s}_i}+\sum_{i=1}^{N}\alpha_i<
% \sum_{i=1}^{N}\frac{3\alpha_iL_{i}}{2p_i\hat{\mu}^{s}_i}+\sum_{i=1}^{N}\alpha_i<\sum_{i=1}^{N}\frac{3\alpha_iL_{i}}{2p_i\hat{\mu}^{s}_i}+3\sum_{i=1}^{N}\alpha_i=3L_B
% \end{equation}
% Therefore we obtain the optimal ratio of optimal SRP is given by 
% \begin{equation}
%     \rho^S=3.
% \end{equation}
\end{proof}

\section{Proof of Proposition~\ref{prop:EWSAoIRandomzedpropositionNS}}
\label{Proof of Theorom prop:EWSAoIRandomzedpropositionNS}

\textbf{Proposition~\ref{prop:EWSAoIRandomzedpropositionNS}. }
For any given network model with parameters \(\{N,\alpha_i,p_i,L_i\}\) 
and an arbitrary NSRP \(ns \in \Pi^{ns}_r\) with scheduling 
probabilities \(\{\mu^{ns}_i\}_{i=1}^N\),
the EWSAoI is given by:
\begin{equation}
\begin{aligned}
  \mathbb{E}[J^{ns}] 
=  \sum_{i=1}^N \frac{\alpha_i}{N} \Biggl[\frac{\mu^{ns}_i p_i}{\sum^N_{j=1}\mu^{ns}_j L_j}\Biggl(\frac{\mathbb{E}[W^2_i] + \mathbb{E}[S^2_i]}{2} 
+ \left(\frac{L_i-1}{p_i}\right)^2
+ 2\mathbb{E}[W^2_i]
      \left(\frac{L_i-1}{p_i}\right)\Biggl) +1
  \Biggr]. 
\end{aligned}
\end{equation}
Here, \(\mathbb{E}[S_i^2]\), \(\mathbb{E}[W_i]\), and \(\mathbb{E}[W_i^2]\) denote the second moment of the service time, the first moment of the waiting time, and the second moment of the waiting time, respectively. The term \(\mathbb{E}[Y_i^2]\) represents the second moment of the interval between the time source \(i\) is selected and next selection. These quantities are given by
\begin{align}
    \mathbb{E}[S^2_i] &= \frac{(L_i - 1)(L_i - p_i)}{p_i^2}, \\
    \mathbb{E}[W_i]&=\frac{\sum^N_{j=1}\mu^{ns}_j L_j - \mu^{ns}_i(L_i-1)}{\mu^{ns}_i p_i},\\
    \mathbb{E}[W_i^2]& = \frac{1}{\mu^{ns}_i p_i}\Biggl[\mu^{ns}_i\left(1 + 2(1 - p_i)\mathbb{E}[W_i]\right) + \sum_{j \neq i}\mu^{ns}_j\left(\mathbb{E}[Y_j^2] + 2L_jX_i\right)\Biggl], \\
   \mathbb{E}[Y_i^2]&= 2L_i -1 + \frac{(L_i -1)(L_i - p_i)}{p_i}.
\end{align}
\begin{proof}
    
For Network employing NSRPs, since the channel state and scheduling decision during the current update transmission are independent with history information, the waiting time $W_i[m]$ and service time $S_i[m]$ are independent across different update. Also, within the delivery of an update $m$, the waiting time $W_i[m]$ and service time $S_i[m]$ are independent.  
Thus, taking the expectation of ~\eqref{performance for waiting and service time} associate with networks employing NSRPs, and omitting the update index yields
\begin{equation}\label{E[j^ns]=E[W.S.]}
\mathbb{E}[J^{ns}] = \sum_{i=1}^{N}\frac{\alpha_i}{N}
\Biggl[
\frac{\mathbb{E}\bigl[W^2_i\bigr]+2\mathbb{E}\bigl[W_i\bigr]\mathbb{E}\bigl[S_i\bigr]+\mathbb{E}\bigl[S^2_i\bigr]}{2\mathbb{E}\bigl[W_{i}\bigr]+2\mathbb{E}\bigl[S_{i}\bigr]} 
+\frac{{\mathbb{E}}\bigl[S_{i}\bigr]\Bigl[\mathbb{E}\bigl[W_{i}\bigr]+\mathbb{E}\bigl[S_{i}\bigr]\Bigl]}
{\mathbb{E}\bigl[W_{i}\bigr]+\mathbb{E}\bigl[S_{i}\bigr]}+1
\Biggr],
\end{equation}

Notice that the expression is similar to~\eqref{E[j]=E[W.S.]} in Appednix~\ref{Proof of Theorom prop:EWSAoIRandomzedproposition}, as we use the same technique.
The distributions of service time for a network employing the NSRPs $ns$ also follows negative binomial distribution~\cite{ross2014introduction}, namely $S_i\sim NB(L_i-1,p_i)$, and
the first-order moment and second-order moment of $S_i$ are given by
\begin{equation}\label{E[S^ns]}
    \mathbb{E}\bigl[S_{i}\bigr]=\frac{L_i-1}{p_i},
\end{equation}
\begin{equation}\label{E[S^ns^2]}
\mathbb{E}\bigl[S^2_{i}\bigr]=\frac{(L_i-1)(L_i-p_i)}{p^2_i},
\end{equation}

However, due to the correlation of scheduling decision for different sources, obtaining the distribution of waiting time is challenging. 
Let \((i, L_i(t))\) represent the system state st slot $t$ when $i$ is selected with remaining update length $L_i(t)$, and \(Selecting\) when the BS is randomly selecting sources. 
Consider the Markov chain associated with a network employing NSRPs, and the state transfer diagram for  is illustrated in Fig.~\ref{fig: markov chain}.
Thus, the waiting time $W_i$ is the time duration that from the time slot when the system state transfer from $(i,1)$ to $Selecting$, to the time slot when the system state transfer from $Selecting$ to $(i,L_i-1)$. Furthermore, due to the memoryless property of Markov Chain, the distribution of waiting time $W_i$ is the same the distribution of first passage time from state $Selecting$ to $(i,L_i-1)$~\cite{stochatic}. Next, we leverage the memoryless property and the first passage time argument to derive the first-order moment and second-order moment of the waiting time.

\begin{figure}[h]
    \centering
\includegraphics[width=0.6\linewidth]{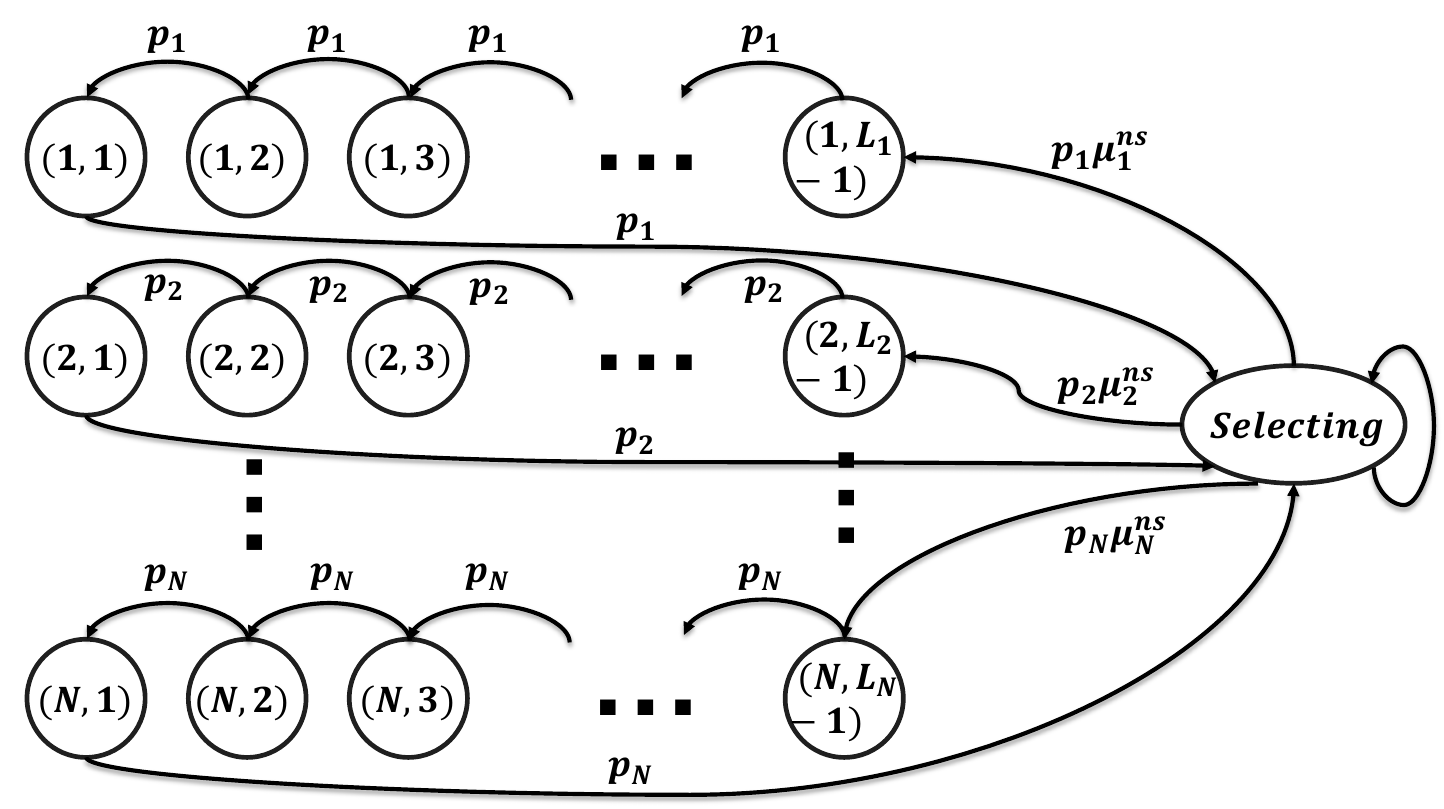}
    \caption{State Transfer Diagram for networks employing NSRPssource $i$}
    \label{fig: markov chain}  
\end{figure}

Suppose that the   system state is $Selecting$ at time slot $t$, there are two categories of scheduling decisions can be made:
\begin{itemize}
    \item Selecting Source \( i \) with probability \( \mu^{ns}_i \), i.e. $u_i(t)=1$, then the waiting time is given by
    \begin{equation}
    W_i = 
    \begin{cases}
        1, & \text{with probability } p_i, \\
        1 + W_i, & \text{with probability } 1 - p_i,
    \end{cases}
    \end{equation}

    \item Selecting Source \( j \neq i \) with probability \( \mu^{ns}_j \), i.e. $u_j(t)=1$, then the waiting time is given by
            \begin{equation}
    W_i = Y_j + W_i
    \end{equation}
        \begin{equation}
        Y_j = 
        \begin{cases}
            1, & \text{with probability } 1 - p_j, \\
            1 + S_j, & \text{with probability } p_j,
        \end{cases}
        \end{equation}
        where each \( S_j \) is service time of source $j$, and the total time spent include and after selecting source \( j \) is \( Y_j \), after which the BS returns to $Selecting$.
\end{itemize}

Next, we calculate the first moment \( \mathbb{E}[W_i] \) and the second moment \( \mathbb{E}[W^2_i] \) of the waiting time, respectively. We split the first moment \( \mathbb{E}[W_i] \)  conditional on the scheduling decision as
\begin{equation}\label{E[W^ns|condition]}
\mathbb{E}[W_i] = \mu^{ns}_i \mathbb{E}[W_i \mid u_i(t)=1] + \sum_{j\neq i} \mu^{ns}_j \mathbb{E}[W_i \mid u_j(t)=1].
\end{equation}
where the conditional expectation are given by
\begin{equation}\label{E[W|i]}
\mathbb{E}[W_i \mid u_i(t)=1] = p_i \cdot 1 + (1-p_i)(1 + \mathbb{E}[W_i]) = 1 + (1-p_i)\mathbb{E}[W_i].
\end{equation}
\begin{equation}\label{E[W|j]}
\mathbb{E}[W_i \mid u_j(t)=1] = \mathbb{E}[Y_j] + \mathbb{E}[W_i].
\end{equation}

Taking the expectation of total time spent after selecting source \( Y_ji \) and substituting with~\eqref{E[S^ns]} yields
\begin{equation}\label{E[Y]}
\mathbb{E}[Y_j] = (1-p_j)\cdot 1 + p_j\left(1 + \mathbb{E}[S_j]\right) = L_j.
\end{equation}

Substituting~\eqref{E[Y]} into~\eqref{E[W|j]} yields
\begin{equation}\label{E[W_i^ns|j]}
\mathbb{E}[W_i \mid u_j(t)=1] = L_i + \mathbb{E}[W_i].
\end{equation}

Thus, substituting~\eqref{E[W|i]} and ~\eqref{E[W_i^ns|j]} into~\eqref{E[W^ns|condition]} yields 
\begin{equation}\label{E[W^ns]}
\mathbb{E}[W_i] = \frac{\mu^{ns}_i + \sum_{j \neq i}\mu^{ns}_j L_j}{\mu^{ns}_i p_i}.
\end{equation}

Similarly, we condition on the scheduling decision to write
\begin{equation}\label{E[W^2^ns|condition]}
E[W_i^2] = \mu^{ns}_i\,\mathbb{E}[W_i^2 \mid u_i(t)=1] + \sum_{j\neq i}\mu^{ns}_j\,\mathbb{E}[W_i^2 \mid u_j(t)=1].
\end{equation}

And the expectation conditional on source $i$ is selected is given by
\begin{equation}
   \mathbb{E}[W_i^2 \mid u_i(t)=1] = p_i\cdot 1^2 + (1-p_i)\mathbb{E}[(1+W_i)^2],
\end{equation}
Expanding the square yields
\begin{equation}\label{E[W^2^{ns}|i]} 
\mathbb{E}[W_i^2 \mid u_i(t)=1] = p_i + (1-p_i)\Bigl(1 + 2\mathbb{E}[W_i] + \mathbb{E}[W_i^2]\Bigr) = 1 + 2(1-p_i)\mathbb{E}[W_i] + (1-p_i)\mathbb{E}[W_i^2].
\end{equation} 
Here, the waiting time for source \(i\) is the sum of the time \(Y_j\) spent after selecting source \(j\) and the subsequent waiting time. And the expectation conditional on source $j\neq i$ is selected is given by
\begin{equation}\label{E[W^2^{ns}|j]}
E[W_i^2 \mid u_j(t)=1] = E\Bigl[(Y_j+W_i)^2\Bigr] = \mathbb{E}[Y_j^2] + 2\mathbb{E}[Y_j]\mathbb{E}[W_i] + \mathbb{E}[W_i^2].
\end{equation}

Substituting~\eqref{E[W^2^{ns}|i]} and~\eqref{E[W^2^{ns}|j]}into~\eqref{E[W^2^ns|condition]} gives
\begin{align}
\mathbb{E}[W_i^2] &= \mu^{ns}_i\Bigl[1+2(1-p_i)\mathbb{E}[W_i] + (1-p_i)\mathbb{E}[W_i^2]\Bigr] \nonumber\\
&+ \sum_{j\neq i}\mu^{ns}_j\Bigl[\mathbb{E}[Y_j^2] + 2\mathbb{E}[Y_j]\mathbb{E}[W_i] + \mathbb{E}[W_i^2]\Bigr].
\end{align}

Collecting the terms involving \(\mathbb{E}[W_i^2]\) leads to
% \begin{equation}
% \mathbb{E}[W_i^2]\Bigl(1 - \mu^{ns}_i(1-p_i) - \sum_{j \neq i}\mu^{ns}_j\Bigr) = \mu^{ns}_i\Bigl[1+2(1-p_i)\mathbb{E}[W_i]\Bigr] + \sum_{j\neq i}\mu^{ns}_j\Bigl[\mathbb{E}[Y_j^2] + 2\mathbb{E}[Y_j]\mathbb{E}[W_i]\Bigr].
% \end{equation}
% Noting that \(\sum_{j \neq i}\mu^{ns}_j = 1-\mu^{ns}_i\) and simplifying, we find that
% $1 - \mu^{ns}_i(1-p_i) - (1-\mu^{ns}_i) = \mu^{ns}_i p_i$.
% Thus, we obtain
\begin{equation}\label{E[W^2_final]}
\mathbb{E}[W_i^2] = \frac{\mu^{ns}_i\Bigl[1+2(1-p_i)\mathbb{E}[W_i]\Bigr] + \sum_{j\neq i}\mu^{ns}_j\Bigl[\mathbb{E}[Y_j^2] + 2\mathbb{E}[Y_j]\mathbb{E}[W_i]\Bigr]}{1- \sum_{j}\mu^{ns}_j +p_i\mu^{ns}_i }.
\end{equation}

Where the $\mathbb{E}[Y_j^2]$ is given by 
\begin{equation}\label{E[Y^2=S]}
\mathbb{E}[Y_i^2] = (1-p_i)\cdot1^2 + p_i \mathbb{E}[(1+S_i)^2].
\end{equation}

Substituting~\eqref{E[S^ns]} and~\eqref{E[S^ns^2]} into~\eqref{E[Y^2=S]} yields
\begin{equation}\label{E[Y^2]}
\mathbb{E}[Y_i^2] = 1 + 2(L_i -1) + \frac{(L_i -1)(L_i - p_i)}{p_i}.
\end{equation}

Thus, we obtain a close-from expression of NSRPs by organizing~\eqref{E[j^ns]=E[W.S.]},~\eqref{E[S^ns]},~\eqref{E[S^ns^2]},~\eqref{E[W^ns]},~\eqref{E[W^2_final]}, and~\eqref{E[Y^2]}.

% \begin{equation}\label{eq:EJ^{ns}}
% \begin{aligned}
%   \mathbb{E}[J^{ns}] 
%   = \frac{1}{N} \sum_{i=1}^N \alpha_i \Biggl[&\frac{\mu^{ns}_i p_i}{\Gamma}
%     \Biggl(\frac{\mathbb{E}[W^2_i] + \mathbb{E}[S^2_i]}{2} 
%    + \left(\frac{L_i-1}{p_i}\right)^2
%    \\& + 2\mathbb{E}[W^2_i]
%       \left(\frac{L_i-1}{p_i}\right)\Biggl) +1
%   \Biggr]. 
% \end{aligned}
% \end{equation}
% where intermediate parameter $\Lambda_i, \Phi_i, X_i, \mathbb{E}[Y_i^2], \forall i,$ and $\Gamma$ are the given by 
% \begin{align}
%     \mathbb{E}[W_i^2] &= \frac{\mu^{ns}_i\left(1 + 2(1 - p_i)\mathbb{E}[W_i]\right) + \sum_{j \neq i}\mu^{ns}_j\left(\mathbb{E}[Y_j^2] + 2L_jX_i\right)}{\mu^{ns}_i p_i}, \\
%     \mathbb{E}[S^2_i] &= \frac{(L_i - 1)(L_i - p_i)}{p_i^2}, \\
%     \Gamma &= \sum^N_{j=1}\mu^{ns}_j L_j,\\
%     \mathbb{E}[W_i]&=\frac{\Gamma - \mu^{ns}_i(L_i-1)}{\mu^{ns}_i p_i}.\\
%    \mathbb{E}[Y_i^2]&= 2L_i -1 + \frac{(L_i -1)(L_i - p_i)}{p_i}.
% \end{align}

\begin{equation}
\begin{aligned}
  \mathbb{E}[J^{ns}] 
  = \frac{1}{N} \sum_{i=1}^N \alpha_i \Biggl[\frac{\mu^{ns}_i p_i}{\sum^N_{j=1}\mu^{ns}_j L_j}
    \Biggl(\frac{\mathbb{E}[W^2_i] + \mathbb{E}[S^2_i]}{2} 
   + \left(\frac{L_i-1}{p_i}\right)^2
+ 2\mathbb{E}[W^2_i]
      \left(\frac{L_i-1}{p_i}\right)\Biggl) +1
  \Biggr]. 
\end{aligned}
\end{equation}
Where
\begin{align}
    \mathbb{E}[S^2_i] &= \frac{(L_i - 1)(L_i - p_i)}{p_i^2}, \\
    \mathbb{E}[W_i]&=\frac{\sum^N_{j=1}\mu^{ns}_j L_j- \mu^{ns}_i(L_i-1)}{\mu^{ns}_i p_i}.\\
    \mathbb{E}[W_i^2] &=  \frac{\mu^{ns}_i\Bigl[1+2(1-p_i)\mathbb{E}[W_i]\Bigr] + \sum_{j\neq i}\mu^{ns}_j\Bigl[\mathbb{E}[Y_j^2] + 2\mathbb{E}[Y_j]\mathbb{E}[W_i]\Bigr]}{1- \sum_{j}\mu^{ns}_j +p_i\mu^{ns}_i }., \\
   \mathbb{E}[Y_i^2]&= 2L_i -1 + \frac{(L_i -1)(L_i - p_i)}{p_i}.
\end{align}

\end{proof}

\section{Upper Bound for Lyapunov Drift}
\label{Upper Bound for Lyapunov Drift}
In this appendix, we obtain the expressions in~\eqref{lyapunov drift upper bound1}--~\eqref{lyapunov drift upper bound3}, which represent an upper bound on the Lyapunov Drift. Consider the network state $\mathbb{S}(t) := \{h_i(t),z_i(t),L_i(t), x_i (t)\}_{i=1}^N$, the Lyapunov Function  in~\eqref{lyapunov function} and the Lyapunov Drift $\Delta \left(\mathbb{S}(t)\right)$  in~\eqref{lyapunov Drift}. Substituting~\eqref{lyapunov function} into~\eqref{lyapunov Drift}, we get
\begin{equation}
\begin{aligned}\label{Lyapunov Drift Detail}
        \Delta&\left(\mathbb{S}(t)\right)=\frac{V}{2}\mathbb{E}\left[\sum_{i=1}^{N}\left[x^+_{i}(t+1)\right]^2-\sum_{i=1}^{N}\left[x^+_{i}(t)\right]^2|\mathbb{S}(t)\right]\\&+\mathbb{E}\left[\sum_{i=1}^{N}\beta_{i}\left[h_{i}(t+1)-z_{i}(t+1)\right]^2-\sum_{i=1}^{N}\beta_{i}\left[h_{i}(t)-z_{i}(t)\right]^2|\mathbb{S}(t)\right]\\&+\mathbb{E}\left[\sum_{i=1}^{N}\gamma_{i}\left[z_{i}(t+1)+L_{i}(t+1)\right]^2-\sum_{i=1}^{N}\gamma_{i}\left[z_{i}(t)+L_{i}(t)\right]^2|\mathbb{S}(t)\right]
\end{aligned}
\end{equation}
Recall that the evolution of $L_i(t)$, $z_i(t)$, and $h_i(t)$ are given by ~\eqref{Levolves}, ~\eqref{zevolves} and~\eqref{hevolves}, respectively. The evolution of $h_i(t)-z_i(t)$ and 
$z_i(t)+L_i(t)$ are given by
\begin{equation}\label{h-z evolution}
   h_i(t+1)-z_i(t+1)= \begin{cases}
            h_i(t)-z_i(t)+1 & \text{if } d_i(t)=0 \text{ and } L_i(t)=L,\\ 
            z_i(t) & \text{if } d_i(t)=1\text{ and }L_i(t)=1, \\ 
h_i(t)-z_i(t) &\text{otherwise.}
    \end{cases}
\end{equation}

\begin{equation}\label{z+L evolution}
z_i(t+1) + L_i(t+1) = 
\begin{cases}
z_i(t) + L_i(t),
& \text{if } d_i(t)=1 \text{ and }L_i(t)>1 \\
& \text{ or } L_i(t)=L,\\
L_i + 1, 
& \text{if } d_i(t)=1\text{ and }L_i(t)=1,\\
z_i(t) + L_i(t) + 1, 
& \text{otherwise.}
\end{cases}
\end{equation}
Substitute~\eqref{h-z evolution} and~\eqref{z+L evolution} into~\eqref{Lyapunov Drift Detail}, leverage the discuss of throughput debt in~\cite[Appendix A]{kadota2019scheduling}, and rearrange the terms, we obtain ~\eqref{lyapunov drift upper bound1}--\eqref{lyapunov drift upper bound3}.

\section{Proof of Theorem~\ref{theo: Performance Bound for MW}}\label{Proof of Theorem theo: Performance Bound for MW}

% \textbf{Theorem~\ref{theo: Performance Bound for MW}. }
%      For any given network model with network parameters $\{N,\alpha_i,p_i,L_i\}$, by choosing the constant $  \beta_{i}=\frac{\alpha_i}{q_i^{L_B}}, \gamma_i=\frac{\alpha_i}{q_i^{L_B}\sqrt{p_i}}$ and $\bar{q}_i=q^{L_B}_i-\epsilon$, where $\epsilon\rightarrow0$, the optimality ratio of Max-Weight policy is such that
% \begin{equation}
% \rho=6+\frac{\sqrt{\varPsi}}{NL_B}
% \end{equation}
% where
% \begin{equation}
% \begin{aligned}
% \varPsi=&\left(\sum_{i=1}^{N}\alpha_i\right)\Biggl(\sum_{i=1}^{N}\frac{\alpha_i}{\sqrt{p_i}}\Bigl[5\frac{L^2_{i}\sqrt{p_i}}{\bar{q}^2_i}-\frac{L_{i}}{\bar{q}_{i}}\\&-\frac{L_{i}\sqrt{p_i}}{\bar{q}_{i}}+2\frac{L^2_i}{\bar{q}^2_i}+2\frac{L_i^2}{\bar{q}_i}\Bigl]\Biggl)
% \end{aligned}
% \end{equation}

% \begin{proof}

The expression for the Lyapunov drift~\eqref{lyapunov drift upper bound1}--\eqref{lyapunov drift upper bound3} is central to the analysis in this appendix and is rewritten below for convenience.

\begin{equation*}
        \Delta\left(\mathbb{S}(t)\right)\leq B(t)-\sum_{i=1}^{N}p_{i}\mathbb{E}\left[u_{i}(t)|\mathbb{S}(t)\right]C_{i}(t)
\end{equation*}

where
\begin{equation*}
\begin{aligned}
B(t)=&\sum_{i=1}^{N}\beta_{i}\mathbb{I}_{L_i(t)=L}\left[2h_{i}(t)-1\right]+V\left[x^+_i(t)\bar{q}_i+\frac{1}{2}\right] +\sum_{i=1}^{N}\gamma_{i}\mathbb{I}_{1<L_i(t)<L_i}\left[2\left(z_{i}(t)+L_{i}(t)\right)-1\right],
\end{aligned}
\end{equation*}
\begin{equation*}
\begin{aligned}
C_{i}(t) & =\beta_{i}\mathbb{I}_{L_i(t)=L}\left[2h_{i}(t)-1\right]+\beta_{i}\mathbb{I}_{L_i(t)=1}\left[h_{i}^{2}(t)-2h_{i}(t)z_{i}(t)\right]\\
 & +\gamma_{i}\mathbb{I}_{1<L_i(t)<L_i}\left[2z_{i}(t)+2L_{i}(t)-1\right]+\gamma_{i}\mathbb{I}_{L_i(t)=1}\left[\left(z_{i}(t)+2\right)^{2}-\left(L_{i}+1\right)^{2}\right]+ Vx^+(t).
\end{aligned}
\end{equation*}

\vspace{0.5ex}\noindent\textbf{Step 1: A throughput lemma for Max-Weight.}
Prior to deriving the upper bound on the EWSAoI achieved by the Max-Weight policy, we prove Lemma~\ref{mw lemma}. %Let $q_i^{MW}$ be the long-term packet throughput associated with the Max-Weight policy. %derive a lower bound for achievable throughput in the following Lemma~\ref{mw lemma}.

\begin{lemma}\label{mw lemma}
    If there exists an SRP $s$ with long-term packet throughput greater than the throughput target for all sources~$i$, i.e., $p_i\mu^s_i\geq \bar{q_i}, \forall i$, then the long-term packet throughput of the Max-Weight policy is also such that $q_i^{MW}\geq \bar{q_i}, \forall i$.
    %The long-term packet throughput associated with the Max-Weight policy is  $q_i^{MW}>$ Max-Weight policy achieves any feasible set of minimum throughput targets $\{\bar{q_i}\}^N_{i=1}$, if there exists any SRP $s$ satisfies $p_i\mu^s_i\geq \bar{q_i}, \forall i$.
\end{lemma}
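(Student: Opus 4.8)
\textbf{Proof plan for Lemma~\ref{mw lemma}.}
The plan is to use the Lyapunov drift inequality together with the fact that the Max-Weight policy minimizes the weighted term $\sum_i p_i \mathbb{E}[u_i(t)\mid\mathbb{S}(t)]\,C_i(t)$ over all admissible (in particular all randomized) policies at every slot. First I would evaluate the one-slot drift bound~\eqref{lyapunov drift upper bound1} under Max-Weight and under the hypothesized SRP $s$: since Max-Weight picks the source maximizing $C_i(t)$, we have $\sum_i p_i \mathbb{E}[u^{MW}_i(t)\mid\mathbb{S}(t)] C_i(t) \;\ge\; \sum_i p_i \mu^s_i\, C_i(t)$, so that $\Delta^{MW}(\mathbb{S}(t)) \le B(t) - \sum_i p_i\mu^s_i C_i(t)$. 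The key observation is that only the throughput-debt part of $C_i(t)$, namely $Vx_i^+(t)$, and the matching part of $B(t)$, namely $V[x_i^+(t)\bar q_i + \tfrac12]$, are relevant for bounding the debt queues; the AoI/system-time/service-time terms are all $\mathcal{O}(\mathrm{poly})$ in the state but, crucially, we can drop the extra nonnegative drift contributed by the $\beta_i,\gamma_i$ terms when deriving a bound on a Lyapunov function built only from $\{x_i^+\}$.

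Concretely, I would introduce the reduced Lyapunov function $\widetilde{\mathcal{L}}(t) = \sum_i \tfrac{V}{2}[x_i^+(t)]^2$ and argue that its drift satisfies $\widetilde\Delta(\mathbb{S}(t)) \le \widetilde B - V\sum_i x_i^+(t)\bigl(p_i\mu^s_i - \bar q_i\bigr)$ for a finite constant $\widetilde B$ (coming from the $+\tfrac12$ terms and the bounded increment $|x_i(t+1)-x_i(t)|\le 1$), using the standard queue-drift identity $[x_i^+(t+1)]^2 \le [x_i^+(t)]^2 + 1 + 2x_i^+(t)(x_i(t+1)-x_i(t))$ and $x_i(t+1)-x_i(t) = \bar q_i - d_i(t)$ with $\mathbb{E}[d_i(t)\mid\mathbb{S}(t)] = p_i\mathbb{E}[u_i(t)\mid\mathbb{S}(t)]$. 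By the hypothesis $p_i\mu^s_i \ge \bar q_i$ we have $p_i\mu^s_i - \bar q_i \ge 0$, so the negative drift term is nonpositive and hence $\widetilde\Delta(\mathbb{S}(t)) \le \widetilde B$. If the inequality is strict for all $i$ (or after introducing the $\epsilon$ slack as in Theorem~\ref{theo: Performance Bound for MW}), one gets $\widetilde\Delta(\mathbb{S}(t)) \le \widetilde B - V\epsilon\sum_i x_i^+(t)$, which via telescoping over $t=1,\dots,T$, dividing by $T$, and using $\widetilde{\mathcal{L}}(T+1)\ge 0$ yields $\limsup_{T\to\infty}\tfrac1T\sum_{t=1}^T \mathbb{E}[x_i^+(t)] < \infty$, i.e.\ strong stability of each $x_i^+$. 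By~\cite[Theorem 2.8]{neely2010stability}, strong stability of $x_i^+(t) = \max\{t\bar q_i - \sum_{\tau\le t} d_i(\tau),\,0\}$ implies $q_i^{MW} \ge \bar q_i$ for all $i$, which is the claim.

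The one subtlety — and the main obstacle — is that I must ensure that comparing the full drift of $\mathcal{L}(t)$ is not needed; it suffices to track only $\widetilde{\mathcal{L}}(t)$, and for that I need the fact that Max-Weight's choice still makes $\sum_i p_i\mathbb{E}[u_i^{MW}(t)\mid\mathbb{S}(t)]\,Vx_i^+(t)$ at least as large as $\sum_i p_i\mu_i^s\,Vx_i^+(t)$ \emph{when all the other ($\beta,\gamma$) terms are also present in the weights}. This does not follow termwise, since Max-Weight maximizes the \emph{sum} $C_i(t)$, not the debt part alone. The resolution is the standard one: I compare the \emph{full} drift $\Delta^{MW}(\mathbb{S}(t)) \le B(t) - \sum_i p_i\mu_i^s C_i(t)$, then separate $B(t) - \sum_i p_i\mu_i^s C_i(t)$ into a part that telescopes against $\widetilde{\mathcal L}$ plus a remainder; since $\Delta^{MW}$ is the drift of the \emph{full} $\mathcal L$ (sum of the $\beta,\gamma$ and debt pieces), and the $\beta_i[h-z]^2$ and $\gamma_i[z+L]^2$ pieces are nonnegative, dropping them only decreases $\mathcal L(t+1)$, giving $\widetilde\Delta(\mathbb{S}(t)) \le \Delta^{MW}(\mathbb{S}(t)) + (\text{nonpositive correction from dropping }\mathcal L(t)\text{'s }\beta,\gamma\text{ pieces})$ — which is not quite right since dropping a nonnegative term from $\mathcal L(t)$ \emph{increases} the drift. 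So instead I would directly bound $\widetilde\Delta$ from first principles as in the previous paragraph, using only the debt recursion, and invoke the Max-Weight comparison \emph{only} to handle the full EWSAoI bound in Theorem~\ref{theo: Performance Bound for MW}, not for this lemma. In other words, Lemma~\ref{mw lemma} is proved self-contained by the queue-stability argument on $\widetilde{\mathcal L}$ alone, and the Max-Weight optimality enters only to show that the Max-Weight policy actually achieves negative drift when such an $s$ exists — which follows because Max-Weight's expected weighted drift reduction dominates that of $s$, and $s$ already stabilizes $\widetilde{\mathcal L}$. I would therefore structure the proof as: (i) show any SRP $s$ with $p_i\mu_i^s\ge\bar q_i$ stabilizes $\widetilde{\mathcal L}$; (ii) show Max-Weight's drift of the full $\mathcal L$ is upper-bounded by that of $s$; (iii) conclude Max-Weight also stabilizes the debt and hence meets the throughput targets.
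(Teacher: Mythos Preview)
Your proposal correctly identifies the key comparison step---that Max-Weight maximizes $\sum_i p_i\mathbb{E}[u_i(t)\mid\mathbb{S}(t)]C_i(t)$, so replacing the MW decisions by the SRP probabilities $\mu_i^s$ only increases the drift upper bound---but the rest of the argument does not close. You yourself diagnose the obstruction: MW maximizes the \emph{full} weight $C_i(t)$, not the debt component $Vx_i^+(t)$ alone, so the reduced Lyapunov function $\widetilde{\mathcal L}(t)=\sum_i\tfrac{V}{2}[x_i^+(t)]^2$ cannot be controlled ``from first principles'' under MW. Your attempted fix via the three-step structure (i)--(iii) is not a valid inference: knowing that $s$ stabilizes $\widetilde{\mathcal L}$ and that MW's drift of the \emph{full} $\mathcal L$ is bounded by the same upper bound as $s$'s does \emph{not} imply that MW stabilizes $\widetilde{\mathcal L}$, because the $\beta_i[h_i-z_i]^2$ and $\gamma_i[z_i+L_i]^2$ pieces of $\mathcal L$ could in principle absorb all of the negative drift, leaving the debt queues to grow.

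The paper's proof works entirely with the \emph{full} $\mathcal L$ and never drops to $\widetilde{\mathcal L}$. After substituting the SRP comparison into~\eqref{lyapunov drift upper bound1} one obtains a drift inequality for the full $\mathcal L$ under MW in which the debt term appears as $V\sum_i(p_i\mu_i^s-\bar q_i)x_i^+(t)$ on the left, alongside $\beta$-- and $\gamma$--dependent terms, and with bounded terms on the right. Telescoping the full drift over $t=1,\ldots,T$ and dividing by $T$ makes $\mathbb{E}[\mathcal L(T)]/T\to0$; the remaining step is to argue that the $\beta,\gamma$ contributions on both sides are finite (or bounded below on the left), so that the debt sum $\tfrac{1}{T}\sum_t\mathbb{E}[x_i^+(t)]$ is forced to be finite. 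That isolation of the debt term from within the full telescoped drift is the missing step in your plan; without it, steps (i) and (ii) do not combine to give (iii).
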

\begin{proof}
We compare MW with an arbitrary stationary randomized policy (SRP)~$s$ and show that MW yields a no-larger drift term. We then substitute this bound into the
drift expression, take expectations, sum over time, and let the horizon
$T\to\infty$ to prove that the throughput debt process remains finite. By
applying the stability result in~\cite[Theorem~2.8]{neely2010stability}, we
conclude that the throughput targets are met, which establishes
Lemma~\ref{mw lemma}. The proof relies on Lyapunov-drift techniques similar to
those in~\cite[Appendix~B]{kadota2019scheduling}, and the full derivation is
provided in Appendix~\ref{Proof of Lemma mw}.
\end{proof}

Next, we leverage Lemma~\ref{mw lemma} to derive the upper bound on the EWSAoI associated with the Max-Weight policy described in Theorem~\ref{theo: Performance Bound for MW}. 

Consider the throughput targets below 
\begin{equation}
    \bar{q}_i = q_i^{L_B}-\sigma, \quad \forall i,
\end{equation}
with arbitrarily small $\sigma>0$. Clearly, $\{\bar{q}_i\}_{i=1}^N$ is feasible since a SRP with $\mu_i^{L_B}=q_i^{L_B}/p_i$ can meet the throughput target. By setting $\{\bar{q}_i\}_{i=1}^N$ as the target value in the throughput debt $x^+_i(t)$, leveraging Lemma~\ref{mw lemma}, the throughput achieved by Max-Weight policy, i.e. $\{q^{MW}_i\}_{i=1}^N$, satisfies 
\begin{equation}\label{q^{MW}_i > q_i^{L_B}}
    q^{MW}_i \geq q_i^{L_B}-\sigma, \quad \forall i.
\end{equation}
\vspace{0.5ex}\noindent\textbf{Step 2: Drift inequality under the Max-Weight policy.}
Summing~\eqref{lyapunov drift upper bound1} over $t \in
\{1, 2,\ldots, T\}$, taking expectation with respect to $\mathbb{S}(t)$, taking the limit as $T\rightarrow\infty$, dividing by $TN$, and then using $q^{MW}_i=p_i\mathbb{E}\left[u_i(t)\right]$, we obtain% Substituting~\eqref{lyapunov drift upper bound1}--\eqref{lyapunov drift upper bound3} into~\eqref{expected inequality}, and using $q^{MW}_i=p_i\mathbb{E}\left[u_i(t)\right]$, we obtain 
$LHS'_1+LHS'_2+LHS'_3\leq RHS'_1 + RHS'_2 + RHS'_3$, where
\begin{equation}\label{LHS'1}
\begin{aligned}
LHS'_1=&\lim_{T\rightarrow\infty}\frac{1}{TN}\sum_{t=1}^{T}\sum_{i=1}^{N}q^{MW}_i\beta_{i}\mathbb{E}\left[\mathbb{I}_{L_i(t)=1}|u_i(t)=1\right]\mathbb{E}\left[h^2_{i}(t)-2h_{i}(t)z_{i}(t)|u_i(t)=1,L_i(t)=1\right]
\end{aligned}
\end{equation}
\begin{equation}\label{LHS'2}
\begin{aligned}
LHS'_2=&\lim_{T\rightarrow\infty}\frac{1}{TN}\sum_{t=1}^{T}\sum_{i=1}^{N}q^{MW}_i\gamma_{i}\mathbb{E}\left[\mathbb{I}_{L_i(t)=1}|u_i(t)=1\right]\mathbb{E}\left[\left(z_{i}(t)+2\right)^{2}-\left(L_{i}+1\right)^{2}|u_i(t)=1\!,L_i(t)=1\!\right]
\end{aligned}
\end{equation}
\begin{equation}\label{LHS'3}
\begin{aligned}
LHS'_3=&    \lim_{T\rightarrow\infty}\frac{1}{TN}\sum_{t=1}^{T}\sum_{i=1}^{N}V\left(q^{MW}_i-\bar{q}_{i}\right)\mathbb{E}[x_{i}^{+}(t)|u_i(t)=1,\!L_i(t)=1]
\end{aligned}
\end{equation}

\begin{equation}\label{RHS'1}
\begin{aligned}
RHS'_1=&\lim_{T\rightarrow\infty}\frac{1}{TN}\sum_{t=1}^{T}\sum_{i=1}^{N}\beta_{i}\mathbb{E}[\mathbb{I}_{L_i(t)=L}\left(2h_{i}(t)-1\right)]
\end{aligned}
\end{equation}
\begin{equation}\label{RHS'2}
\begin{aligned}
RHS'_2\!=&\!\lim_{T\!\rightarrow\infty}\!\frac{1}{TN}\!\sum_{t=1}^{T}\!\sum_{i=1}^{N}\!\gamma_{i}\mathbb{E}[\mathbb{I}_{1<L_i(t)<L_i}\!\left(2z_{i}(t)\!+\!2L_{i}(t)\!-\!1\right)]
\end{aligned}
\end{equation}
\begin{equation}\label{RHS'3}
RHS'_3=   \frac{NV}{2}
\end{equation}

Next, we analyze the packet transmission process and develop bounds for each term~\eqref{LHS'1}--\eqref{RHS'3}.
Notice that a necessary and sufficient condition for each update delivery of source $i$ is given by the indicator $d_{i}(t)\mathbb{I}_{L_i(t)=1}=1$, and that the time average update delivery rate of source $i$ is $q_{i}/L_{i}$. Then, assuming ergodicity of the packet delivery process, we can establish 
\begin{equation}\label{subs1}
    \frac{q_{i}}{L_{i}}=\mathbb{E}[d_{i}(t)\mathbb{I}_{L_i(t)=1}] \; ,
\end{equation}
expanding the expectation on the RHS yields
\begin{equation}
\begin{aligned}\label{subs2}
        \mathbb{E}[d_{i}(t)\mathbb{I}_{L_i(t)=1}]&=p_{i}\mathbb{E}\left[u_{i}(t)\mathbb{E}[\mathbb{I}_{L_i(t)=1}|u_{i}(t)]\right]\\&=p_{i}\mathbb{P}\left[u_{i}(t)=1\right]\mathbb{E}[\mathbb{I}_{L_i(t)=1}|u_{i}(t)=1]
\end{aligned}
\end{equation}
Recall that $q_{i}=p_{i}\mathbb{P}\left[u_{i}(t)=1\right]$, further substituting~\eqref{subs1} into~\eqref{subs2} yields
\begin{equation}\label{throughput when selected}
    \mathbb{E}[\mathbb{I}_{L_i(t)=1}|u_{i}(t)=1]=\frac{1}{L_{i}}
\end{equation}

    Since the Max-Weight policy selects the source with the highest value of $C_i(t)$ at any given network state $\mathbb{S}(t)$, we establish that
\begin{equation}\label{cancel u_i(t=1)}
    \mathbb{E}\left[C_i(t)|u_i(t)=1\right] \geq \mathbb{E}\left[C_i(t)\right]
\end{equation}

\vspace{0.5ex}\noindent\textbf{Step 3: A refined ordering property for $C_i(t)$.}
Next we show that, for a low enough value of $V$, the value of $C_i(t)$ in
\eqref{lyapunov drift upper bound3} when the last packet of an update is successfully transmitted 
is no smaller than the value of $C_i(t)$ when earlier packets of the same update were successfully transmitted.

\begin{lemma}\label{lem:Ci-ordering}
For a given source $i$, a given update index $m$, and for a sufficiently small $V>0$, 
the value of $C_i(t)$ in~\eqref{lyapunov drift upper bound3} when the last packet of update $m$ is successfully transmitted 
is no smaller than the value of $C_i(t)$ when earlier packets of update $m$ were successfully transmitted, i.e., 
%at the last 
%successful packet of update $m$ is no smaller than at any of the earlier
%successful packets of the same update, i.e.,
\begin{equation}\label{C_i(t) ordering}
    C_i\bigl(t_i[m]\bigr)\;\ge\;C_i(t),
\forall t'[m] \leq t \leq t_i[m],\text{ s.t. }u_i(t)c_i(t)=1.
\end{equation}
\end{lemma}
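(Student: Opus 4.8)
The plan is to track the evolution of $z_i(\cdot)$, $h_i(\cdot)$ and $L_i(\cdot)$ across the service interval of update $m$ and then compare the corresponding values of $C_i(\cdot)$ from~\eqref{lyapunov drift upper bound3}. Write $a:=t'_i[m]$ and $b:=t_i[m]$. By~\eqref{zevolves} and~\eqref{hevolves}, on $[a,b]$ both $z_i(t)$ and $h_i(t)$ increase by exactly one per slot, while $L_i(t)$ drops by one at each successful transmission of source $i$ and is unchanged otherwise; in particular $z_i(a)=1$, the difference $h_i(t)-z_i(t)$ is constant on $[a,b]$ and equals $\delta:=h_i(a)-1$, and $z_i(t)+L_i(t)$ is non-decreasing on $[a,b]$, hence at most $z_i(b)+L_i(b)=z_i(b)+1$. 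Since an $L_i$-packet update requires $L_i$ successful transmissions of source $i$, we get $S_i[m]=b-a\ge L_i-1$, so $z_i(b)=1+S_i[m]\ge L_i$; and for $m\ge 2$ we get $\delta=S_i[m-1]+W_i[m]\ge(L_i-1)+1=L_i$, since $S_i[m-1]\ge L_i-1$ and $W_i[m]\ge1$ (the first packet of update $m$ and the last packet of update $m-1$ cannot arrive in the same slot). The case $L_i=1$ is vacuous ($a=b$) and $m=1$ is a transient boundary case affecting only $\mathcal{O}(1/T)$ terms.

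Among the slots $t\in[a,b]$ with $u_i(t)c_i(t)=1$, exactly one has $L_i(t)=L_i$ (namely $t=a$), one has $L_i(t)=1$ ($t=b$), and the remaining ones have $1<L_i(t)<L_i$. Substituting $h_i(b)=z_i(b)+\delta$ into~\eqref{lyapunov drift upper bound3},
\begin{equation*}
C_i(b)=\beta_i\bigl(\delta^2-z_i(b)^2\bigr)+\gamma_i\bigl[(z_i(b)+2)^2-(L_i+1)^2\bigr]+Vx_i^+(b),
\end{equation*}
\begin{equation*}
C_i(a)=\beta_i(2\delta+1)+Vx_i^+(a),\qquad
C_i(t)=\gamma_i\bigl[2z_i(t)+2L_i(t)-1\bigr]+Vx_i^+(t)\ \ (a<t<b).
\end{equation*}
With the constants of Theorem~\ref{theo: Performance Bound for MW} we have $\gamma_i/\beta_i=1/\sqrt{p_i}\ge1$, i.e. $\gamma_i\ge\beta_i$. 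Combining this with $z_i(b)\ge L_i$, $\delta\ge L_i$, and $2z_i(t)+2L_i(t)-1\le 2z_i(b)+1$, a short computation shows that the $V$-free part of both $C_i(b)-C_i(a)$ and $C_i(b)-C_i(t)$ is at least $2\gamma_i$: for the intermediate slots one expands $C_i(b)-\gamma_i(2z_i(b)+1)$ and substitutes the three bounds, landing on exactly $2\gamma_i$; for $t=a$ one expands $C_i(b)-\beta_i(2\delta+1)$, uses that $\delta^2-2\delta-1$ is increasing in $\delta\ge1$, and reduces to $2L_i(\gamma_i-\beta_i)+3\gamma_i-\beta_i\ge2\gamma_i$.

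Finally, the throughput-debt terms are absorbed. For $t\in[a,b]$ with $u_i(t)c_i(t)=1$, source $i$ delivers exactly $L_i(t)-1$ packets in slots $t,\dots,b-1$, so $x_i(b)-x_i(t)=(b-t)\bar{q}_i-(L_i(t)-1)\ge(L_i(t)-1)(\bar{q}_i-1)\ge-(L_i-1)$, using $b-t\ge L_i(t)-1$ and $0<\bar{q}_i<1$; since $x_i^+(b)\ge0$, this gives $x_i^+(b)-x_i^+(t)\ge-(L_i-1)$ in every case. Hence $C_i(b)-C_i(t)\ge2\gamma_i-V(L_i-1)>0$ whenever $V<\min_{i:\,L_i\ge2}2\gamma_i/(L_i-1)$, which is the precise meaning of ``$V$ small enough'' in the statement. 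The main obstacle is the first step: the two lower bounds $z_i(b)\ge L_i$ and $\delta\ge L_i$, together with the monotonicity of $z_i(t)+L_i(t)$, are precisely what makes the quadratic term $\gamma_i(z_i(b)+2)^2$ in $C_i(b)$ dominate the at-most-linear quantities $2h_i(a)-1$ and $2z_i(t)+2L_i(t)-1$; establishing them requires carefully tracking $h_i$ and $z_i$ through the concatenation of update $m-1$'s service, the ensuing waiting period, and update $m$'s service, without assuming that the service interval is contiguous (Max-Weight may interleave other sources during it).
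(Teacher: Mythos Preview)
Your argument is correct and follows the same route as the paper: evaluate $C_i$ at the first, intermediate, and last successful-packet slots of update $m$, use $\gamma_i\ge\beta_i$ together with the lower bounds $z_i(b)\ge L_i$ and $h_i(a)-1\ge L_i$, and absorb the $Vx_i^+$ term for small $V$. Two pieces of your execution are actually cleaner than the paper's: by noting that the $V$-free part of $C_i(b)-C_i(\cdot)$ is jointly increasing in $(\delta,z_i(b))$ once $\gamma_i\ge\beta_i$, you avoid the paper's case split on whether $h_i^2(t_1)\gtrless z_i^2(t_i[m])$; and your deterministic bound $x_i^+(b)-x_i^+(t)\ge-(L_i-1)$ yields an explicit threshold $V<\min_{i:L_i\ge2}2\gamma_i/(L_i-1)$, whereas the paper only invokes ``boundedness of $x_i^+(t)$''.
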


\begin{proof}
We compare the value $C_i(t[m])$ with that at the first successful packet $C_i(t'[m])$, as well as with its values at the intermediate packet-delivery slots. By performing a simple case split, bounding the resulting differences through algebraic manipulation, and choosing $V>0$ sufficiently small, we show that the last successful packet maximizes $C_i(t)$ over all successful packets of update $m$, thereby obtaining~\eqref{C_i(t) ordering}. Detailed derivations are provided in Appendix~\ref{Proof of lemma Ci-ordering}.
\end{proof}

% \begin{equation}\label{C ineq for each update}
% \begin{aligned}
%     &\frac{\sum_{t=t[m-1]+1}^{t[m]}C_{i}(t)\mathbb{I}_{L_{i}(t)=1}\mathbb{I}_{u_{i}(t)=1}\mathbb{I}_{c_{i}(t)=1}}
% {\sum_{t=t[m-1]+1}^{t[m]}\mathbb{I}_{L_{i}(t)=1}\mathbb{I}_{u_{i}(t)=1}\mathbb{I}_{c_{i}(t)=1}}
% \\&\;\ge\;
% \frac{\sum_{t=t[m-1]+1}^{t[m]}C_{i}(t)\mathbb{I}_{u_{i}(t)=1}\mathbb{I}_{c_{i}(t)=1}}
% {\sum_{t=t[m-1]+1}^{t[m]}\mathbb{I}_{u_{i}(t)=1}\mathbb{I}_{c_{i}(t)=1}}.
% \end{aligned}
% \end{equation}

\vspace{0.5ex}\noindent\textbf{Step 4: An inequality on the conditional expectation of $C_i(t)$.}
We now use Lemma~\ref{lem:Ci-ordering} to derive an inequality for the 
conditional expectations of $C_i(t)$, which will be instrumental for the 
subsequent drift analysis.
\begin{lemma}\label{lem:C-last-avg}
For each source $i$, the conditional expected value of $C_i(t)$ at the
slots where the \emph{last} packet of an update is delivered (i.e.,
$L_i(t)=1$ and $u_i(t)=1$) is no smaller than the conditional expected
value of $C_i(t)$ over all slots in which source $i$ is scheduled (i.e.,
$u_i(t)=1$), namely
\begin{equation}\label{C ineq for infty T}
    \mathbb{E}\bigl[C_{i}(t)\mid L_{i}(t)=1,u_{i}(t)=1\bigr]
    \;\ge\;
    \mathbb{E}\bigl[C_{i}(t)\mid u_{i}(t)=1\bigr].
\end{equation}
\end{lemma}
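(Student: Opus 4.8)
The plan is to turn both conditional expectations into long‑run time averages along a typical sample path of the Max‑Weight chain and then to exploit a pathwise ordering of the weights $C_i(\cdot)$ within renewal cycles. Fix source $i$ and take the cycles to be the blocks of slots between consecutive last‑packet deliveries $t_i[m-1]$ and $t_i[m]$. Writing $\mathcal{T}_i[m]:=\{t\in(t_i[m-1],t_i[m]]:u_i(t)=1\}$ for the scheduling slots of source $i$ in cycle $m$, $N_1^{(m)}$ for the number of last‑packet attempts in the cycle (slots in $\mathcal{T}_i[m]$ with $L_i(t)=1$), and $N_{>1}^{(m)}:=|\mathcal{T}_i[m]|-N_1^{(m)}$, the renewal–reward theorem gives $\mathbb{E}[C_i(t)\mid u_i(t)=1]=\mathbb{E}[\sum_{t\in\mathcal{T}_i[m]}C_i(t)]/\mathbb{E}[|\mathcal{T}_i[m]|]$ and $\mathbb{E}[C_i(t)\mid L_i(t)=1,u_i(t)=1]=\mathbb{E}[\sum_{t\in\mathcal{T}_i[m],L_i(t)=1}C_i(t)]/\mathbb{E}[N_1^{(m)}]$, and a short computation shows that~\eqref{C ineq for infty T} is equivalent to $\mathbb{E}[\sum_{t\in\mathcal{T}_i[m],L_i(t)=1}C_i(t)]\cdot\mathbb{E}[N_{>1}^{(m)}]\ \ge\ \mathbb{E}[N_1^{(m)}]\cdot\mathbb{E}[\sum_{t\in\mathcal{T}_i[m],L_i(t)>1}C_i(t)]$.

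Next I would prove a refined version of Lemma~\ref{lem:Ci-ordering}: for $V$ small enough, within every cycle each weight at an $L_i(t)=1$ scheduling slot is at least each weight at an $L_i(t)>1$ scheduling slot. Along a fixed packet‑position block ($L_i(t)$ equal to a constant in $\{2,\dots,L_i\}$) the weight $C_i(t)$ is monotone, since $z_i(t)+L_i(t)$ and $h_i(t)$ are monotone along such a block by \eqref{Levolves}--\eqref{hevolves} while $x_i^+(t)$ moves by a bounded amount that small $V$ absorbs; hence that block attains its maximum at its terminal (successful‑delivery) slot. Applying Lemma~\ref{lem:Ci-ordering} to the counterfactual sample path in which the channel succeeds at the \emph{first} $L_i(t)=1$ scheduling slot $s'$ of the cycle — legitimate because the channel process is i.i.d.\ and independent of the state, so altering $c_i(s')$ leaves everything up to and including $s'$ unchanged — gives $C_i(s')\ge C_i(t)$ at every successful‑delivery slot $t$ of the cycle, and therefore, by the block monotonicity, at all $t\in\mathcal{T}_i[m]$ with $L_i(t)>1$; monotonicity of $C_i$ along the last‑packet block (using $\gamma_i\ge\beta_i$ for the chosen constants) makes $s'$ the minimizer of $C_i$ over the $L_i(t)=1$ slots. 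Thus, within a cycle, the $N_1^{(m)}$ values $\{C_i(t):L_i(t)=1\}$ are exactly the $N_1^{(m)}$ largest among $\{C_i(t):t\in\mathcal{T}_i[m]\}$, so their average is at least the cycle‑wide average.

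It then remains to combine cycles. Using \eqref{throughput when selected} (equivalently $\mathbb{E}[N_{>1}^{(m)}]=(L_i-1)\,\mathbb{E}[N_1^{(m)}]$), the displayed renewal‑level inequality reduces to showing, for each packet position $j\in\{1,\dots,L_i-1\}$, that the expected $C_i$‑mass accumulated over the $j$‑th packet‑position block of a cycle is no larger than the expected $C_i$‑mass accumulated over the last‑packet block. For a fixed $j$ one compares the two blocks directly: their attempt counts are identically distributed geometric$(p_i)$ random variables, the last‑packet attempt count is independent of the network state at the onset of the last‑packet phase (it is driven only by subsequent channel outcomes), and every weight in block $j$ is dominated by every weight in the last‑packet block by the ordering just established; feeding these facts into the two renewal–reward ratios yields the blockwise domination and hence~\eqref{C ineq for infty T}.

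The step I expect to be the main obstacle is exactly this cross‑cycle aggregation. Per‑cycle inequalities between ratios of the form $\frac{1}{N_1^{(m)}}\sum(\cdot)$ and $\frac{1}{N_{>1}^{(m)}}\sum(\cdot)$ do \emph{not} automatically aggregate to the corresponding inequality between ratios of expectations, and the crude bound ``every earlier‑block weight $\le$ the smallest last‑packet weight in the cycle'' is too lossy, because that smallest weight grows with how long the earlier blocks lasted, which simultaneously inflates the count it multiplies. Closing this gap requires genuinely invoking the independence of the geometric last‑packet attempt count from the pre‑last‑packet state, together with a careful slot‑by‑slot comparison of $C_i$ at intermediate attempts against $C_i$ at the downstream last‑packet attempts. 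The other ingredients — monotonicity of $C_i$ inside a block, the ``top‑$N_1$ average'' observation, and the moment identity $\mathbb{E}[N_{>1}^{(m)}]=(L_i-1)\mathbb{E}[N_1^{(m)}]$ — are routine.
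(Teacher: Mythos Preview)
Your proposal is heading in the right direction but makes life much harder than necessary, and the obstacle you flag is real and not actually closed in your sketch. The paper's proof avoids it completely with one idea you are missing: instead of averaging over \emph{scheduled} slots ($u_i(t)=1$), first average over \emph{successful delivery} slots ($u_i(t)=1,\,c_i(t)=1$). On those slots the counts are deterministic---each update contributes exactly $L_i$ successful deliveries and exactly one of them has $L_i(t)=1$---so Lemma~\ref{lem:Ci-ordering} gives the per-update inequality
\[
C_i(t_i[m]) \;\ge\; \frac{1}{L_i}\sum_{t\in(t_i[m-1],t_i[m]]:\,u_i(t)c_i(t)=1} C_i(t),
\]
which, because both denominators are constants ($1$ and $L_i$), sums \emph{additively} across updates with no aggregation issue at all. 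Passing to time averages and invoking ergodicity yields
\[
\mathbb{E}\bigl[C_i(t)\mid L_i(t)=1,\,u_i(t)=1,\,c_i(t)=1\bigr]\;\ge\;\mathbb{E}\bigl[C_i(t)\mid u_i(t)=1,\,c_i(t)=1\bigr].
\]
Now the channel indicator $c_i(t)$ is independent of $(C_i(t),L_i(t),u_i(t))$, so the extra conditioning on $c_i(t)=1$ drops out on both sides, giving~\eqref{C ineq for infty T} directly.

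By contrast, your route conditions only on $u_i(t)=1$, which forces you to carry the random geometric counts $N_1^{(m)}$ and $N_{>1}^{(m)}$. That is precisely why you run into the ``ratios of expectations versus expectations of ratios'' problem; your proposed fix (blockwise comparison plus independence of the last-packet attempt count) is plausible but not carried out, and the refined pathwise ordering you need---every $L_i(t)=1$ scheduling attempt dominating every $L_i(t)>1$ scheduling attempt, including failed ones---is strictly stronger than what Lemma~\ref{lem:Ci-ordering} provides and would itself need a separate argument. None of this is needed once you insert $c_i(t)=1$ first and remove it last.
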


\begin{proof}
Using Lemma~\ref{lem:Ci-ordering} and comparing the value of $C_i(t[m])$ with the average of $C_i(t)$ over all packet-delivery slots of the same update yields a per-update inequality. Summing this inequality over all updates, normalizing over time, and letting $T\to\infty$, assuming the ergodicity\footnote{
Under the Max–Weight scheduling policy, the joint network state
$\mathbb{S}(t)\triangleq\{h_i(t),\,z_i(t),\,L_i(t)\}_{i=1}^{N}$ forms a
discrete-time Markov chain that is \emph{irreducible} and \emph{aperiodic}.
If the Max-Weight policy stabilizes the network, then $h_i(t)$, $z_i(t)$, and
$L_i(t)$ remain bounded almost surely, thus the chain is \emph{positive
recurrent}.  Irreducibility, aperiodicity, and positive recurrence together
ensure the existence of a unique stationary distribution, and hence
$\mathbb{S}(t)$ is ergodic.} of $\mathbb{S}(t)$ turns time averages into steady-state expectations. Finally, using the relation $L_i\,\mathbb{E}[\mathbb{I}_{L_i(t)=1}\mathbb{I}_{u_i(t)=1}]  = \mathbb{E}[\mathbb{I}_{u_i(t)=1}]$, we obtain~\eqref{C ineq for infty T}. Detailed derivations are given in Appendix~\ref{Proof of lemma C-last-avg}.
\end{proof}

\vspace{0.5ex}\noindent\textbf{Step 5: Bounding the drift and concluding the EWSAoI bound.}
Substitute~\eqref{throughput when selected},~\eqref{cancel u_i(t=1)},
% ,~\eqref{cancel L_i(t=1)} 
and ~\eqref{C ineq for infty T} into~\eqref{LHS'1} and~\eqref{LHS'2}, we obtain
% \begin{equation}\label{LHS''1}
% \begin{array}{ll}
% &LHS'_1+LHS'_2+LHS_3' \\& \stackrel{(a)}{\geq}\lim_{T\rightarrow\infty}\frac{1}{T}\sum_{t=1}^{T}\sum_{i=1}^{N}\frac{q^{MW}_i\beta_{i}}{L_i}\mathbb{E}\left[h^2_{i}(t)-2h_{i}(t)z_{i}(t)|L_i(t)=1\right]\\& +\lim_{T\rightarrow\infty}\frac{1}{T}\sum_{t=1}^{T}\sum_{i=1}^{N}\frac{q_{i}^{MW}\gamma_{i}}{L_{i}}\mathbb{E}\left[\left(z_{i}(t)+2\right)^{2}-\left(L_{i}+1\right)^{2}|L_{i}(t)=1\right]\\& \stackrel{(b)}{\geq}\lim_{T\rightarrow\infty}\frac{1}{T}\sum_{t=1}^{T}\sum_{i=1}^{N}\frac{q^{MW}_i\beta_{i}}{L_i}\mathbb{E}\left[\left(h_{i}(t)-2z_{i}(t)\right)^2|L_i(t)=1\right]\\& +\lim_{T\rightarrow\infty}\frac{1}{T}\sum_{t=1}^{T}\sum_{i=1}^{N}\frac{q_{i}^{MW}\gamma_{i}}{L_{i}}\mathbb{E}\left[\left(z_{i}(t)+2\right)^{2}-\left(L_{i}+1\right)^{2}|L_{i}(t)=1\right]\\
%  & \stackrel{(c)}{\geq}\lim_{T\rightarrow\infty}\frac{1}{T}\sum_{t=1}^{T}\sum_{i=1}^{N}\frac{q^{MW}_i\beta_{i}}{L_i}\left[\mathbb{E}\left[h_{i}(t)|L_i(t)=1\right]-2\mathbb{E}\left[z_{i}(t)|L_i(t)=1\right]\right]^2\\& +\lim_{T\rightarrow\infty}\frac{1}{T}\sum_{t=1}^{T}\sum_{i=1}^{N}\frac{q_{i}^{MW}\gamma_{i}}{L_{i}}\left(\mathbb{E}\left[z_{i}(t)+2|L_{i}(t)=1\right]^{2}-\left(L_{i}+1\right)^{2}\right)\\
%  & \stackrel{(d)}{\geq}\lim_{T\rightarrow\infty}\frac{1}{T}\sum_{t=1}^{T}\sum_{i=1}^{N}\frac{q^{MW}_i\beta_{i}}{L_i}\left[\mathbb{E}\left[h_{i}(t)\right]-2\frac{ L_i}{\bar{q}_i}\right]^2
% \end{array}
% \end{equation}
\begin{equation}\label{LHS''1}
\begin{aligned}
& LHS'_1 + LHS'_2 + LHS'_3 \\
& \stackrel{(a)}{\geq}
  \lim_{T\to\infty}\!\frac{1}{T}\!\sum_{t=1}^{T}\sum_{i=1}^{N}
  \frac{q^{MW}_i\beta_i}{L_i}\,
  \mathbb{E}\!\left[\,h_i^{2}(t)\!-\!2h_i(t)z_i(t)\right] \!+ \!\!
  \lim_{T\to\infty}\!\frac{1}{T}\!\sum_{t=1}^{T}\sum_{i=1}^{N}
  \frac{q^{MW}_i\gamma_i}{L_i}\!
  \mathbb{E}\!\left[(z_i(t)+2)^{2}\!\!-\!(L_i+1)^{2}\!\right] \\
& \stackrel{(b)}{\geq}
  \lim_{T\to\infty}\frac{1}{T}\sum_{t=1}^{T}\sum_{i=1}^{N}
  \frac{q^{MW}_i\beta_i}{L_i}\,
  \mathbb{E}\!\left[\,(h_i(t)-2z_i(t))^{2}\right] \!+\!\! 
  \lim_{T\to\infty}\frac{1}{T}\sum_{t=1}^{T}\sum_{i=1}^{N}
  \frac{q^{MW}_i\gamma_i}{L_i}\!
  \mathbb{E}\!\left[\!(z_i(t)+2)^{2}\!\!-\!(L_i+1)^{2}\!\right] \\
& \stackrel{(c)}{\geq}\!
  \lim_{T\to\infty}\!\frac{1}{T}\!\sum_{t=1}^{T}\sum_{i=1}^{N}
  \frac{q^{MW}_i\beta_i}{L_i}\!\Bigl[\mathbb{E}\!\left[h_i(t)\right]
  \!-\! 2\mathbb{E}\!\left[z_i(t)\right]\Bigr]^{2}  \!+\!\! 
  \lim_{T\to\infty}\!\frac{1}{T}\!\sum_{t=1}^{T}\sum_{i=1}^{N}
  \frac{q^{MW}_i\gamma_i}{L_i}
  \Bigl(\mathbb{E}\!\left[z_i(t)+2\right]^{2}
  \!\!-\! (L_i+1)^{2}\!\Bigr) \\
& \stackrel{(d)}{\geq}
  \lim_{T\to\infty}\frac{1}{T}\sum_{t=1}^{T}\sum_{i=1}^{N}
  \frac{q^{MW}_i\beta_i}{L_i}\,
  \Bigl[\mathbb{E}\!\left[\,h_i(t)\right]
  - 2\,\frac{L_i}{\bar{q}_i}\Bigr]^{2}+\lim_{T\to\infty}\!\frac{1}{T}\!\sum_{t=1}^{T}\sum_{i=1}^{N}
  \frac{q^{MW}_i\gamma_i}{L_i}
  \Bigl(\frac{-L_i^2+8L_i+24}{2}\!\Bigr)
\end{aligned}
\end{equation}
% \begin{equation}\label{LHS''1}
% \begin{array}{ll}
% &LHS'_1+LHS'_2+LHS_3' \\& \stackrel{(a)}{\geq}\lim_{T\rightarrow\infty}\frac{1}{T}\sum_{t=1}^{T}\sum_{i=1}^{N}\frac{q^{MW}_i\beta_{i}}{L_i}\mathbb{E}\left[h^2_{i}(t)-2h_{i}(t)z_{i}(t)\right]\\& +\lim_{T\rightarrow\infty}\frac{1}{T}\sum_{t=1}^{T}\sum_{i=1}^{N}\frac{q_{i}^{MW}\gamma_{i}}{L_{i}}\mathbb{E}\left[\left(z_{i}(t)+2\right)^{2}-\left(L_{i}+1\right)^{2}|L_{i}(t)=1\right]\\& \stackrel{(b)}{\geq}\lim_{T\rightarrow\infty}\frac{1}{T}\sum_{t=1}^{T}\sum_{i=1}^{N}\frac{q^{MW}_i\beta_{i}}{L_i}\mathbb{E}\left[\left(h_{i}(t)-2z_{i}(t)\right)^2\right]\\& +\lim_{T\rightarrow\infty}\frac{1}{T}\sum_{t=1}^{T}\sum_{i=1}^{N}\frac{q_{i}^{MW}\gamma_{i}}{L_{i}}\mathbb{E}\left[\left(z_{i}(t)+2\right)^{2}-\left(L_{i}+1\right)^{2}|L_{i}(t)=1\right]\\
%  & \stackrel{(c)}{\geq}\lim_{T\rightarrow\infty}\frac{1}{T}\sum_{t=1}^{T}\sum_{i=1}^{N}\frac{q^{MW}_i\beta_{i}}{L_i}\left[\mathbb{E}\left[h_{i}(t)\right]-2\mathbb{E}\left[z_{i}(t)\right]\right]^2\\& +\lim_{T\rightarrow\infty}\frac{1}{T}\sum_{t=1}^{T}\sum_{i=1}^{N}\frac{q_{i}^{MW}\gamma_{i}}{L_{i}}\left(\mathbb{E}\left[z_{i}(t)+2|L_{i}(t)=1\right]^{2}-\left(L_{i}+1\right)^{2}\right)\\
%  & \stackrel{(d)}{\geq}\lim_{T\rightarrow\infty}\frac{1}{T}\sum_{t=1}^{T}\sum_{i=1}^{N}\frac{q^{MW}_i\beta_{i}}{L_i}\left[\mathbb{E}\left[h_{i}(t)\right]-2\mathbb{E}\left[z_{i}(t)\right]\right]^2
% \end{array}
% \end{equation}
where  $(a)$ from~\eqref{throughput when selected},~\eqref{cancel u_i(t=1)}
% ,~\eqref{cancel L_i(t=1)} 
,~\eqref{C ineq for infty T} and $LHS'_3>0$ since where $x^+_i(t)\geq 0 $ and by~\eqref{q^{MW}_i > q_i^{L_B}}; 
$(b)$ from $h_i(t)>h_i(t)-2z_i(t)$;
$(c)$ by applying Jensen Inequality $\mathbb{E}\left[\left(\bar{h}_{i}(t)-2z_{i}(t)\right)^2\right]\geq\left[\mathbb{E}\left[h_{i}(t)\right]-2\mathbb{E}\left[z_{i}(t)\right]\right]^2$, where $\bar{h}_i(t)=\max(h_i(t),2z_i(t))$ and $\mathbb{E}\left[\bar{h}_{i}(t)\right]\geq\mathbb{E}\left[h_{i}(t)\right]$;
% and $cov(h_i(t),z_i(t))>0$ as $z_i(t)$ is the upper bounded by $h_i(t)$; 
and $(d)$ from
\begin{equation}
    \frac{L_i+1}{2}\leq\mathbb{E}\left[z_{i}(t)\right]\leq\frac{ L_i}{\bar{q}_i}
\end{equation}
% \begin{equation}
%     \mathbb{E}\left[z_{i}(t)|L_i(t)=L\right]\leq\mathbb{E}\left[z_{i}(t)|d_i(t)=1,L_i(t)=L\right]=\frac{L_i}{q_i^{MW}}
% \end{equation}

Since $\mathbb{I}_{L_i(t)=L}\leq1,\mathbb{I}_{1<L_i(t)<L_i}\leq1$, we obtain
\begin{equation}\label{RHS''1}
RHS_1'\leq\lim_{T\rightarrow\infty}\frac{1}{TN}\sum_{t=1}^{T}\sum_{i=1}^{N}\beta_i
\mathbb{E}\bigl[2h_{i}(t)-1\bigr].
 \end{equation}
\begin{equation}\label{RHS''2}
RHS'_2\leq\lim_{T\rightarrow\infty}\frac{1}{TN}\sum_{t=1}^{T}\sum_{i=1}^{N}\gamma_{i}\mathbb{E}[2z_{i}(t)+2L_{i}(t)-1]
\end{equation}
Substituting~\eqref{LHS''1}, ~\eqref{RHS''1}, and~\eqref{RHS''2} into $LHS'_1+LHS'_2+LHS'_3\leq RHS'_1 + RHS'_2 + RHS'_3$ yields
\begin{equation}\label{simplified ineq}
\begin{aligned}
        &\lim_{T\to\infty}\frac{1}{T}\sum_{t=1}^{T}\sum_{i=1}^{N}
  \frac{q^{MW}_i\beta_i}{L_i}\,
  \Bigl[\mathbb{E}\!\left[\,h_i(t)\right]
  - 2\,\frac{L_i}{\bar{q}_i}\Bigr]^{2}+\sum_{i=1}^{N}
  \frac{q^{MW}_i\gamma_i}{L_i}
  \Bigl(\frac{-L_i^2+8L_i+24}{2}\!\Bigr)\\ &\leq \lim_{T\to\infty}\frac{1}{T}\sum_{t=1}^{T}\sum_{i=1}^{N}\beta_{i}\mathbb{E}\left[2h_{i}(t)-1\right]+\frac{NV}{2}+\lim_{T\to\infty}\frac{1}{T}\sum_{t=1}^{T}\sum_{i=1}^{N}\gamma_{i}\mathbb{E}\left[2z_{i}(t)+2L_{i}(t)-1\right]
\end{aligned}
\end{equation}

Substituting $\beta_i=\alpha_iL_i/\bar{q}_i,\gamma_i=\alpha_iL_i/\bar{q}_i\sqrt{p_i}$ and considering $\sigma\rightarrow0$ yields
\begin{equation}\label{substituted simplified ineq}
\begin{aligned}
        &\lim_{T\to\infty}\frac{1}{T}\sum_{t=1}^{T}\sum_{i=1}^{N}
  \alpha_i\,
  \Bigl[\mathbb{E}\!\left[\,h_i(t)\right]
  - 2\,\frac{L_i}{\bar{q}_i}\Bigr]^{2}+\sum_{i=1}^{N}
  \frac{q^{MW}_i\gamma_i}{L_i}
  \Bigl(\frac{-L_i^2+8L_i+24}{2}\!\Bigr)\\ &\leq \lim_{T\to\infty}\frac{1}{T}\sum_{t=1}^{T}\sum_{i=1}^{N}\frac{\alpha_iL_i}{\bar{q}_i}\mathbb{E}\left[2h_{i}(t)-1\right]+\frac{NV}{2}+\lim_{T\to\infty}\frac{1}{T}\sum_{t=1}^{T}\sum_{i=1}^{N}\frac{\alpha_iL_i}{\bar{q}_i\sqrt{p_i}}\mathbb{E}\left[2z_{i}(t)+2L_{i}(t)-1\right]
\end{aligned}
% \begin{aligned}
%         \sum_{i=1}^{N}&\alpha_i\left[\mathbb{E}\left[h_{i}(t)\right]-2\frac{ L_i}{\bar{q}_i}\right]^2\\ &\leq \sum_{i=1}^{N}\frac{\alpha_iL_i}{\bar{q}_i}\mathbb{E}\left[2h_{i}(t)-1\right]\\ &+\sum_{i=1}^{N}\frac{\alpha_iL_i}{\bar{q}_i\sqrt{p_i}}\mathbb{E}\left[2z_{i}(t)+2L_{i}(t)-1\right]+\frac{NV}{2}
% \end{aligned}
\end{equation}

% \begin{equation}\label{substituted simplified ineq}
% \begin{aligned}
%         &\sum_{i=1}^{N}\alpha_i\left[\left[\mathbb{E}\left[h_{i}(t)\right]-2\mathbb{E}\left[z_{i}(t)\right]\right]^2-\frac{2L_i}{\bar{q}_i}\mathbb{E}\left[h_{i}(t)\right]\right]\\ &\leq \sum_{i=1}^{N}\frac{\alpha_iL_i}{\bar{q}_i\sqrt{p_i}}\mathbb{E}\left[2z_{i}(t)+2L_{i}(t)-1\right]+\frac{NV}{2}
% \end{aligned}
% \end{equation}

Consider the Cauchy-Schwarz inequality
\begin{equation} \label{theorem 7 cathy}
\begin{aligned}
\left(\sum_{i=1}^{N}\alpha_i\left[\mathbb{E}\left[h_{i}(t)\right]-\frac{3L_{i}}{\bar{q}_i}\right]^{2}\right)\left(\sum_{i=1}^{N}\alpha_i\right)\quad\geq\left|\sum_{i=1}^{N}\alpha_i\left[\mathbb{E}\left[h_{i}(t)\right]-\frac{3L_{i}}{\bar{q}_i}\right]\right|^{2}
\end{aligned}
\end{equation}
Applying~\eqref{theorem 7 cathy} and $\mathbb{E}\left[z_{i}(t)\right]\leq L_i/\bar{q}_i$ into~\eqref{substituted simplified ineq} yields
\begin{equation} \label{theorem 7 square}
\begin{aligned}
&\left(\sum_{i=1}^{N}\alpha_i\left[\mathbb{E}\left[h_{i}(t)\right]-\frac{3L_{i}}{\bar{q}_i}\right]^{2}\right)\!\left(\sum_{i=1}^{N}\alpha_i\!\right)\!\leq\!\left(\!\sum_{i=1}^{N}\alpha_i\!\right)\!\Biggl(\sum_{i=1}^{N}\frac{\alpha_i}{\sqrt{p_i}}\Bigl[5\frac{L^2_{i}\sqrt{p_i}}{\bar{q}^2_i}\!-\!\frac{L_{i}}{\bar{q}_{i}}\!-\!\frac{L_{i}\sqrt{p_i}}{\bar{q}_{i}}\!+\!2\frac{L^2_i}{\bar{q}^2_i}\!+\!2\frac{L_i^2}{\bar{q}_i}\!+\!\frac{-L_i^2+8L_i+24}{2}\Bigl]\Biggl)
\end{aligned}
\end{equation}
Thus, we obtain the upper bound of achieved by Max-Weight policy, which is given by
\begin{equation}\label{EJMW conclusion}
\mathbb{E}\left[J^{MW}\right]\leq\frac{1}{N}\sum_{i=1}^{N}\frac{3\alpha_iL_{i}}{\bar{q}_i}+\frac{1}{N}\sqrt{\varPsi}
\end{equation}
where
\begin{equation}
\begin{aligned}
&\varPsi=\left(\sum_{i=1}^{N}\alpha_i\right)\Biggl(\sum_{i=1}^{N}\frac{\alpha_i}{\sqrt{p_i}}\Bigl[5\frac{L^2_{i}\sqrt{p_i}}{\bar{q}^2_i}-\frac{L_{i}}{\bar{q}_{i}}-\frac{L_{i}\sqrt{p_i}}{\bar{q}_{i}}+2\frac{L^2_i}{\bar{q}^2_i}+2\frac{L_i^2}{\bar{q}_i}+\frac{-L_i^2+8L_i+24}{2}\Bigl]\Biggl)
\end{aligned}
\end{equation}

Comparing~\eqref{EJMW conclusion} with~\eqref{performancelowerbound} yields
\begin{equation}
\rho=6+\frac{\sqrt{\varPsi}}{NL_B}
\end{equation}
where $\rho$ is the optimal ratio of Max-Weight policy.
% \end{proof}

\section{Proof of Lemma~\ref{mw lemma}}
\label{Proof of Lemma mw}
Recall that the Max-Weight policy minimizes the RHS of~\eqref{lyapunov drift upper bound1} by selecting $i = arg \max{p_iC_i(t)}$ in every slot $t$.
Hence, any other policy $\pi \in \Pi$ yields a higher (or equal) RHS. Consider a  SRP $s\in \Pi^s_r$ that, in each slot $t$, selects node $i$ with probability $\mu^s_i \in (0, 1]$. The scheduling decision of policy $s$ is independent of the network state $\mathbb{S}(t)$, and thus
\begin{equation}\label{muupperboundMW}
\begin{aligned}
\sum_{i=1}^N p_i\mathbb{E}\left[u_i(t)|\mathbb{S}(t)\right]C_i(t)\geq  \sum_{i=1}^N p_i\mu_i^sC_i(t) 
\end{aligned}
\end{equation}
where $u^S_i(t)$ denote the scheduling decision made by optimal SRP $S$.
Substituting~\eqref{muupperboundMW} into the equation of the Lyapunov Drift gives
\begin{equation}\label{substitued expected drift}
\begin{gathered} \Delta\left(\mathbb{S}(t)\right)+\sum_{i=1}^{N}p_{i}\mu^s_{i}\mathbb{I}_{L_i(t)=1}\left[\beta_{i}\left(h^2_{i}(t)-2h_{i}(t)z_{i}(t)\right)+\gamma_{i}\left(\left(z_{i}(t)+2\right)^{2}-\left(L_{i}+1\right)^{2}\right)\right]+\sum_{i=1}^{N}V\left(p_{i}\mu^s_{i}-q_{i}\right)x_{i}^{+}(t)\\
\leq\sum_{i=1}^{N}\beta_{i}\mathbb{I}_{L_i(t)=L}\left[2h_{i}(t)-1\right]+\sum_{i=1}^{N}\gamma_{i}\mathbb{I}_{1<L_i(t)<L_i}\left[2z_{i}(t)+2L_{i}(t)-1\right]+\frac{NV}{2}
\end{gathered}
\end{equation}

For simplicity of exposition, we divide inequality~\eqref{substitued expected drift} into six terms $\Delta\left(\mathbb{S}(t)\right)+LHS_1+LHS_2+LHS_3\leq RHS_1 + RHS_2 + RHS_3$. Taking expectation of~\eqref{substitued expected drift} with respect to $\mathbb{S}(t)$, summing over $t \in
\{1, 2,\ldots, T\}$ taking the limit as $T\rightarrow\infty$, and then dividing by $TN$, gives

\begin{equation}
    \lim_{T\rightarrow\infty}\frac{1}{TN}\sum_{t=1}^{T}\mathbb{E}\left[\Delta\left(\mathbb{S}(t)\right)\right]=\frac{\mathbb{E}\left[\mathcal{L}(T)\right]}{TN}=0
\end{equation}
\begin{equation}\label{LHS1}\begin{aligned}LHS_1=&\lim_{T\rightarrow\infty}\frac{1}{TN}\sum_{t=1}^{T}\sum_{i=1}^{N}p_{i}\mu^s_{i}\beta_{i}\mathbb{E}[\mathbb{I}_{L_i(t)=1}]\mathbb{E}\left[h^2_{i}(t)-2h_{i}(t)z_{i}(t) | L_i(t)=1\right]
\end{aligned}
\end{equation}
\begin{equation}\label{LHS2}\begin{aligned}
LHS_2=&\lim_{T\rightarrow\infty}\frac{1}{TN}\sum_{t=1}^{T}\sum_{i=1}^{N}p_{i}\mu^s_{i}\gamma_{i}\mathbb{E}[\mathbb{I}_{L_i(t)=1}]\mathbb{E}\left[\left(z_{i}(t)+2\right)^{2}-\left(L_{i}+1\right)^{2}| L_i(t)=1\right]\end{aligned}
\end{equation}
\begin{equation}\label{LHS3}
LHS_3=    \lim_{T\rightarrow\infty}\frac{1}{TN}\sum_{t=1}^{T}\sum_{i=1}^{N}V\left(p_{i}\mu^s_{i}-q_{i}\right)\mathbb{E}[x_{i}^{+}(t)]
\end{equation}
\begin{equation}\label{RHS1}
\begin{aligned}
    RHS_1=&\lim_{T\rightarrow\infty}\frac{1}{TN}\sum_{t=1}^{T}\sum_{i=1}^{N}\beta_{i}(1-p_i\mu^s_i)\mathbb{E}[\mathbb{I}_{L_i(t)=L}]\mathbb{E}\left[2h_{i}(t)-1| L_i(t)=L\right]
\end{aligned}
\end{equation}
\begin{equation}\label{RHS2}\begin{aligned}
    RHS_2=&\lim_{T\rightarrow\infty}\frac{1}{TN}\sum_{t=1}^{T}\sum_{i=1}^{N}\gamma_{i}(1-p_i\mu^s_i)\mathbb{E}[\mathbb{I}_{1<L_i(t)<L_i}]\mathbb{E}\left[2z_{i}(t)+2L_{i}(t)-1| L_i(t)>1\right]
\end{aligned}
\end{equation}
\begin{equation}\label{RHS3}
RHS_3=   \frac{NV}{2}
\end{equation}
Since $LHS_1>-\infty$, $LHS_2>0, RHS_1,RHS_2,RHS_3$ are finite, we establish that 
\begin{equation}
    LHS_3<\infty.
\end{equation}
% Since that $LHS_1>-\infty$, $LHS_2, RHS_1,RHS_2,RHS_3$ are positive, we establish that
Further simplifying~\eqref{LHS3} yields
\begin{equation}\label{throughput stable}
    \lim_{T\rightarrow\infty}\frac{1}{T}\sum_{t=1}^T \mathbb{E}[x_{i}^{+}(t)]<\infty,
\end{equation}
which is sufficient to establish that the throughput achieved by Max-Weight policy satisfies that $q_i^{MW}\geq \bar{q}_i, \forall i$.~\cite[Theorem 2.8]{neely2010stability}.

\section{Proof of Lemma~\ref{lem:Ci-ordering}}
\label{Proof of lemma Ci-ordering}
Fix a source $i$ and an update index $m$. Recall that $t'_i[m]$ denotes the
slot of the \emph{first} successful packet of update $m$. For notational
simplicity, we write $t_1 \triangleq t'_i[m]$. Thus,
($u_i(t_1)=1,c_i(t_1)=1,L_i(t_1)=L_i$), and let $t_i[m]$ denote the slot in
which the \emph{last} packet of the same update is delivered
($u_i(t_i[m])=1,c_i(t_i[m])=1,L_i(t_i[m])=1$). By the evolution of AoI~\eqref{hevolves}, we have
\[
h_i\bigl(t_i[m]\bigr)=h_i(t_1)+z_i\bigl(t_i[m]\bigr).
\]

Using \eqref{lyapunov drift upper bound3}, and the fact that at $t_1$ only the
term $\beta_i\mathbb{I}_{L_i(t)=L_i}\left(2h_i(t)-1\right)$ is active while at
$t_i[m]$ only the terms with $L_i(t)=1$ are active, we obtain
\begin{equation}\label{eq:Ci-last-first}
\begin{aligned}
&C_i\bigl(t_i[m]\bigr)-C_i(t_1)
= \beta_i\Bigl(h_i^2\bigl(t_i[m]\bigr)
      -2h_i\bigl(t_i[m]\bigr)z_i\bigl(t_i[m]\bigr)\Bigr)
   \\&\quad+\gamma_i\Bigl(\bigl(z_i\bigl(t_i[m]\bigr)+2\bigr)^2-(L_i+1)^2\Bigr) -\beta_i\bigl(2h_i(t_1)-1\bigr)
   +V\bigl(x_i^+\bigl(t_i[m]\bigr)-x_i^+(t_1)\bigr)\\[0.3em]
&\stackrel{(a)}{=}
   \beta_i\Bigl[(h_i(t_1)-1)^2-z_i^2\bigl(t_i[m]\bigr)\Bigr]+\gamma_i\Bigl(\bigl(z_i\bigl(t_i[m]\bigr)+2\bigr)^2-(L_i+1)^2\Bigr)\\
&\qquad
   +V\bigl(x_i^+\bigl(t_i[m]\bigr)-x_i^+(t_1)\bigr)\\[0.3em]
&\stackrel{(b)}{\ge}
   \beta_i\Bigl[(h_i(t_1)-1)^2-z_i^2\bigl(t_i[m]\bigr)\Bigr]+\beta_i\Bigl(\bigl(z_i\bigl(t_i[m]\bigr)+2\bigr)^2-(L_i+1)^2\Bigr)\\
&\qquad
   +V\bigl(x_i^+\bigl(t_i[m]\bigr)-x_i^+(t_1)\bigr)\\[0.3em]
&\stackrel{(c)}{\ge}
   \beta_i\Bigl[(h_i(t_1)-1)^2\!-\!(L_i-1)^2\Bigr]
   +V\bigl(x_i^+\bigl(t_i[m]\bigr)\!-x_i^+(t_1)\bigr)\\[0.3em]
&\stackrel{(d)}{\ge} 0.
\end{aligned}
\end{equation}
where $(a)$ follows from $h_i\bigl(t_i[m]\bigr)=h_i(t_1)+z_i\bigl(t_i[m]\bigr)$
and the identity $h^2-2hz=(h-1)^2-z^2$; $(b)$ from $\gamma_i\ge\beta_i$
together with $\bigl(z_i\bigl(t_i[m]\bigr)+2\bigr)^2-(L_i+1)^2\ge0$; $(c)$ from
$z_i\bigl(t_i[m]\bigr)\ge L_i-1$, which implies
$\bigl(z_i\bigl(t_i[m]\bigr)+2\bigr)^2-(L_i+1)^2
   \ge z_i^2\bigl(t_i[m]\bigr)-(L_i-1)^2$; and $(d)$ from the lower bound
$h_i(t_1)\ge L_i+1$ (except for the very first update, which can be handled
separately) and the boundedness of $x_i^+(t)$, so that for sufficiently small
$V>0$ the positive term
$\beta_i\bigl[(h_i(t_1)-1)^2-(L_i-1)^2\bigr]$ dominates
$V\bigl(x_i^+\bigl(t_i[m]\bigr)-x_i^+(t_1)\bigr)$.

\medskip
Next we compare the value of $C_i(t)$ at the last packet with the value at the
intermediate packets of the same update. Let $t_2$ denote the slot in which the
$(L_i-1)$-th packet of update $m$ is successfully delivered. Then
$u_i(t_2)=1$, $c_i(t_2)=1$, and $L_i(t_2)=2$. Since $z_i(t)+L_i(t)$ is
non-decreasing during the delivery of update $m$,
as shown in~\eqref{z+L evolution}, the $\gamma_i$–term in $C_i(t)$ with
$1<L_i(t)<L_i$ is maximized among the second to the $(L_i-1)$-th packets at
$t_2$.

We first consider the case $h_i^2(t_1)\ge z_i^2\bigl(t_i[m]\bigr)$. In this
case, starting from \eqref{lyapunov drift upper bound3} we obtain
\begin{equation}\label{eq:Ci-last-middle-case1-compact}
\begin{aligned}
C_i\bigl(t_i[m]\bigr)-C_i(t_2)&= \beta_i\Bigl(h_i^2\bigl(t_i[m]\bigr)
      -2h_i\bigl(t_i[m]\bigr)z_i\bigl(t_i[m]\bigr)\Bigr)+\gamma_i\Bigl(\bigl(z_i\bigl(t_i[m]\bigr)+2\bigr)^2-(L_i+1)^2\Bigr)\\
&\quad -\gamma_i\bigl(2z_i(t_2)+3\bigr)
   +V\bigl(x_i^+\bigl(t_i[m]\bigr)-x_i^+(t_2)\bigr)\\[0.3em]
&\stackrel{(a)}{=}
   \beta_i\Bigl(h_i^2(t_1)-z_i^2\bigl(t_i[m]\bigr)\Bigr)+\gamma_i\Bigl(\bigl(z_i\bigl(t_i[m]\bigr)+2\bigr)^2-(L_i+1)^2\Bigr)\\
&\quad
   -\gamma_i\bigl(2z_i(t_2)+3\bigr)
   +V\bigl(x_i^+\bigl(t_i[m]\bigr)-x_i^+(t_2)\bigr)\\[0.3em]
&\stackrel{(b)}{\ge}
   \gamma_i\Bigl(\bigl(z_i\bigl(t_i[m]\bigr)+2\bigr)^2-(L_i+1)^2\Bigr)-\gamma_i\bigl(2z_i(t_2)+3\bigr)
   +V\bigl(x_i^+\bigl(t_i[m]\bigr)-x_i^+(t_2)\bigr)\\[0.3em]
&\stackrel{(c)}{\ge}
   \gamma_i\Bigl(\bigl(z_i(t_2)+3\bigr)^2-(L_i+1)^2\Bigr)
   -\gamma_i\bigl(2z_i(t_2)+3\bigr)
   +V\bigl(x_i^+\bigl(t_i[m]\bigr)-x_i^+(t_2)\bigr)\\[0.3em]
&\stackrel{(d)}{=}
   \!\gamma_i\Bigl(\!\bigl(z_i(t_2)+2\bigr)^2\!\!\!-\!\!(L_i+1)^2\!\Bigr)
   \!+\!2\gamma_i
   \!+\!V\bigl(x_i^+\bigl(t_i[m]\bigr)\!\!-\!\!x_i^+(t_2)\bigr)\\[0.3em]
&\stackrel{(e)}{\ge} 0.
\end{aligned}
\end{equation}
Here $(a)$ uses $h_i\bigl(t_i[m]\bigr)=h_i(t_1)+z_i\bigl(t_i[m]\bigr)$; $(b)$
uses the assumption $h_i^2(t_1)\ge z_i^2\bigl(t_i[m]\bigr)$ to drop the
non-negative term
$\beta_i\bigl(h_i^2(t_1)-z_i^2\bigl(t_i[m]\bigr)\bigr)$; $(c)$ uses
$z_i\bigl(t_i[m]\bigr)\ge z_i(t_2)+1$ to replace $z_i\bigl(t_i[m]\bigr)$ by the
smaller value $z_i(t_2)+1$ in the positive quadratic term; $(d)$ follows from
the identity
$\bigl(z_i(t_2)+3\bigr)^2-(L_i+1)^2-(2z_i(t_2)+3)
 =\bigl(z_i(t_2)+2\bigr)^2-(L_i+1)^2+2$; and $(e)$ again uses the boundedness
of $x_i^+(t)$ and a sufficiently small $V>0$ so that the positive $2\gamma_i$
term dominates.

We now turn to the case $h_i^2(t_1)<z_i^2\bigl(t_i[m]\bigr)$. Since
$\gamma_i\ge\beta_i$, we can bound the first term in
\eqref{eq:Ci-last-middle-case1-compact} using $\gamma_i$ and obtain
\begin{equation}\label{eq:Ci-last-middle-case2-compact}
\begin{aligned}
C_i\bigl(t_i[m]\bigr)-C_i(t_2)
&\ge
   \gamma_i\Bigl(h_i^2(t_1)-z_i^2\bigl(t_i[m]\bigr)
   +\bigl(z_i\bigl(t_i[m]\bigr)+2\bigr)^2-(L_i+1)^2\Bigr)
   -\gamma_i\bigl(2z_i(t_2)+3\bigr)
   +V\bigl(x_i^+\bigl(t_i[m]\bigr)-x_i^+(t_2)\bigr)\\[0.3em]
&\stackrel{(a)}{\ge}
   \gamma_i\Bigl(h_i^2(t_1)-(z_i(t_2)+1)^2
   +\bigl(z_i(t_2)+3\bigr)^2-(L_i+1)^2\Bigr)
   -\gamma_i\bigl(2z_i(t_2)+3\bigr)
   +V\bigl(x_i^+\bigl(t_i[m]\bigr)-x_i^+(t_2)\bigr)\\[0.3em]
&\stackrel{(b)}{=}
   \!\!\gamma_i\Bigl(h_i^2(t_1)\!\!-\!(L_i+1)^2\!+\!2z_i(t_2)\!+\!5\Bigr)
   \!\!+\!\!V\bigl(x_i^+\bigl(t_i[m]\bigr)\!\!-\!\!x_i^+(t_2)\bigr)\\[0.3em]
&\stackrel{(c)}{\ge} 0.
\end{aligned}
\end{equation}
Here $(a)$ again uses $z_i\bigl(t_i[m]\bigr)\ge z_i(t_2)+1$; $(b)$ comes from
expanding the quadratic terms and using
$\bigl(z_i(t_2)+3\bigr)^2-(z_i(t_2)+1)^2=4z_i(t_2)+8$; and $(c)$ follows from
$h_i(t_1)\ge L_i+1$ and $z_i(t_2)\ge0$, which imply
$h_i^2(t_1)-(L_i+1)^2+2z_i(t_2)+5>0$, together with the boundedness of
$x_i^+(t)$ and a sufficiently small $V>0$.

Combining \eqref{eq:Ci-last-middle-case1-compact},
\eqref{eq:Ci-last-middle-case2-compact}, and \eqref{eq:Ci-last-first} completes the proof.

\section{Proof of Lemma~\ref{lem:C-last-avg}}
\label{Proof of lemma C-last-avg}
By Lemma~\ref{lem:Ci-ordering}, the value of $C_i(t)$ at the last successful
packet of update $m$ is no smaller than at any other successful packet of the
same update. Therefore, comparing $C_i(t_i[m])$ with the average of $C_i(t)$
over all packet delivery slots $t\in[t_i[m-1]+1,t_i[m]]$ yields
\begin{equation}\label{C ineq for each update}
\begin{aligned}
    &\frac{\sum_{t=t[m-1]+1}^{t[m]}C_{i}(t)\,
      \mathbb{I}_{L_{i}(t)=1}\mathbb{I}_{u_{i}(t)=1}\mathbb{I}_{c_{i}(t)=1}}
{\sum_{t=t[m-1]+1}^{t[m]}\mathbb{I}_{L_{i}(t)=1}\mathbb{I}_{u_{i}(t)=1}\mathbb{I}_{c_{i}(t)=1}}\ge
\frac{\sum_{t=t[m-1]+1}^{t[m]}C_{i}(t)\,
\mathbb{I}_{u_{i}(t)=1}\mathbb{I}_{c_{i}(t)=1}}
{\sum_{t=t[m-1]+1}^{t[m]}\mathbb{I}_{u_{i}(t)=1}\mathbb{I}_{c_{i}(t)=1}}.
\end{aligned}
\end{equation}

Next, we pass from the per-update inequality~\eqref{C ineq for each update} to
a time-averaged conditional expectation statement. Observe that each update $m$
comprises $L_i$ packet deliveries, and exactly one time slot corresponds to the
delivery of the last packet. Therefore, over the interval
$t\in[t[m-1]+1,\,t[m]]$ we have
\[
\sum_{t=t[m-1]+1}^{t[m]}
\mathbb{I}_{L_{i}(t)=1}\mathbb{I}_{u_{i}(t)=1}\mathbb{I}_{c_{i}(t)=1}
=1,\quad\forall m,\text{ and }
\sum_{t=t[m-1]+1}^{t[m]}
\mathbb{I}_{u_{i}(t)=1}\mathbb{I}_{c_{i}(t)=1}
=L_{i}, \quad\forall m.
\]
Using~\eqref{C ineq for each update}, this implies
\begin{equation}\label{C ineq for each update simplified}
\begin{aligned}
    \sum_{t=t[m-1]+1}^{t[m]}
    C_{i}(t)\mathbb{I}_{L_{i}(t)=1}\mathbb{I}_{u_{i}(t)=1}\mathbb{I}_{c_{i}(t)=1}
    \geq\frac{1}{L_{i}}\sum_{t=t[m-1]+1}^{t[m]}
    C_{i}(t)\mathbb{I}_{u_{i}(t)=1}\mathbb{I}_{c_{i}(t)=1} \; .
\end{aligned}
\end{equation}
Since~\eqref{C ineq for each update simplified} holds for every interval
$[\,t[m-1]+1,\,t[m]\,]$, summing it over all updates $m$ yields
\begin{equation}\label{C ineq for sum of updates}
\begin{aligned}
\sum_{m=1}^{D_{i}(T)}
\sum_{t=t[m-1]+1}^{t[m]}
C_{i}(t)\mathbb{I}_{L_{i}(t)=1}\mathbb{I}_{u_{i}(t)=1}\mathbb{I}_{c_{i}(t)=1}
\geq\frac{1}{L_{i}}\sum_{m=1}^{D_{i}(T)}\sum_{t=t[m-1]+1}^{t[m]}
C_{i}(t)\mathbb{I}_{u_{i}(t)=1}\mathbb{I}_{c_{i}(t)=1}.
\end{aligned}
\end{equation}
Letting the time horizon go to infinity, $T\rightarrow\infty$, we obtain
\begin{equation}\label{C ineq for infty T before expectation}
\begin{aligned}
\lim_{T\rightarrow\infty}\sum_{t=1}^{T}
C_{i}(t)\mathbb{I}_{L_{i}(t)=1}\mathbb{I}_{u_{i}(t)=1}\mathbb{I}_{c_{i}(t)=1}
\geq\lim_{T\rightarrow\infty}\frac{1}{L_{i}}\sum_{t=1}^{T}
C_{i}(t)\mathbb{I}_{u_{i}(t)=1}\mathbb{I}_{c_{i}(t)=1}.
\end{aligned}
\end{equation}

Assuming the ergodicity
of the network state $\mathbb{S}(t)$, we have
\[
\mathbb{E}\bigl[C_{i}(t)\mathbb{I}_{L_{i}(t)=1}\mathbb{I}_{u_{i}(t)=1}\mathbb{I}_{c_{i}(t)=1}\bigr]
\!\!\geq\!\!
\frac{1}{L_{i}}\mathbb{E}\bigl[C_{i}(t)\mathbb{I}_{u_{i}(t)=1}\mathbb{I}_{c_{i}(t)=1}\bigr].
\]

Moreover, since the instantaneous channel state $c_{i}(t)$ is independent of
$C_{i}(t)$, the update length $L_i(t)$, and the scheduling decision $u_i(t)$,
it follows that
\[
    \mathbb{E}\bigl[C_{i}(t)\mathbb{I}_{L_{i}(t)=1}\mathbb{I}_{u_{i}(t)=1}\bigr]
    \;\geq\;
    \frac{1}{L_{i}}\mathbb{E}\bigl[C_{i}(t)\mathbb{I}_{u_{i}(t)=1}\bigr].
\]
Using
$L_{i}\mathbb{E}[\mathbb{I}_{L_{i}(t)=1}\mathbb{I}_{u_{i}(t)=1}]
=\mathbb{E}[\mathbb{I}_{u_{i}(t)=1}]$, we finally obtain
\[
    \frac{\mathbb{E}[C_{i}(t)\mathbb{I}_{L_{i}(t)=1}\mathbb{I}_{u_{i}(t)=1}]}
         {\mathbb{E}[\mathbb{I}_{L_{i}(t)=1}\mathbb{I}_{u_{i}(t)=1}]}
    \;\geq\;
    \frac{\mathbb{E}[C_{i}(t)\mathbb{I}_{u_{i}(t)=1}]}
         {\mathbb{E}[\mathbb{I}_{u_{i}(t)=1}]},
\]
which is equivalent to~\eqref{C ineq for infty T}, i.e.,
\[
    \mathbb{E}\bigl[C_{i}(t)\mid L_{i}(t)=1,u_{i}(t)=1\bigr]
    \;\geq\;
    \mathbb{E}\bigl[C_{i}(t)\mid u_{i}(t)=1\bigr].
\]

\end{document}